\newtheorem{informaltheorem}{Informal Theorem}
\newtheorem{theorem}{Theorem}[section]
\newtheorem{lemma}[theorem]{Lemma}
\newtheorem{corollary}[theorem]{Corollary}
\newtheorem{claim}{Claim}
\Crefname{claim}{Claim}{Claims}
\Crefname{informaltheorem}{Informal Theorem}{Informal Theorems}
\newtheorem{openproblem}{Open Problem}
\theoremstyle{definition}
\newtheorem{definition}{Definition}[section]
\def\fp/{\textup{\textsf{FP}}}
\def\p/{\textup{\textsf{P}}}
\def\np/{\textup{\textsf{NP}}}
\def\conp/{\textup{\textsf{co-NP}}}
\def\fnp/{\textup{\textsf{FNP}}}
\def\tfnp/{\textup{\textsf{TFNP}}}
\def\ptfnp/{\textup{\textsf{PTFNP}}}
\def\ppa/{\textup{\textsf{PPA}}}
\def\ppad/{\textup{\textsf{PPAD}}}
\def\ppads/{\textup{\textsf{PPADS}}}
\def\ppp/{\textup{\textsf{PPP}}}
\def\pwpp/{\textup{\textsf{PWPP}}}
\def\pls/{\textup{\textsf{PLS}}}
\def\cls/{\textup{\textsf{CLS}}}
\def\ppadpls/{\textup{$\textsf{PPAD} \cap \textsf{PLS}$}}
\def\ppapls/{\textup{$\textsf{PPA} \cap \textsf{PLS}$}}
\def\eopl/{\textup{\textsf{EOPL}}}
\def\sopl/{\textup{\textsf{SOPL}}}
\def\ueopl/{\textup{\textsf{UEOPL}}}
\def\fixp/{\textup{\textsf{FIXP}}}
\def\linearfixp/{\textup{\textsf{Linear-FIXP}}}
\def\pspace/{\textup{\textsf{PSPACE}}}
\newcommand{\ba}{\bm{a}}
\newcommand{\bs}{\bm{s}}
\newcommand{\bu}{\bm{u}}
\newcommand{\bx}{\bm{x}}
\newcommand{\by}{\bm{y}}
\newcommand{\hx}{\widehat{x}}
\newcommand{\hf}{\widehat{f}}
\newcommand{\hh}{\widehat{h}}
\newcommand{\bhx}{\bm{\widehat{x}}}
\DeclareMathOperator{\indicator}{\mathbbm{1}}
\newcommand{\sz}{\mathsf{size}}
\newcommand{\poly}{\mathsf{poly}}
\newcommand{\bR}{\mathbb{R}}
\newcommand{\bZ}{\mathbb{Z}}
\newcommand{\cB}{\mathcal{B}}
\renewcommand{\epsilon}{\varepsilon}
\newcommand{\eps}{\varepsilon}
\title{Computing Equilibrium Points of Electrostatic Potentials}
\author{Abheek Ghosh, Paul W.\ Goldberg, Alexandros Hollender}
\affil{University of Oxford, UK}
\date{}
\begin{document}

\maketitle

\begin{abstract}
We study the computation of {\em equilibrium points} of electrostatic potentials: locations in space where the electrostatic force arising from a collection of charged particles vanishes. This is a novel scenario of optimization in which solutions are guaranteed to exist due to a nonconstructive argument, but gradient descent is unreliable due to the presence of singularities.

We present an algorithm based on piecewise approximation of the potential function by Taylor series. The main insight is to divide the domain into a grid with variable coarseness, where grid cells are exponentially smaller in regions where the function changes rapidly compared to regions where it changes slowly. Our algorithm finds approximate equilibrium points in time poly-logarithmic in the approximation parameter, but these points are not guaranteed to be close to exact solutions. Nevertheless, we show that such points can be computed efficiently under a mild assumption that we call ``strong non-degeneracy''. We complement these algorithmic results by studying a generalization of this problem and showing that it is \cls/-hard and in \ppad/, leaving its precise classification as an intriguing open problem.
\end{abstract}

\section{Introduction}

Suppose that we have a finite set $P$ of electrically charged particles, immobile and located at given points in space. These particles give rise to an electrostatic potential: a function $V$ that maps any point $x$ in space to the amount of energy needed to move a unit positive charge from a point at infinity to $x$. The negated gradient of $V$ constitutes the force acting on a positively charged particle at $x$: it is attracted to negatively charged particles in $P$ and repelled by positively charged ones. Each $p\in P$ exerts a force at $x$ proportional to $r^{-2}$, $r$ being the distance between $x$ and $p$. In this paper, we study the computation of {\em equilibrium points} where the gradient of $V$ is zero, meaning that the forces exerted by members of $P$ cancel each other out. 

For simplicity, let us for now consider the setting where all particles are positively charged.\footnote{Incidentally, the setting where all particles are negatively charged corresponds to the setting of a gravitational field. Furthermore, note that an equilibrium point for a given instance of the problem remains an equilibrium point if the sign of every charged particle is flipped. So, the setting where all particles are positively charged also corresponds to the setting of a gravitational field for the purpose of studying equilibrium points.} If $P$ consists of a single particle, then no equilibrium point exists, as any (mobile) particle would just be pushed away from the single particle. However, if $P$ contains at least two particles, an equilibrium point is always guaranteed to exist! This is usually proved  by using Morse theory.\footnote{Morse theory~\cite[Chapter 32]{morse2014critical}, which is traditionally used to prove existence in this setting, only guarantees existence for \emph{generic} instances, i.e., when the positions of the charged particles lie outside a set of measure zero. Nevertheless, an equilibrium point is guaranteed to exist even without this genericity assumption; we provide a proof sketch in \cref{app:generalized-potentials}.} Importantly, the proof of existence is nonconstructive and thus does not yield an efficient algorithm for finding such an equilibrium point.

More generally, although the three-dimensional setting corresponds to the physical world, this problem can be considered in $d$-dimensional space for any $d \geq 2$, and one can also consider more general potentials, e.g., proportional to $r^{-k}$ for some $k$. The aforementioned nonconstructive existence result continues to hold and in this paper we initiate the study of conditions under which approximate equilibrium points can be computed efficiently.

\paragraph{\bf Contribution.}
We study the numerical computation of equilibrium points of electrostatic potentials: given locations of charged particles, and the sign and magnitude of their charges, we want to find approximate equilibrium points (where the force is within some given $\varepsilon$ of zero). For this problem, we present an algorithm that runs in polynomial time when the dimension $d$ of the space is constant.

\begin{informaltheorem}\label{infthm:weak-approx}
When the dimension $d$ is constant, an $\eps$-approximate equilibrium point of an electrostatic potential can be computed in time polynomial in the input and $\log(1/\eps)$.
\end{informaltheorem}

In particular, our algorithm is efficient in the physical setting where $d=3$ and even when $\eps$ is inverse-exponential.\footnote{The attentive reader might have observed that it is trivial to find a point where the force has norm less than $\eps$: it suffices to pick a point that is sufficiently far away from our set of particles $P$. In order to be able to exclude these artificial solutions, our algorithm also takes as input the description of a bounded domain, and only looks for solutions inside that domain. Furthermore, it is possible to pick a large enough domain that is guaranteed to contain all exact solutions (and thus, in particular, at least one solution).} Furthermore, it can handle the presence of both positive and negative charges.

Our algorithm has a conceptual link with the {\em trust region} approach in optimization \cite{ConnGT00}, in that we divide up the domain into regions within which we approximate the objective function with a more tractable kind of function (a polynomial). The difference is that here, we exploit our knowledge of the structure of the objective function so as to identify regions within which we have a prior guarantee that the approximation achieves the desired precision everywhere in the region. Our algorithm extends to a class of functions with ``well-behaved'' derivatives, where this subdivision of the domain remains efficient.

One important caveat of our algorithm is that it outputs \emph{weak} approximations of equilibrium points, namely points where the force is approximately zero. Such points are not guaranteed to be close to exact equilibrium points (where the force is exactly zero). Ideally, we would also like to be able to compute \emph{strong} approximations of equilibrium points, namely points that are actually close to exact equilibrium points.

\begin{openproblem}
Is there an efficient algorithm that uses algebraic methods in a more direct manner, avoiding the use of numerical approximation?
\end{openproblem}

The hope would be that such an algorithm would be able to output strong approximate solutions, instead of weak solutions. Although we leave this question open, we show that our numerical approach can be used to compute strong approximate solutions under an additional assumption that we call \emph{strong non-degeneracy}. This condition requires that at any equilibrium point, the determinant of the Hessian of the potential is bounded away from zero.

\begin{informaltheorem}\label{infthm:strong-approx}
When the dimension $d$ is constant and the instance is $\delta$-strongly non-degenerate, a point $\eps$-close to an exact equilibrium point can be computed in time polynomial in the input, $\log(1/\eps)$, and $\log(1/\delta)$.
\end{informaltheorem}

Note that we cannot hope to compute equilibrium points exactly, as these might be irrational.\footnote{For example, consider two positive charges with magnitudes 1 and 2, one unit apart, and an inverse-square force power law: the equilibrium point between them is $\sqrt{2}-1$ distant from the charge with value 1.} As with our first algorithm, the algorithm is only efficient when the dimension $d$ is fixed. Although this is sufficient for the physical setting, the electrostatic potential in high dimension is relevant to generative models in machine learning. Perhaps the main question left open by our work is thus the following.

\begin{openproblem}
Is there an efficient algorithm for computing weak (or strong) approximate equilibrium points of electrostatic potentials when the dimension $d$ is not fixed?
\end{openproblem}

As a first step towards understanding the high-dimensional setting, we show that for the general class of ``well-behaved'' functions (for which \cref{infthm:weak-approx,infthm:strong-approx} hold) finding approximate equilibrium points becomes intractable when $d$ is not fixed.

\begin{informaltheorem}
There is no polynomial-time algorithm for finding (weak) approximate equilibrium points of well-behaved functions when $d$ is not fixed, unless $\np/ = \p/$.
\end{informaltheorem}

Finally, going back to constant dimension, we consider potentials that go beyond the class of well-behaved functions (which contain the electrostatic potentials as a special case). Namely, we consider potentials that are continuously differentiable, have a singularity at the location of the charge, and decrease monotonically as we move away from the charge. Here we show that, already in two dimensions, it is intractable to find an approximate equilibrium point of a system with two charges with these very general potential functions.

\begin{informaltheorem}\label{infthm:cls-hard}
There is no polynomial-time algorithm for finding (weak) approximate equilibrium points of two charges in $\mathbb{R}^2$ with general potential functions, unless $\cls/ = \fp/$.
\end{informaltheorem}

\cls/~\cite{DaskalakisP2011-CLS,FGHS22} is a subclass of \tfnp/, the class of computational search problems having solutions that are guaranteed to exist, and once they are found, they are easily verifiable. It is believed that \cls/ is different from \fp/ (the class of search problems solvable in polynomial time), and this is supported by various cryptographic lower bounds~\cite{BitanskyPR15-Nash-crypto,ChoudhuriHKPRR19-Fiat-Shamir,JawaleKKZ21-PPAD-LWE}.

Interestingly, the best upper bound we obtain for the computational problem in \cref{infthm:cls-hard} is membership in the class \ppad/, a superclass of \cls/.

\begin{openproblem}
What is the complexity of computing an approximate equilibrium point of a general potential with singularities? Is it \cls/-complete, \ppad/-complete, or something in between?
\end{openproblem}

Given that $\cls/ = \ppadpls/$~\cite{FGHS22}, a natural first step would be to investigate whether the problem lies in \pls/, i.e., whether it admits a local search algorithm where every individual step can be performed in polynomial time (but the number of steps need not be polynomial).

\subsection{Technical Overview}

In this section, we give a brief overview of the techniques we use to obtain our results.

\subsubsection{Algorithms in constant dimension}

To keep the exposition simple in this overview, we focus on the case where all charges are positive.
Recall that our goal is to compute approximate equilibrium points of an electrostatic potential.

\paragraph{\bf Two charges.}
In the case where there are only two particles, it is easy to compute a solution to any desired accuracy. Indeed, if we consider the segment connecting the two charges, then the force pushes inwards on both ends. As a result, using binary search, one can compute a point that is $\eps$-close to an equilibrium point in time $\log(1/\eps)$.

\medskip

However, for three or more particles, it becomes unclear how to proceed.
We begin by presenting two natural approaches and why they fail (or only work in restricted cases).

\paragraph{\bf Attempt 1: Gradient descent.}
Since we are looking for a point where the gradient of the potential is zero, it is natural to attempt to use gradient descent (or ascent). These gradient methods are normally guaranteed to converge to a point with zero gradient, but in our setting this is not the case, for the following two reasons:
\begin{itemize}
\item If we use gradient ascent, then the algorithm might just start ascending towards the position of a particle, where there is no solution. Indeed, there is a singularity at the position of each charge, and the value of the potential goes to infinity as we approach it.
\item If we use gradient descent, then the algorithm might just diverge further and further away from our set of particles $P$, just like a particle that is being pushed away.
\end{itemize}
Furthermore, a solution may need to be a saddle-point as opposed to a local maximum or minimum. Indeed, for the ``natural'' power-law function in which the force between two points is proportional to $r^{-2}$, all solutions are saddle-points!\footnote{In the 3-dimensional case, this goes back to Maxwell \cite[Chapter VI Art.\ 112]{Maxwell1873}. In $d$ dimensions, if each charge contributes an amount proportional to $r^{1-d}$ to the force at $x$, the total potential obeys the Laplace equation, implying that solutions are saddle points.}
The take-away is that the convergence of gradient-based methods is highly dependent on the initialization, and it is unclear whether any theoretical guarantees for convergence can be proved.

\paragraph{\bf Attempt 2: System of polynomial equations.}
Optimization techniques can be very powerful but they are designed to work for very general and rather unstructured functions. In our setting however, the electrostatic potential is very structured; indeed, it is given by some simple algebraic expressions. It is thus natural to attempt to use this algebraic form.

The equations that an equilibrium point $\bx \in \mathbb{R}^d$ must satisfy, namely that $\nabla V(\bx) = 0$, can be written down explicitly. When the dimension $d$ is a constant, it is known that systems of polynomial equations (and inequalities) can be solved efficiently to high accuracy, see, e.g., \cite{Renegar92,GrigorevV88-polynomial-ineq}. Unfortunately, in our case, the equations are not polynomial, since they also involve roots. A straightforward remedy to this is to introduce additional variables to obtain roots using only polynomial terms; for example, adding a variable $y$ and the equations $y^2=x, y \geq 0$, ensures that $y$ is the square root of $x$. Thus, one can eliminate the root terms at the cost of introducing additional variables (and equations/inequalities). When the number of charges is constant, the number of root terms is also constant, and thus the resulting system still has a constant number of variables overall. In that case, the problem can be solved efficiently using the aforementioned polynomial solvers. However, if the number of charges is not constant, this approach fails, as the solvers no longer run in polynomial time when the number of variables is not constant.

\paragraph{\bf Our solution: Taylor approximation in a grid with variable coarseness.}
This second (partially) failed attempt suggests a somewhat different approach based on numerical approximation. Namely, if we could approximate the electrostatic potential by a polynomial, we could then use the aforementioned solvers for systems of polynomial equations to obtain an approximate solution. Unfortunately, it is not possible to approximate the potential induced by a charge by a single polynomial. Instead, we attempt to divide the domain into small grid cells, approximate the potential by a polynomial in each grid cell, and finally use the polynomial equation solver in each grid cell.

However, if we divide the domain into an evenly spaced grid and use the Taylor approximation of the potential in each grid cell, we encounter an issue. The quality of the approximation depends on how close the grid cell is to the singularity. The closer we are to the singularity, the worse the approximation error. In order to make the error small enough, the grid cells would need to be exponentially fine. Since we can only allow a polynomial number of grid cells overall, we cannot use this fine resolution over the whole domain. Instead, we use a custom grid that becomes finer as we move closer to the singularity.\footnote{To be more precise, the grid is fine in a certain coordinate, if we are close to the singularity in that coordinate.} An important point is that, since the potential goes to infinity as we approach the singularity, we can eliminate a small region around the singularity from the domain, because we can guarantee that there is no solution there. \cref{fig:grid-one} shows what this grid looks like for a single charge, and \cref{fig:grid} shows the final grid obtained by taking the grids induced by three different charges.

\begin{figure}[t]
\centering
\includegraphics[width=0.7\linewidth]{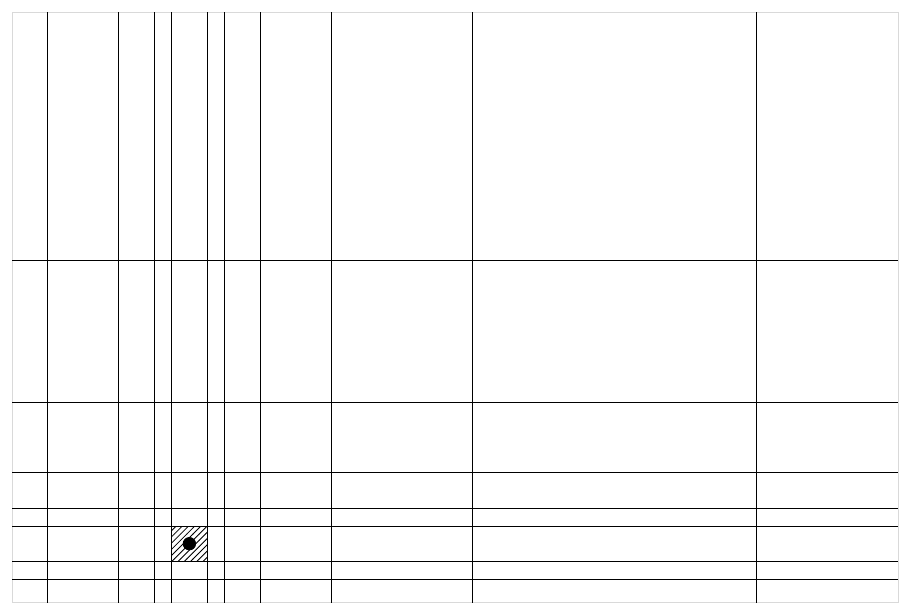}
\caption{The grid we use to ensure good approximation of the potential induced by a single charge. The charge is located at the black dot. The shaded region around the position of the charge is removed from the domain, since we can guarantee that no solution is located there. Note that the grid becomes finer in a given coordinate when we are close to the singularity in that coordinate.}
\label{fig:grid-one}
\end{figure}

\begin{figure}[t]
\centering
\includegraphics[width=0.7\linewidth]{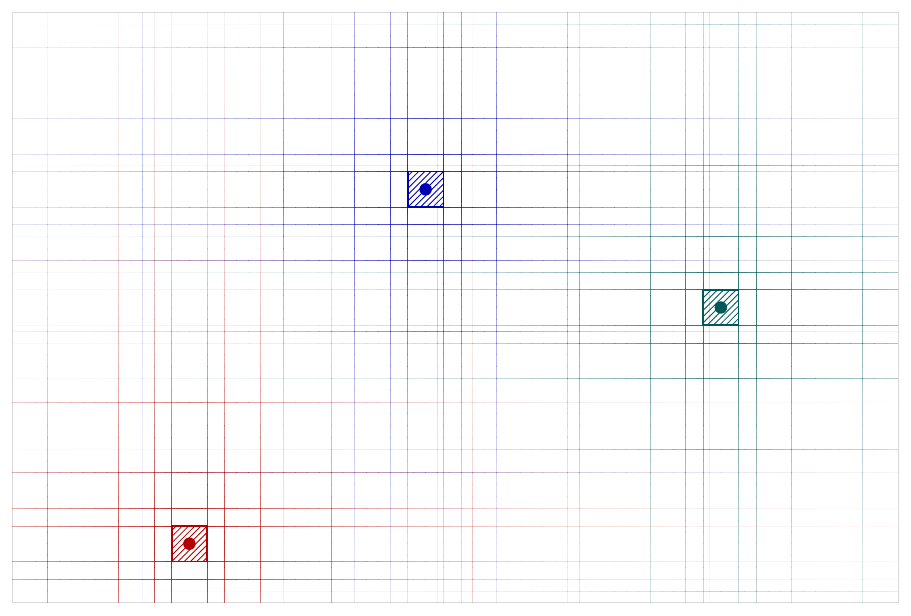}
\caption{The final grid obtained by superimposing the grids induced by three different charges. The grid lines extend to infinity, but are faded here for improved visibility. Our algorithm solves a system of polynomial inequalities in each grid cell.}
\label{fig:grid}
\end{figure}

It turns out that this approach---decomposing into a grid of variable coarseness and using Taylor approximation in each grid cell---works more generally as long as a function is \emph{well-behaved} in a sense that we define formally. Namely, we carefully define this well-behavedness property so as to capture the condition that allows for a custom grid with a polynomial number of cells to be used on the function. As a result, our algorithm can be used on a wide variety of potential functions with similar general properties.

The main limitation to our approach is that the algorithm's running time is exponential in the dimension $d$, and thus is only polynomial for constant $d$. The dependence on $d$ arises in two places: the number of grid cells, and the running time of the solvers for systems of polynomial inequalities~\cite{GrigorevV88-polynomial-ineq,Renegar92}, which we use within each grid cell.

\paragraph{\bf Strong approximation.}
As mentioned earlier, the aforementioned approach yields \emph{weak} approximate equilibrium points, i.e., points where the force is approximately zero, but not necessarily points that are close to exact equilibrium points. In order to find strong approximate solutions, we make the following observation: if a point $\bx$ is a weak approximate solution (i.e., $\nabla V(\bx) \approx 0$) and the determinant of the Hessian at $x$ is sufficiently far from zero, then there must be an exact equilibrium point close to $\bx$. The intuition is that the Hessian must have full rank in a sufficiently large region around $\bx$, and as a result, given that $\nabla V(\bx)$ is almost zero, it must take value exactly zero in some point in the vicinity of $\bx$. More formally, this can be established by using the Poincaré-Miranda theorem. Our algorithmic approach can find such a point (with gradient close to zero and determinant of the Hessian bounded away from zero) efficiently, if it exists.

\subsubsection{Generalized potentials}

As mentioned earlier, the existence of an equilibrium point of an electrostatic potential can be established using Morse theory~\cite{morse2014critical}. However, this approach has two drawbacks:
\begin{itemize}
\item It only yields existence for \emph{generic} instances, i.e., outside some set of measure zero.
\item It does not give us containment in any subclass of \tfnp/ for the associated computational problem.
\end{itemize}

Regarding the second issue, as far as we know, Morse theory has not been studied in the context of total search problems, and this is an interesting direction for future work.

\begin{openproblem}
Can one define a computational problem to capture Morse theory and use it to prove membership in some \tfnp/ class for problems where the existence is shown using this theory?
\end{openproblem}

Coming back to our electrostatic potential, we bypass both of the aforementioned issues by proposing a different proof of existence that takes care of both of these issues. Here, we give some brief intuition about how the proof can be stated for the two-dimensional setting using a generalization of the hairy ball theorem.

\paragraph{\bf A proof of existence using the hairy ball theorem.}
Let a collection $P$ of charged particles in $\bR^2$ be given and assume that the sum of the magnitudes of the charges is strictly positive.\footnote{If the sum is strictly negative, then just flip the signs of all the charges and note that this does not change the set of solutions. If the sum of magnitudes is zero, then an equilibrium point is not guaranteed to exist. Consider for example a setting with two charges, one of magnitude $+1$ and the other of magnitude $-1$.} Then, there exists a sufficiently large region $S\subseteq \bR^2$ containing $P$ such that the force points outside $S$ at any point on the boundary of $S$. Indeed, if we are far enough from the set of charges $P$, then, by our assumption, the positive charges will dominate the negative ones, and the force will point away from $P$.

Next, consider points that are very close to a positive charge $p \in P$. The force must point away from $p$. Now, remove a small region around $p$ from the set $S$. Note that the force points towards the inside of $S$ around the boundary of the small region that we removed. For each charge, we remove a sufficiently small region around it. For negative charges, the force will point outside of $S$ around the boundary of the small regions that we remove around these charges. See \cref{fig:hairy-ball} for an illustration of the set $S$.

\begin{figure}[t]
\centering
\includegraphics[width=\linewidth]{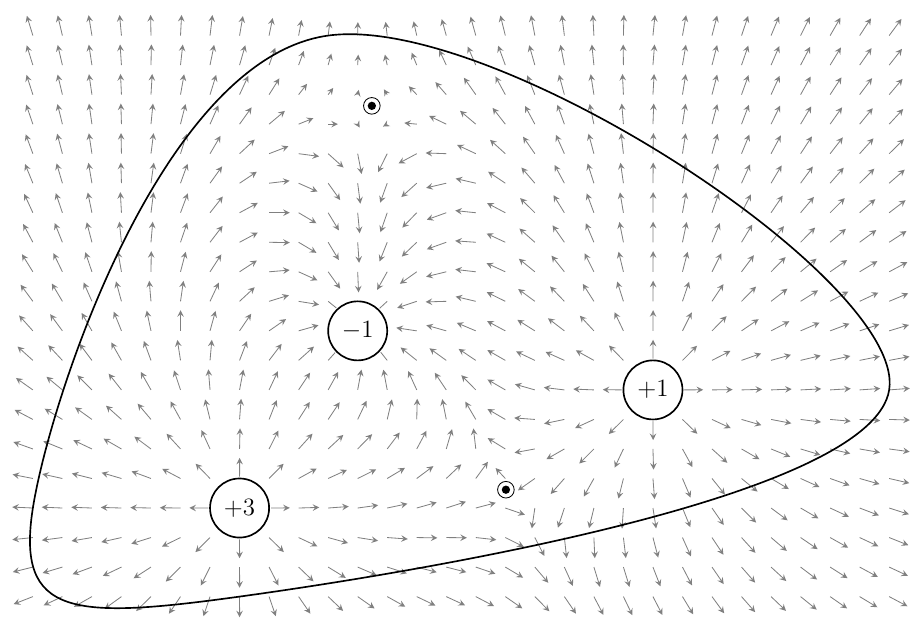}
\caption{An example with two positive charges (magnitudes $+3$ and $+1$) and one negative charge (magnitude $-1$). The force field is shown, but the length of the vectors is capped so as not to clutter the picture. The two solutions are shown as black dots. The set delimited by the black lines (inside the black outer boundary, and excluding the small round black regions) is an example of a set $S$ that we can use for our proof.}
\label{fig:hairy-ball}
\end{figure}

The last step of the proof is to take two copies of $S$, one with the force field negated, and connect them together along their corresponding boundaries, interpolating the force field smoothly at those boundaries. The resulting object lies in three-dimensional space and it is a torus of genus $|P|$, i.e., ``a donut with $|P|$ holes''. Due to the boundary conditions that the force had on the set $S$, we can argue that the ``gluing operation'' that we performed did not introduce any new solutions.\footnote{The force fields will usually not exactly agree on the boundaries that are ``glued'' together. However, they will point in the same general direction (when viewed as tangent fields) and thus it is possible to interpolate smoothly between them without introducing any new solutions.} Thus, we have obtained a continuous tangent vector field on the surface of the torus of genus $|P|$. Now, a generalization of the hairy ball theorem states that such a vector field must have a zero, and by our construction, this zero must be an equilibrium point of the original potential. Furthermore, it is known that the problem of finding a zero of a tangent vector field on such a surface lies in the class \ppad/~\cite{GoldbergH21}.

\paragraph{\bf A generalized potential problem.}
Motivated by this proof of existence, we ask the question of determining the complexity of computing an equilibrium point of a potential that satisfies certain boundary conditions. Note that our existence proof only used the fact that the potential has a set $P$ of points such that:
\begin{itemize}
\item If we move sufficiently far away from the set $P$, then the force points away from $P$.
\item For any $p \in P$, if we move sufficiently close to $p$, then the force points either away from $p$, or towards $p$.
\end{itemize}
What is the complexity of finding an equilibrium point of a potential that satisfies these conditions? The problem lies in \ppad/ by our existence proof. We provide a lower bound by showing that the problem is \cls/-hard.

In more detail, our hard instance is two-dimensional and has $|P| = 2$. The potential is obtained by taking the sum of two potentials. Each of those two potentials has a singularity at one of the two points in $P$ and decreases monotonically as we move away from that point. By a careful construction of these two potentials (which are represented as arithmetic circuits), we are able to embed an instance of a \cls/-hard problem, ensuring that any equilibrium point yields a solution to this hard problem. We reduce from the problem of finding a Karush-Kuhn-Tucker (KKT) point in a two-dimensional min-max optimization problem. This problem is known to be \cls/-hard~\cite{KalavasisPSZ25-2D-minmax}. Interestingly, the best upper bound for that problem is also \ppad/ and determining its exact complexity is a major open problem. We leave open whether a reduction can also be obtained in the other direction, and more generally the question of determining the exact complexity of the problem.

\subsection{Further Related Work}
\label{sec:related-work}

Computing equilibrium points of the electrostatic potential is arguably one of the most basic computational problems in electrostatics. 
Previous papers propose algebraic methods to solve the problem, based on general-purpose solvers whose worst-case running time is exponential in the input size~\cite{mehta2011finding,uteshev2013stationary}.
Computing an $\eps$-approximation in $\poly(1/\eps)$ time is easy in constant dimensions by creating a uniform grid, but to the best of our knowledge, there are no papers that find such a point in $\poly(\log(1/\eps))$ time. 
Our problem is also related to molecular dynamics~\cite{frenkel2023understanding}. In molecular dynamics, a large number of molecules interact with each other via intermolecular forces, including the electrostatic force, and the goal is to understand their structure and long-term dynamical behavior. 
Our problem is a much simplified version of this setting, restricted to the electrostatic force and to computing an equilibrium configuration with many fixed and one movable charge.

Historically, the emphasis has been on bounding the number of equilibrium points of the potential induced by a set of charges $P$ (in terms of the cardinality of $P$ and the dimension of space $d$) as opposed to their computation. Our results do not have new implications for the number of equilibrium points, but we hope for new insights arising from the study of their computation. Maxwell \cite{Maxwell1873} conjectured that in three dimensions ($d=3$), the number of equilibrium points is upper-bounded by $(n-1)(n-2)$; the appendix of Gabrielov et al. \cite{gabrielov2007mystery} provides a scholarly discussion of the origin of that conjecture and notes that it remains unproven. Using Milnor's theorem, Zolotov \cite{zolotov2023upper} obtained upper bounds on the number of equilibrium points for the following larger class of problems. Let $n$ be the number of point charges, each charge $i$ having some weight $q_i \in \mathbb{R}\setminus \{0\}$. The contribution to the potential $V$ from any point obeys a power law, contributing $q_i/{r_i}^m$, where $r_i$ is the distance to the point $i$. The ``real life'' power law would set $m=1$, but in principle, other values can be considered; the domain of interest is restricted to a subspace, or a superspace, of the one spanned by the $n$ fixed charges. If $m$ is even, Zolotov's upper bound is exponential in $d$, but if $m$ is odd (which includes the ``real life'' scenario), the upper bound has an exponent of $d+n$. This improves an earlier upper bound due to \cite{gabrielov2007mystery}. A line of work obtains sharper bounds for two to four dimensions ($2 \le d \le 4$)~\cite{Killian2009,lee2022nine,giorgadze2021equilibrium,uteshev2016maxwell}.

Electric fields (in high dimension) are the basis of the {\em Poisson Flow generative model}
\cite{XuLTJ22} in the context of generative models in machine learning. Data points (such as images) lie in high-dimensional space and are treated as positively charged particles in space augmented with an additional dimension, denoted $z$. Assuming data points lie on the $z=0$ hyperplane, a notional positively charged point with $z>0$ tends to be repelled in a direction of increasing $z$. As the distance $r$ from the data points increases, the distribution of directions tends to become uniform. This suggests a generative model based on: starting from a uniformly sampled point on a hemisphere with large $r$, follow in reverse the path taken. A further paper \cite{XuLTTTJ23} extends this approach to a multi-dimensional augmentation of data space: data points sit in a $\bm{z}=\bm{0}$ subspace, for a vector $\bm{z}$ of additional components. These models are incorporated into a broader class of physics-based generative models in \cite{LLXJT23}.

\section{Notation and Preliminaries}
Let $[n] = \{1, 2, \ldots, n\}$ for any positive integer $n \in \bZ_{>0}$.
Let $\indicator(\cdot)$ denote the indicator function.

\paragraph{\bf Partial Derivatives.}
For a smooth (repeatedly differentiable) function $f : X \rightarrow \bR$, where $X \subseteq \bR^d$, we denote the partial derivatives using a multi-index notation as follows:
Let $\bs \in \bZ_{\ge 0}^d$ be a $d$-dimensional multi-index, i.e., a $d$-tuple with non-negative integers. 
Let $|\bs| = ||\bs||_{1} = \sum_{j \in [d]} s_{j}$ be the sum of the elements in $\bs$
and $\bs ! = \prod_{j \in [d]} (s_{j}!)$ be the product of the factorials of the elements in $\bs$.
For an $\bx \in \bR^d$, let $\bx^s = \prod_{j \in [d]} x_j^{s_j}$.
Let $D^{\bs}$ denote the following partial derivative $D^{\bs} = \prod_{j \in [d]} \left( \frac{\partial}{\partial x_{j}} \right)^{s_{j}}$, i.e., $D^{\bs}$ partially differentiates a function with respect to the first variable $s_1$ times, the second variable $s_2$ times, and so on.
Let $M^{(k)}_f(\bx)$ denote the maximum absolute value among all $k$-th order partial derivatives of $f$ at $\bx \in X$, i.e.,
\begin{equation*}
    M^{(k)}_f(\bx) = \max_{|\bs| = k} |(D^{\bs} f) (\bx)|.
\end{equation*}
The gradient of $f$ is denoted by $\nabla f = (\frac{\partial f}{\partial x_j})_{j \in [d]}$.

\paragraph{\bf Stationary Points.}
We define weak and strong stationary points of a function $f$ in a domain $X$.
\begin{definition}[Weak Approximate Stationary Point]
A point $\bx \in X$ is a weak $\varepsilon$-approximate stationary point, or simply a weak $\varepsilon$-stationary point, for $\varepsilon > 0$ if 
\[
    || \nabla  f(\bx) ||_{\infty} = M^{(1)}_f(\bx) \le \varepsilon.
\]
\end{definition}

\begin{definition}[Strong Approximate Stationary Point]
A point $\bx \in X$ is a strong $\varepsilon$-approximate stationary point, or simply a strong $\varepsilon$-stationary point, for $\varepsilon > 0$ if there exists $\bx^* \in X$ such that
\[
    \nabla f (\bx^*) = 0 \text{ and } || \bx - \bx^* ||_{\infty} \le \varepsilon.
\]
\end{definition}

\paragraph{\bf Taylor Expansion.}
The Taylor expansion of $f$ at a point $\bx$, with a $k$-th order remainder term, with respect to a point $\bhx \in X$, assuming $\{ \bhx + t (\bx - \bhx) \mid t \in [0,1] \} \subseteq X$, is 
\begin{align}\label{eq:taylor-general}
    f(\bx) = \sum_{|\bs| < k } \frac{ (D^{\bs} f) (\bhx)  }{\bs !} (\bx - \bhx)^{\bs} + \sum_{|\bs| = k } R_{\bs}(\bx) (\bx - \bhx)^{\bs},
\end{align}
where $R_{\bs}(\bx)$ is the remainder. 
There are several equivalent formulations of $R_{\bs}(\bx)$; we are primarily interested in bounding it and will make use of the following inequality based on Lagrange's formulation
\begin{align}\label{eq:taylor-remainder}
    | R_{\bs}(\bx) | \le \frac{1}{\bs !} \max_{t \in [0,1]} \max_{|\bs'| = |\bs|} | (D^{\bs'} f)(\bhx + t (\bx - \bhx)) | = \frac{1}{\bs !} \max_{t \in [0,1]} M^{(|\bs|)}_f(\bhx + t (\bx - \bhx)) .
\end{align}

\subsection{The Electrostatic Potential (Equilibrium Points)}\label{sec:prelim:elec}
We have a configuration of $n$ charges in the $d$-dimensional Euclidean space: the $i$-th charge, $i \in [n]$, has a strength of $q_i \in \bR_{\neq 0}$ and is located at $\ba_i = (a_{i,1}, a_{i,2}, \ldots, a_{i,d}) \in \bR^d$. The electrostatic potential at a point $\bx \in \bR^d$ due to this configuration of charges is
\begin{equation}\label{eq:potential}
    f(\bx) = \sum_{i \in [n]} \frac{q_i}{|| \bx - \ba_i ||},
\end{equation}
where $|| \bx - \ba_i || = \sqrt{\sum_{j \in [d]} (x_j - a_{i,j})^2}$ is the Euclidean norm. 
A point $\bx$ is an equilibrium point (or a critical point) if $\nabla f (\bx) = 0$ and $\bx \neq \ba_i$ for all $i \in [n]$; essentially, $\bx$ is a stationary point of $f$, except that it also avoids the points of singularity $(\ba_i)_{i \in [n]}$. With a physics interpretation, an equilibrium point is where the force $- \nabla f (\bx) = 0$. The $\ell$-th component of $\nabla f (\bx)$ is
\begin{align}\label{eq:force}
    \frac{\partial f}{\partial x_{\ell}}(\bx) = \sum_{i \in [n]} \frac{ -q_i (x_{\ell} - a_{i,\ell})}{ ( \sum_{j \in [d]} (x_j - a_{i,j})^2 )^{3/2}}, \text{ for all $\ell \in [d]$},
\end{align}
where the $i$-th charge contributes $\frac{ -q_i (x_{\ell} - a_{i,\ell})}{ ( \sum_{j \in [d]} (x_j - a_{i,j})^2 )^{3/2}}$ to $\nabla f (\bx)$.
Note that the case of $d=3$ corresponds to the real-life electrostatic (Coulomb) force in the $3$-dimensional physical space.

We make the following normalizations without loss of generality: let $\min_{i \in [n]} |q_i| = 1$ and let $\min_{i, i' \in [n], i \neq i'} \max_{j \in [d]} | a_{i,j} - a_{i',j} | = 1 $. Essentially, we scale the charges to have the minimum strength of the charges to be $1$ and scale the distances to have the minimum distance among the charges for at least one coordinate to be $1$, which also implies that the minimum distance among the charges is at least $1$. Let $q_{max} = \max_i |q_i|$ and $a_{max} = \max_{i, i' \in [n], j \in [d]} | a_{i,j} - a_{i',j} |$.

\section{Computing Approximate Solutions in Constant Dimension}\label{sec:weak}

In this section, we prove that we can efficiently compute approximate stationary points of the electrostatic potential, as long as the dimension $d$ is constant.

\begin{theorem}\label{thm:electro-weak}
Fix any $d \geq 1$. Given $n$ charges in $d$-dimensional space, and a bounded convex set $X$ (given as a system of linear inequalities), as well as error parameters $\eps > \delta > 0$, we can output in polynomial time
\begin{itemize}
    \item either a point $\bx \in X$ with $\|\nabla f(\bx)\|_\infty \leq \eps$,
    \item or that there is no point $\bx \in X$ such that $\|\nabla f(\bx)\|_\infty \leq \delta$.
\end{itemize}
Here $f$ denotes the electrostatic potential created by the charges.
\end{theorem}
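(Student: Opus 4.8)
The plan is to reduce the problem, via a grid whose coarseness adapts to the singularities, to polynomially many trivial local subproblems. First I would use the normalizations to show that the force is large near each charge: for $\bx$ with $\|\bx-\ba_i\|$ small, the contribution $|q_i|/\|\bx-\ba_i\|^2$ of charge $i$ to $\|\nabla f(\bx)\|$ dominates the contribution $\le 4nq_{max}$ of all the others (each at distance $\ge 1/2$, by the unit-separation normalization). Hence there is a radius $\rho'$, with $\rho'=1/\poly$ and $\log(1/\rho')$ polynomial in the input, such that every point of the axis-aligned cube $B_i:=\ba_i+[-\rho',\rho']^d$ has $\|\nabla f\|_\infty>\eps$; therefore $\bigcup_i B_i$ contains no point with $\|\nabla f\|_\infty\le\delta$ and can be deleted from the search. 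I would also compute, by linear programming, a bounding box $[-L,L]^d\supseteq X$, confining everything to $([-L,L]^d\cap X)\setminus\bigcup_i B_i$.

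Next I would construct the variable-coarseness grid shown in \cref{fig:grid-one,fig:grid}. In each coordinate $j$ I place, for every charge $i$, the ``ring'' hyperplanes $x_j=a_{i,j}\pm\rho'(1+\theta)^m$ for $m=0,\dots,M$, together with $O(1/\theta)$ equally spaced hyperplanes inside each band $[a_{i,j}-\rho',a_{i,j}+\rho']$, where $\theta=1/\poly$ and $M=O(\theta^{-1}\log((L+a_{max})/\rho'))$. This yields $\poly$ hyperplanes per coordinate, and hence $\poly$ cells once $d$ is fixed. The geometric statement I would need to verify --- and this is the heart of the argument --- is that every grid cell $Q$ that meets $X$ and is not contained in any $B_i$ satisfies $\mathrm{dist}(Q,\ba_i)\ge\rho'$ and $\mathrm{diam}(Q)\le\theta\sqrt d\cdot\mathrm{dist}(Q,\ba_i)$ for every charge $i$: the ring hyperplanes make the grid geometrically finer as one approaches a singularity in any single coordinate, while the in-band subdivision is exactly what is needed to control those cells that lie close to $\ba_i$ in several coordinates at once yet are not swallowed by $B_i$.

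Granting this, on every retained cell $Q$ the homogeneity of the force kernel $\bx\mapsto(\bx-\ba_i)/\|\bx-\ba_i\|^3$ (whose $k$-th order derivatives are $O(\|\bx-\ba_i\|^{-2-k})$) together with the Lagrange remainder bound \eqref{eq:taylor-remainder} gives $\|\nabla f(\bx)-\nabla f(\by)\|_\infty\le\poly(n,q_{max},d)\cdot\theta/\rho'^2$ for all $\bx,\by\in Q$, which is $\le(\eps-\delta)/4$ by our choice of $\theta$. So $\nabla f$ is essentially constant on each retained cell --- the order-$1$ Taylor approximation already suffices, and in this case no polynomial-inequality solver is needed (higher-order Taylor approximation fed to the systems-of-polynomial-inequalities algorithms of \cite{Renegar92,GrigorevV88-polynomial-ineq} is the route for the general ``well-behaved'' framework). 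The algorithm then evaluates $\nabla f$ at the center $\bhx_Q$ of each retained cell to additive error $(\eps-\delta)/4$ --- a polynomial-time numerical computation, since every radius involved lies in $[\rho',\poly]$ --- and: if some retained cell has $\|\nabla f(\bhx_Q)\|_\infty\le(\eps+\delta)/2$, it outputs any point of $Q\cap X$ (found by linear programming), which then has $\|\nabla f\|_\infty\le(\eps+\delta)/2+(\eps-\delta)/4+(\eps-\delta)/4=\eps$; otherwise it reports that no point of $X$ has $\|\nabla f\|_\infty\le\delta$. The latter is correct because any $\bx^*\in X$ with $\|\nabla f(\bx^*)\|_\infty\le\delta$ avoids $\bigcup_i B_i$, hence lies in a retained cell $Q$, whose center then satisfies $\|\nabla f(\bhx_Q)\|_\infty\le\delta+(\eps-\delta)/4$, so the computed value is $\le(\eps+\delta)/2$, contradicting that we reached the ``no'' branch. (The strict gap $\eps>\delta$ is precisely what supplies the $(\eps-\delta)/2$ slack absorbed by the approximation and rounding errors.)

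The main obstacle is the grid construction and the geometric claim of the second step: one must simultaneously keep the number of cells polynomial and force every retained cell to be geometrically small relative to its distance to each singularity. The delicate case --- cells adjacent to a charge in several coordinates but excluded from the deleted cube --- is invisible to a purely ``radial'' refinement and is exactly what necessitates the in-band subdivision; once the grid is set up correctly, the near-singularity force estimate, the derivative bounds coming from homogeneity, the cell count, and the per-cell numerical and LP computations are all routine.
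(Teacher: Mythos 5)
Your high-level plan---delete small cubes around the charges, build an adaptive grid that is geometrically finer near each singularity, approximate within each cell, and use the $\eps > \delta$ gap to absorb rounding errors---matches the paper's approach. The deletion step is essentially the paper's \cref{lm:box-around-charges}, and the grid geometry you describe (geometric rings combined with in-band subdivision) correctly ensures that every retained cell has $\mathrm{diam}(Q) = O(\theta\sqrt d\cdot\mathrm{dist}(Q,\ba_i))$.

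However, there is a genuine gap in the quantitative analysis, and it is exactly the point where you depart from the paper. You set $\theta$ so that the Lipschitz bound $\|\nabla f(\bx)-\nabla f(\by)\|_\infty \le \poly\cdot\theta/\rho'^2$ is at most $(\eps-\delta)/4$, and then you evaluate $\nabla f$ only at the cell centers, asserting that ``no polynomial-inequality solver is needed.'' This forces $\theta = O\bigl((\eps-\delta)\rho'^2/\poly\bigr)$, and since your ring-hyperplane count is $M = O(\theta^{-1}\log(\cdot))$ and the in-band subdivision adds another $O(1/\theta)$ hyperplanes per coordinate, the total number of cells scales like $\poly(1/(\eps-\delta))$. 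But the theorem (and the abstract, and \cref{sec:related-work}) promises running time polynomial in the \emph{input}, hence in $\log(1/\eps)$, not $1/\eps$; when $\eps-\delta$ is inverse-exponential, your grid has exponentially many cells. Your claim that $\theta = 1/\poly$ simply presupposes $\eps-\delta \ge 1/\poly$, which is not given. (As the paper itself notes in \cref{sec:related-work}, getting $\poly(1/\eps)$ via a uniform grid is easy; the contribution is the $\poly(\log(1/\eps))$ bound.)

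The missing idea is that the grid coarseness parameter must be a constant (the paper uses a factor of $2$ between successive $\beta$-scales), and the dependence on $\eps$ must instead be pushed into the \emph{order} of the Taylor expansion: taking $k = \Theta(\log(1/\eps))$ terms makes the remainder $2^{-\Omega(k)} = O(\eps)$ without shrinking the cells. The price is that, within each cell, one must then locate an approximate zero of a degree-$\Theta(\log(1/\eps))$ polynomial system, which is precisely why the solvers of Grigor'ev--Vorobjov and Renegar are needed (in constant dimension they run in time polynomial in the degree and bit size). Your remark that this machinery is ``the route for the general well-behaved framework'' but unnecessary here is incorrect: the electrostatic potential does not escape this; a first-order approximation over a $\poly$-sized grid cannot reach error $\eps$ when $\eps$ is inverse-exponential.
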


Our proof proceeds in two steps.
\begin{description}
\item[Step 1:] We show that, in constant dimension, we can efficiently solve systems of inequalities involving \emph{well-behaved} functions, a notion which we introduce. The main idea is to use the well-behavedness property to subdivide the domain into a polynomial number of regions, and then to apply a Taylor approximation in each region. The regions need to be chosen very carefully, and their sizes vary exponentially.
\item[Step 2:] We show that the electrostatic potential is well-behaved, as defined in step 1. At a very high level, the regions where some derivative is large can be determined efficiently and their size can be controlled.
\end{description}

\subsection{Step 1: Well-behaved Functions}

In this section, we formally define the notion of well-behaved functions that we use, and we state and prove our result about solving systems of inequalities involving such functions, in constant dimension.

We study functions of type $f : X \to \bR$ defined on compact domains $X \subseteq \bR^d$,
where $X$ is defined using a system of linear inequalities.
We assume the functions to be \textit{smooth}, i.e., $\infty$-continuously differentiable in a suitably large open set that contains $X$.
Furthermore, we assume the functions to be \textit{well-behaved}, as defined below.

For a rational number $x$, let $\sz(x)$ denote the bit-length of the representation of $x$ as an irreducible fraction, where the numerator and denominator are represented in binary. Similarly, for a rational vector $\bx \in \mathbb{R}^d$, we let $\sz(\bx)$ be the length of its representation, where each entry is represented as above. We use $a = \poly(b_1, \dots, b_n)$ to denote that there exist $k, C \in \mathbb{N}$ such that $a \leq C \cdot b_1^k \cdots b_n^k$.

\begin{definition}\label{def:efficiently-rep}
A family $\mathcal{F}$ of efficiently representable smooth functions in $d$-dimensional space is a mapping that assigns to every bitstring $r \in \{0,1\}^*$ a corresponding smooth function $f_r: X_r \to \mathbb{R}$, where $X_r \subseteq \mathbb{R}^d$ is bounded. Furthermore:
\begin{itemize}
    \item There is a polynomial-time algorithm that, given $r \in \{0,1\}^*$ as input, outputs the bounded domain $X_r \subseteq \mathbb{R}^d$ of the function $f_r$ as a system of linear inequalities.
    \item There is a polynomial-time algorithm that, given as input the representation $r \in \{0,1\}^*$, a rational point $\bx \in X_r$, a multi-index $\bs$, and $\varepsilon > 0$, outputs $v \in \mathbb{R}$ such that $|(D^{\bs} f_r)(\bx) - v| \leq \varepsilon$. To be precise, the algorithm runs in polynomial time in $|r|, \sz(\bx), |\bs|$, and $\log(1/\varepsilon)$.
\end{itemize}
\end{definition}

\begin{definition}[Well-Behaved]\label{def:well-behaved}
A family $\mathcal{F} = (f_r)_r$ of efficiently representable smooth functions in $d$-dimensional space is \emph{well-behaved} if, for every $r \in \{0,1\}^*$, there exist integers $B = \poly(|r|)$ and $C = \poly(|r|)$, such that, for every $\beta > 0$, there exist $n_{\beta} \in \bZ_{\ge 0}$ and $((L_{\beta,i,j}, H_{\beta,i,j})_{j \in [d]})_{i \in [n_{\beta}]} \in \mathbb{R}^{2 n_{\beta} d}$ such that
\begin{align*}
    n_{\beta} &= \poly(|r|, \sz(\beta)), \\
    \left\{ \bx \in X_r \ \Big| \ M^{(k)}_{f_r}(\bx) > C^k 2^{B} \frac{k!}{\beta^k} \text{ for some $k \in \bZ_{\ge 0}$} \right\} 
    &\subseteq \bigcup_{i \in [n_\beta]} \left( \bigtimes_{j \in [d]} [L_{\beta,i,j}, H_{\beta,i,j}] \right), \\
    | H_{\beta,i,j} - L_{\beta,i,j} | &\le \beta, \qquad \text{for all $i \in [n_\beta], j \in [d]$.}
\end{align*}
Additionally, we assume that $M^{(k)}_{f_r}(\bx) \le C^k 2^{B} \frac{k!}{\beta_{min}^k}$ for all $\bx \in X_r$ and $k \in \bZ_{\ge 0}$, for some value $\beta_{min} > 0$ with $\sz(\beta_{min}) = \poly(|r|)$. Furthermore, given $r$ and $\beta$, we can compute $n_{\beta}$ and $((L_{\beta,i,j}, H_{\beta,i,j})_{j \in [d]})_{i \in [n_{\beta}]}$ in time polynomial in $|r|$ and $\sz(\beta)$.
\end{definition}

Our main result in this section is the following.

\begin{theorem}\label{thm:well-behaved-easy}
Fix any $d \geq 1$. For any well-behaved family $\mathcal{F}$ of efficiently representable smooth functions in $d$-dimensional space, we can solve the following problem in polynomial time. Given functions $f_1, \dots, f_m \in \mathcal{F}$ (given by their representations $r_1, \dots, r_m$), an error parameter $\eps > 0$, and a bounded domain $X \subset \mathbb{R}^d$ (given as a system of linear inequalities) that satisfies $X \subseteq \cap_{i=1}^m X_{r_i}$, output:
\begin{itemize}
    \item either a point $\bx \in X$ that satisfies $f_\ell(\bx) \leq 0$ for all $\ell \in [m]$,
    \item or that there is no point $\bx \in X$ that satisfies $f_\ell(\bx) \leq -\eps$ for all $\ell \in [m]$.
\end{itemize}
\end{theorem}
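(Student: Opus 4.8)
I would follow the two-step scheme announced just before the theorem. In \textbf{Step~A} I partition $X$ into polynomially many axis-aligned boxes (``cells'') on which every $f_\ell$ has tightly controlled high-order derivatives; in \textbf{Step~B} I replace each $f_\ell$ on each cell by a low-degree Taylor polynomial with small uniform error and feed the resulting polynomial systems to a decision procedure for polynomial inequalities, cell by cell. Since $d$ is a fixed constant, what we can afford is ``polynomial in the number of cells, the number of polynomials, and their degrees and bit-sizes''.

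\textbf{Step~A (the variable-coarseness grid).} First fix uniform constants $B := \max_\ell B_{r_\ell}$, $C := \max_\ell C_{r_\ell}$, $\beta_{min} := \min_\ell \beta_{min}^{(r_\ell)}$ (all of polynomial bit-size), so the bounds of \cref{def:well-behaved} hold simultaneously for all $f_\ell$ with these constants; let $\beta_0 := 2M$ for $M$ a standard a priori $\|\cdot\|_\infty$-bound over the polytope $X$, and $\beta_t := \beta_0 2^{-t}$ for $t=0,\dots,T$ with $T := \lceil\log_2(\beta_0/\beta_{min})\rceil = \poly$, so $\beta_T\in[\beta_{min}/2,\beta_{min}]$. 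For each scale $\beta$, \cref{def:well-behaved} (applied to each $f_\ell$ and unioned over $\ell$) gives, in polynomial time, a polynomial-size family of boxes $\mathcal{B}_\beta$, each of side $\le\beta$, whose union contains the ``bad set'' $\{\bx\in X_{r_\ell}: M^{(k)}_{f_{r_\ell}}(\bx) > C^k 2^B k!/\beta^k \text{ for some }k\}$ of every $\ell$; write $\widetilde{\mathcal{B}}_\beta$ for these boxes enlarged by a collar of width $\gamma_\beta := \beta/(2Cd)$ on each side. For $t=0,\dots,T$ let $G_t$ be the origin-aligned grid with cell-side $\gamma_{\beta_t}$ and $G_{T+1}$ the grid with cell-side $\gamma_{\beta_{min}}$, and keep a cell $Q$ if: (level $t\le T$) $Q\cap X\ne\emptyset$, $Q\cap\mathcal{B}_{\beta_t}=\emptyset$, and (for $t\ge1$) $Q\cap\widetilde{\mathcal{B}}_{\beta_{t-1}}\ne\emptyset$; or (level $T+1$) $Q\cap X\ne\emptyset$ and $Q\cap\widetilde{\mathcal{B}}_{\beta_T}\ne\emptyset$. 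I then verify three facts by short geometric arguments: (i) there are $\poly$ kept cells — a kept level-$t$ cell ($t\ge1$) meets $\widetilde{\mathcal{B}}_{\beta_{t-1}}$, a union of $\poly$ boxes of side $\Theta(\beta_t)$, each meeting only $O((\beta_{t-1}/\gamma_{\beta_t})^d)=O((Cd)^d)$ cells of $G_t$, and similarly for levels $0$ and $T+1$, over $T+2=\poly$ levels; (ii) the kept cells cover $X$ — given $\bx\in X$, if $t^\star := \min\{t\le T:\bx\notin\widetilde{\mathcal{B}}_{\beta_t}\}$ exists then the $G_{t^\star}$-cell containing $\bx$ is kept (a side-$\gamma_{\beta_{t^\star}}$ cell meeting $\mathcal{B}_{\beta_{t^\star}}$ would lie in $\widetilde{\mathcal{B}}_{\beta_{t^\star}}$, which $\bx$ avoids, while minimality gives $\bx\in\widetilde{\mathcal{B}}_{\beta_{t^\star-1}}$), and otherwise $\bx\in\widetilde{\mathcal{B}}_{\beta_T}$ and its $G_{T+1}$-cell is kept; (iii) derivatives are controlled — writing $\beta_Q$ for the scale of a kept cell $Q$ (so its side is $\gamma_Q := \gamma_{\beta_Q}$), the condition $Q\cap\mathcal{B}_{\beta_Q}=\emptyset$ (for $t\le T$) gives $M^{(k)}_{f_{r_\ell}}(\bx)\le C^k 2^B k!/\beta_Q^k$ for all $k,\ell$ and all $\bx\in Q\cap X_{r_\ell}$, while for level $T+1$ this holds with $\beta_{min}$ by the last clause of \cref{def:well-behaved}.

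\textbf{Step~B (per-cell approximation and decision).} On each kept cell $Q$, compute a rational $\bhx_Q\in Q\cap X$ of polynomial bit-size (a vertex of $Q\cap X$). For $\bx\in Q\cap X$ the segment $[\bhx_Q,\bx]$ stays in $Q\cap X_{r_\ell}$, so by (iii) and the Lagrange remainder \eqref{eq:taylor-remainder} the order-$k^\star$ Taylor remainder of $f_{r_\ell}$ at $\bhx_Q$ is at most $(Cd\,\gamma_Q/\beta_Q)^{k^\star}2^B = 2^{B-k^\star}$ (using $\sum_{|\bs|=k}1/\bs!=d^k/k!$, $\|\bx-\bhx_Q\|_\infty\le\gamma_Q$, and $\gamma_Q/\beta_Q=1/(2Cd)$); take $k^\star := \lceil B+2+\log_2(1/\eps)\rceil = \poly+\log(1/\eps)$, uniformly over cells, so this remainder is $\le\eps/4$. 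Compute each coefficient $(D^{\bs}f_{r_\ell})(\bhx_Q)$ via \cref{def:efficiently-rep} to additive error $\eta$ with $\log(1/\eta)=\poly+\log(1/\eps)$ small enough that the induced perturbation — amplified by $\sum_{|\bs|<k^\star}\gamma_Q^{|\bs|}/\bs!$, which is at most singly exponential in the input and $\log(1/\eps)$ since $\gamma_Q\le\beta_0=2^{\poly}$ and $k^\star=\poly$ — is also $\le\eps/4$. This yields a rational polynomial $\widetilde{P}_{Q,\ell}$ of degree $<k^\star$, hence with $O((k^\star)^d)=\poly$ monomials and polynomial bit-size, satisfying $|f_{r_\ell}(\bx)-\widetilde{P}_{Q,\ell}(\bx)|\le\eps/4$ on $Q\cap X$. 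Now, for each $Q$, call a solver for systems of polynomial inequalities in fixed dimension $d$ \cite{GrigorevV88-polynomial-ineq,Renegar92} on ``$\bx\in Q$, $\bx\in X$, $\widetilde{P}_{Q,\ell}(\bx)\le-3\eps/4$ for all $\ell$'' — $\poly$ linear plus $m$ polynomial constraints of $\poly$ degree and bit-size, hence $\poly$ time for fixed $d$. If some cell's system is feasible, round its returned sample point to a rational $\bx_Q\in Q\cap X$ changing each $\widetilde{P}_{Q,\ell}$-value by at most $\eps/4$ (possible since $\widetilde{P}_{Q,\ell}$ is poly-Lipschitz on the bounded rational polytope $Q\cap X$ and rational points are dense in it), and output $\bx_Q$: then $f_\ell(\bx_Q)\le\widetilde{P}_{Q,\ell}(\bx_Q)+\eps/4\le-\eps/4\le0$ for all $\ell$. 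If every cell's system is infeasible, output that no $\bx\in X$ satisfies $f_\ell(\bx)\le-\eps$ for all $\ell$: any such $\bx$ lies in a kept cell $Q$ by (ii), whence $\widetilde{P}_{Q,\ell}(\bx)\le f_\ell(\bx)+\eps/4\le-3\eps/4$, contradicting infeasibility of $Q$'s system. Overall this touches $\poly$ cells with $\poly$ work each.

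\textbf{Expected main obstacle.} The crux is Step~A: producing a decomposition that is simultaneously polynomially small, a genuine cover of $X$, and composed of cells each ``good at'' a scale comparable to its own side length — this is the only place the precise form of \cref{def:well-behaved} (the polynomial bound on $n_\beta$ together with the box-side bound $|H_{\beta,i,j}-L_{\beta,i,j}|\le\beta$) is exploited. The collar/buffer device is what reconciles the covering property (ii) with the scale-matching property (iii) at the interface between consecutive scales, and getting facts (i)–(iii) exactly right — especially (ii), via the minimal-index argument — is where the real effort lies. The per-cell Taylor estimate (multinomial bound plus Lagrange remainder), the bookkeeping of the three $O(\eps)$ error terms, and the black-box call to the polynomial solver are routine by comparison.
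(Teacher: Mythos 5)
Your proof is correct, and Step~B (per-cell Taylor approximation, precision bookkeeping, black-box call to the fixed-dimension polynomial-inequality solver, and the two-sided comparison of $-3\eps/4$ slack against the $\eps/4$ approximation error) matches the paper's approach almost exactly, up to the choice of constants ($2Cd$ vs.\ $4Cd$; $+2$ vs.\ $\log 8$ in the Taylor degree). Where you genuinely diverge is Step~A, the construction of the variable-coarseness grid. The paper builds a \emph{single} irregular grid: for each scale $\beta = \beta_{\ell,\min}2^t$ and each covering hypercuboid from \cref{def:well-behaved}, it inserts the axis-aligned hyperplanes that bound the hypercuboid \emph{and} uniformly subdivide it into $4C_\ell d$ pieces per axis; the cells are then the minimal boxes of the resulting arrangement. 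The crucial consequence is structural: because the hypercuboid boundaries are themselves hyperplanes of the arrangement, any cell is either disjoint from a hypercuboid or entirely inside one of its $4Cd$-subdivision pieces, which is exactly what the Taylor-error claim (the paper's \cref{clm:taylor-approx}) leans on to get the side-length bound $\beta'/(4Cd)$. You instead build a \emph{multi-resolution family} of origin-aligned uniform grids $G_0,\dots,G_{T+1}$, and retain a cell $Q$ at level $t$ only if it avoids the level-$t$ bad-set cover $\mathcal{B}_{\beta_t}$ \emph{and} touches the collared cover $\widetilde{\mathcal{B}}_{\beta_{t-1}}$ of the previous (coarser) level. The collar (width equal to the level's cell side) is the device that replaces the paper's ``cell lies entirely inside some subdivision piece'' argument: it simultaneously ensures that kept cells have scale-matched derivative control (fact~(iii)), that the minimal-index argument makes the kept cells cover $X$ (fact~(ii)), and that each kept cell is charged to a previous-level box, keeping the total count polynomial (fact~(i)). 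Both constructions deliver the same per-cell invariant (side $\Theta(\beta_Q/(Cd))$ together with $M^{(k)}_f\le C^k2^Bk!/\beta_Q^k$ on the cell), and both are polynomial-size for fixed $d$; the paper's version gives a genuine partition of $X$ and avoids the collar bookkeeping, whereas your version avoids the combinatorics of an arrangement of $\poly$ hyperplanes and is arguably closer to a quadtree-style algorithmic description. Two small points worth tightening in a final write-up: you should explicitly set the precision parameter $\delta$ in the call to the polynomial solver (as the paper does, via a Lipschitz bound on the $\widehat g_\ell$) before arguing that the returned near-feasible point can be projected into $Q\cap X$; and you should note that for $\beta\le\beta_{\min}$ the bad set is empty (by the last clause of \cref{def:well-behaved}), so the level-$(T+1)$ grid may be trivial but the covering argument still goes through unconditionally.
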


Before providing the proof of this theorem, we give a brief proof sketch.
The starting point of our algorithm is to use Taylor expansions to approximate the functions $f_1, \dots, f_m$. Assuming that we could do this, we could then use existing tools to solve systems of polynomial inequalities in constant dimension~\cite{GrigorevV88-polynomial-ineq,Renegar92}.

The challenge comes from the fact that the Taylor expansion is only a good approximation of the function close to the point around which it is created. The notion of well-behavedness that we define can be used to ensure that the domain can be subdivided into a polynomial number of regions, such that in each of them there is a good Taylor approximation. These regions are not a nice regular grid, but instead 
the grid is finer where the functions $f_\ell$ change rapidly (derivatives are large) and coarser where the functions $f_\ell$ change slowly.

\begin{proof}[Proof of \cref{thm:well-behaved-easy}]
Fix the dimension $d \geq 1$ and let $\mathcal{F} = (f_r)_r$ be any well-behaved family of efficiently representable smooth functions in $d$-dimensional space. Recall that we are given functions $f_1, \dots, f_m \in \mathcal{F}$ (given by their representations $r_1, \dots, r_m$), an error parameter $\eps > 0$, and a bounded domain $X \subset \mathbb{R}^d$ (given as a system of linear inequalities) that satisfies $X \subseteq \cap_{i=1}^m X_{r_i}$.

Our goal is to provide an algorithm that runs in time polynomial in $m$, $|r_1|, \dots, |r_m|$, $\log(1/\epsilon)$, and the size of the representation of $X$ and which outputs:
\begin{itemize}
    \item either a point $\bx \in X$ that satisfies $f_\ell(\bx) \leq 0$ for all $\ell \in [m]$,
    \item or that there is no point $\bx \in X$ that satisfies $f_\ell(\bx) \leq -\eps$ for all $\ell \in [m]$.
\end{itemize}
Note that the algorithm can take super-polynomial time in $d$, as we are assuming $d$ to be a constant.

Our algorithm uses a piecewise Taylor expansion. 
We divide the search space into an axis-aligned grid. 
For each cell in the grid, we construct $m$ polynomials that approximate the $m$ functions $(f_\ell)_{\ell \in [m]}$.
We then solve a system of polynomial inequalities for each cell in the grid to find an approximate solution, if any.
The grid is finer where the functions $f_\ell$ change rapidly (derivatives are large) and coarser where the functions $f_\ell$ change slowly. This ensures that our polynomial approximation has a small error.
We use the well-behavedness property of the family $\mathcal{F}$ to give a polynomial bound on the number of required grid cells.

\paragraph{\bf Grid.}
We construct the grid using a set of axis-aligned hyperplanes.
Fix an $\ell \in [m]$ and focus on the function $f_\ell$; we repeat the same process for each $\ell \in [m]$.
Note that $f_\ell$ comes from the family $\mathcal{F}$ of well-behaved functions. We can thus compute the parameters $B_\ell$, $C_\ell$, and $\beta_{\ell,min}$ associated with $f_\ell$ as introduced in \cref{def:well-behaved}. Furthermore, we also compute, for each $j \in [d]$
$$L_j = \min_{\bx \in X} x_j, \quad H_j = \max_{\bx \in X} x_j$$
and $t_{\ell,max}$, which is defined as the smallest $t \geq 1$ such that
$$\beta_{\ell,min} 2^{t} \geq \max_{j \in [d]} | H_j - L_j|.$$
Finally, we let $\beta_{\ell,max} = \beta_{\ell,min} 2^{t_{\ell,max}}$.
Since $X$ is bounded and given as a system of linear inequalities, we can compute these quantities in polynomial time. Furthermore, $t_{\ell,max}$ is an integer bounded by a polynomial.

For $t = 0$ to $t_{\ell,max}$:
\begin{itemize}
    \item Let $\beta = \beta_{\ell,min} 2^t$.
    \item Compute the $n_{\ell,\beta}$ hypercuboids $((L_{\ell,\beta,i,j}, H_{\ell,\beta,i,j})_{j \in [d]})_{i \in [n_{\ell,\beta}]}$ satisfying the conditions set out in \cref{def:well-behaved}. Note that by the well-behavedness of $\mathcal{F}$, these are guaranteed to exist and can be computed efficiently.
    \item For every such hypercuboid $(L_{\ell,\beta,i,j}, H_{\ell,\beta,i,j})_{j \in [d]}$, where $i \in [n_{\ell,\beta}]$, add the following hyperplanes:
    \begin{equation*}
        \left\{ \bx \in \bR^d \mid x_j = L_{\ell,\beta,i,j} + s \frac{H_{\ell,\beta,i,j} - L_{\ell,\beta,i,j}}{4 C_\ell d}  \right\}, \text{ for $j \in [d]$ and } s \in \{0\} \cup [ 4 C_\ell d ].
    \end{equation*}
    In other words, we split each hypercuboid into $4 C_\ell d$ uniform pieces along every axis.
\end{itemize}
We also add the hyperplanes $\{ \bx \in \bR^d \mid x_j = L_j \}$ and $\{ \bx \in \bR^d \mid x_j = H_j \}$ for all $j \in [d]$, and also split each of those hypercuboids into $4 C_\ell d$ uniform pieces along every axis.

Let us count the total number of axis-aligned hyperplanes we added along each axis $j \in [d]$. We added two hyperplanes $x_j = L_j$ and $x_j = H_j$ at the ends, as well as $4 C_\ell d - 1$ additional hyperplanes evenly spaced between them. For each $\ell \in [m]$, we iterated over $t_{\ell,max} + 1$ values of $\beta$; for each $\beta$, we iterated over $n_{\ell,\beta}$ hypercuboids; for each hypercuboid, we added $4 C_\ell d + 1$ hyperplanes. Overall, we have at most
\[
    \sum_{\ell \in [m]} (1 + 4 C_\ell d ) \left( 1 + \sum_{t = 0}^{t_{\ell,max}} n_{\ell, \beta_{\ell,min} 2^{t}}\right) ,
\]
hyperplanes along each axis, which is polynomial in the input size.

\paragraph{\bf Polynomial Approximation.}
The set of hyperplanes specified above divides the $\bigtimes_{j \in [d]} [L_j, H_j]$ hypercuboid that covers $X$ into a grid. For each cell in the grid, where a cell is a minimal hypercuboid created by these hyperplanes that does not contain another hypercuboid, we can find an approximate solution, if it exists, by Taylor-approximating each function $f_\ell$ for $\ell \in [m]$. 

Fix a cell in the grid---denoted by $C = \bigtimes_{j \in [d]} [L_j', H_j']$ and let $X' = C \cap X$. If $X'$ is empty, then we ignore this cell. Otherwise, pick an arbitrary point $\bhx \in X'$. We approximate $f_\ell$ using a polynomial of degree $k_\ell-1$, where $k_\ell = \lceil B_\ell + \log(8/\epsilon) \rceil$.
In particular, we compute the $(k_\ell-1)$-th order Taylor expansion of $f_\ell$ with respect to the point $\bhx$, denoted by $g_\ell(\bx)$
\begin{align}
    g_\ell(\bx) = \sum_{|\bs| < k_\ell } \frac{ (D^{\bs} f_\ell) (\bhx)  }{\bs !} (\bx - \bhx)^{\bs}.
\end{align}
Using the fact that the family of functions is well-behaved, we obtain the following crucial claim, which we prove at the end of this section.

\begin{claim}\label{clm:taylor-approx}
The $(k_\ell-1)$-th Taylor approximation of $f_\ell$ with respect to some arbitrary point $\bhx \in X'$, denoted by $g_\ell$, has an error of at most $\epsilon/8$, i.e.,
\[
    \left| f_\ell(\bx) - g_\ell(\bx) \right| \le \epsilon/8, \text{ for all $\ell \in [m]$ and $\bx \in X'$.}
\]
\end{claim}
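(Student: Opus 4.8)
The plan is to bound the Taylor remainder $f_\ell(\bx) - g_\ell(\bx) = \sum_{|\bs| = k_\ell} R_{\bs}(\bx)(\bx - \bhx)^{\bs}$ using the Lagrange estimate \eqref{eq:taylor-remainder}, the well-behavedness of $\mathcal{F}$, and the geometry of the grid. The only part that needs genuine work is to exhibit, for the fixed cell $C$, a single scale $\beta^* = \beta_{\ell,min} 2^{t^*}$ with $0 \le t^* \le t_{\ell,max}$ that simultaneously (a) controls all partial derivatives of $f_\ell$ throughout $C$, i.e.\ $M^{(k)}_{f_\ell}(\bx) \le C_\ell^{k} 2^{B_\ell} k!/(\beta^*)^{k}$ for all $\bx \in C \cap X_{r_\ell}$ and all $k \in \bZ_{\ge 0}$, and (b) dominates the side lengths of $C$, namely $\max_{j \in [d]}(H_j' - L_j') \le \beta^*/(2 C_\ell d)$. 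Given such a $\beta^*$, the rest is bookkeeping.

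For the choice of $\beta^*$ I would take the \emph{largest} grid scale $\beta_{\ell,min}2^t$ for which (a) holds; this set of scales is nonempty because at $\beta = \beta_{\ell,min}$ the inequality holds on all of $X_{r_\ell}$ by the ``Additionally'' clause of \cref{def:well-behaved}. Property (b) is then where the multi-scale grid pays off, and I would argue it by cases. If $\beta^* = \beta_{\ell,max}$, then $C$ sits inside the bounding box $\bigtimes_j [L_j, H_j]$, whose side is $\le \beta_{\ell,max}$ by the definition of $t_{\ell,max}$ and which the algorithm split into $4 C_\ell d$ equal pieces per axis, giving (b) with room to spare. Otherwise $\beta' := 2\beta^*$ is still a grid scale, and by maximality of $\beta^*$ the bound with $\beta'$ must fail at some $\bx^* \in C \cap X_{r_\ell}$; hence $\bx^*$ lies in the ``bad region'' of \cref{def:well-behaved} at scale $\beta'$, which is \emph{open} (the maps $M^{(k)}_{f_\ell}$ are continuous) and is covered by the hypercuboids the algorithm computed at scale $\beta'$, each of side $\le \beta'$. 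Since every face of every such hypercuboid is a grid hyperplane, an interior point of $C$ near $\bx^*$ still lies in one of these hypercuboids, forcing $C$ to lie inside it, hence inside one of its $4 C_\ell d$ subdivided pieces; this gives $\max_j(H_j' - L_j') \le \beta'/(4 C_\ell d) = \beta^*/(2 C_\ell d)$, i.e.\ (b).

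With $\beta^*$ in hand, I would finish as follows. For $\bx \in X' \subseteq C$ and $|\bs| = k_\ell$ we have $|(\bx-\bhx)^{\bs}| \le w^{k_\ell}$ where $w := \max_j(H_j' - L_j')$, and $|R_{\bs}(\bx)| \le \frac{1}{\bs!} C_\ell^{k_\ell} 2^{B_\ell} k_\ell!/(\beta^*)^{k_\ell}$ by \eqref{eq:taylor-remainder} and property (a) (here I use that $X$ is convex, so the segment from $\bhx$ to $\bx$ stays in $C \cap X$). Summing over $|\bs| = k_\ell$ and using the multinomial identity $\sum_{|\bs| = k_\ell} k_\ell!/\bs! = d^{k_\ell}$ together with (b),
\[
|f_\ell(\bx) - g_\ell(\bx)| \ \le\ \Big(\sum_{|\bs| = k_\ell} \frac{k_\ell!}{\bs!}\Big) C_\ell^{k_\ell} 2^{B_\ell} \frac{w^{k_\ell}}{(\beta^*)^{k_\ell}} \ =\ 2^{B_\ell}\Big(\frac{C_\ell d\, w}{\beta^*}\Big)^{k_\ell} \ \le\ 2^{B_\ell} 2^{-k_\ell},
\]
and since $k_\ell = \lceil B_\ell + \log(8/\eps)\rceil \ge B_\ell + 3 + \log(1/\eps)$ this is at most $\eps/8$.

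I expect Step (b) — matching each grid cell to the right resolution — to be the main obstacle, since it is precisely the property the variable-coarseness grid was designed to ensure. Its correctness rests on the factor-of-two slack between the $4 C_\ell d$-way subdivision performed by the algorithm and the $2 C_\ell d$ bound needed in the remainder estimate, and on the open/closed bookkeeping (the bad region is open, cell faces lie on grid hyperplanes) used to upgrade ``$\bx^*$ lies in a covering hypercuboid'' to ``$C$ lies in a covering hypercuboid''.
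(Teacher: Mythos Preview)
Your proposal is correct and follows essentially the same approach as the paper's proof. The only difference is cosmetic: the paper parameterizes by the smallest grid scale $\beta'$ at which the derivative bound \emph{fails} (so that it holds at $\beta'/2$), whereas you parameterize by the largest scale $\beta^*$ at which it \emph{holds}; these are related by $\beta' = 2\beta^*$, and the two case splits and the final remainder computation match exactly. If anything, you are slightly more careful than the paper about the topological step (the bad region is open, cell faces lie on grid hyperplanes) that upgrades ``some point of the cell lies in a covering hypercuboid'' to ``the whole cell lies in a covering hypercuboid''.
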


In general, we might not be able to compute the values $(D^{\bs} f_\ell) (\bhx)$ exactly. However, by \cref{def:efficiently-rep}, we know that we can approximate these values arbitrarily well. Thus, by computing the coefficients with sufficient precision, we obtain a polynomial $\widehat{g}_\ell$ of degree $k_\ell-1$ that satisfies $|\widehat{g}_\ell(\bx) - g_\ell(\bx)| \leq \eps/8$ for all $\bx \in X'$. As a result, together with the claim, we have that
\begin{equation}\label{eq:taylor-error}
\left| f_\ell(\bx) - \widehat{g}_\ell(\bx) \right| \le \epsilon/4, \text{ for all $\ell \in [m]$ and $\bx \in X'$.}
\end{equation}

Next, we will make use of the following theorem \cite{GrigorevV88-polynomial-ineq} (an algorithm with improved complexity is given in \cite{Renegar92}). Note that here it is crucial that the dimension $d$ is constant.

\begin{theorem}[e.g.~\cite{GrigorevV88-polynomial-ineq}]
Fix $d \geq 1$. Given $\delta > 0$ and a system of polynomial inequalities in $d$ variables, $p_i(\bx) \leq 0$ for all $i \in [m]$, we can output in polynomial time
\begin{itemize}
    \item either a point $\bx \in \mathbb{R}^d$ such that there exists $\bx^* \in \mathbb{R}^d$ with $\|\bx-\bx^*\| \leq \delta$ and $p_i(\bx^*) \leq 0$ for all $i \in [m]$,
    \item or that there is no point $\bx \in \mathbb{R}^d$ that satisfies $p_i(\bx) \leq 0$ for all $i \in [m]$.
\end{itemize}
\end{theorem}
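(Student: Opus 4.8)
The plan is to derive this classical statement from effective real algebraic geometry, reading it as a quantitative feasibility question for the semialgebraic set $S = \{\bx \in \bR^d : p_i(\bx) \le 0 \text{ for all } i \in [m]\}$. At a high level I would (i) compute a finite family of \emph{sample points} that meets every connected component of $S$; (ii) if this family is empty, output the second alternative (``no feasible point''); and (iii) otherwise pick any sample point $\bx^*$ — a vector of real algebraic numbers lying in $S$ — and return a rational point $\bx$ with $\|\bx - \bx^*\| \le \delta$. Correctness of (ii) is exactly the defining property of sample points. Since a semialgebraic set cut out by $m$ polynomials of degree at most $D$ in $d$ variables has at most $(mD)^{O(d)}$ connected components (the Oleinik--Petrovsky--Thom--Milnor bound), and $d$ is fixed, the whole procedure runs in polynomial time provided each sample point is produced and then refined efficiently.

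The core ingredient is the computation of sample points by the \emph{critical-point method}, i.e.\ the single-exponential-in-$d$ algorithm of Grigoriev--Vorobjov~\cite{GrigorevV88-polynomial-ineq} (with a sharper variant due to Renegar~\cite{Renegar92}; see also Basu--Pollack--Roy). First I would make the set bounded by intersecting with a ball $\|\bx\|^2 \le R^2$ of polynomial bit-size (a nonempty $S$ always contains a point of norm at most such an $R$). Then, for each of the polynomially many sign conditions on $(p_1, \dots, p_m)$ — equivalently, each guess of which inequalities are tight — one perturbs infinitesimally so that the corresponding variety becomes smooth and bounded, and computes the critical points of a generic linear projection (say $\bx \mapsto x_1$) restricted to it. Those critical points form a zero-dimensional system solvable by elimination/resultant techniques in $(mD)^{O(d)}$ arithmetic operations, and they come out represented as real algebraic numbers: roots of univariate polynomials of degree and bit-size $(mD)^{O(d)}$, together with isolating intervals (or Thom encodings). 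Taking the union over all sign conditions yields a set meeting every connected component of $S$.

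For step (iii), given such a representation of a coordinate of a sample point, I would bisect its isolating interval — each step costing one sign evaluation of a univariate polynomial of bounded degree at a rational of controlled bit-size — until the interval has width below $\delta / \sqrt{d}$; outputting the resulting rational vector gives $\bx$ with $\|\bx - \bx^*\| \le \delta$. This uses $\poly(\log(1/\delta))$ bisections, each polynomial-time. (In the application within \cref{thm:well-behaved-easy} the domain is already bounded, so the ball step is unnecessary there, and the input polynomials are the $\widehat{g}_\ell$ together with the linear inequalities describing the grid cell.)

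The main obstacle — and the part for which one really leans on the cited literature rather than a short self-contained argument — is keeping the running time \emph{single}-exponential in $d$, hence polynomial for constant $d$: a naive quantifier-elimination / cylindrical-algebraic-decomposition approach is doubly exponential in $d$. Achieving the single-exponential bound needs the infinitesimal-deformation machinery that uniformly handles non-smooth, unbounded and otherwise degenerate semialgebraic sets, plus certified solving of the resulting zero-dimensional systems; a secondary but essential technicality is bounding the bit-sizes of the univariate polynomials and isolating intervals describing the sample points, so that the concluding $\delta$-refinement stays polynomial in $\log(1/\delta)$.
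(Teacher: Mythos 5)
The paper does not prove this statement; it is invoked as a black-box result and attributed to the literature on the first-order theory of the reals (Grigoriev--Vorobjov, Renegar), so there is no in-paper proof to compare against. Your sketch is a faithful reconstruction of how those references establish the result: compute a polynomial-size set of sample points meeting every connected component of the semialgebraic set via the critical-point method (after bounding by a ball of controlled radius and applying infinitesimal deformation to recover smoothness and compactness), declare infeasibility when the set is empty, and otherwise bisect the isolating intervals of a sample point's algebraic-number coordinates down to accuracy $\delta/\sqrt{d}$. The Oleinik--Petrovsky--Thom--Milnor bound, the $(mD)^{O(d)}$ arithmetic cost (single-exponential in $d$, hence polynomial for fixed $d$), and the bit-size control needed so that the final refinement costs only $\poly(\log(1/\delta))$ are all the right ingredients, and your remark that in the application of \cref{thm:well-behaved-easy} the domain is already bounded is correct.

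One small point to tighten if you were to write this out in full: the phrase ``for each of the polynomially many sign conditions \ldots one perturbs \ldots and computes the critical points'' reads as if you would first enumerate all realizable sign conditions of $(p_1,\dots,p_m)$ and then run a critical-point computation per condition, but enumerating realizable sign conditions is itself an instance of the problem you are trying to solve, so taken literally that step is circular. The actual algorithms avoid this: they apply the critical-point method directly to a constant number of globally deformed systems (built from sums of squares and infinitesimal shifts of the $p_i$ and of the bounding ball), and the resulting zero-dimensional critical loci already meet every connected component of every realizable sign condition; the sign conditions are then read off from the sample points, not guessed in advance. This does not change the complexity or the conclusion, but it is the step where the cited machinery is doing the real work, and the exposition should reflect it.
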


We apply this theorem to the following system of polynomial inequalities
\begin{align*}
    \widehat{g}_\ell(\bx) + \eps/2 &\le 0, &\text{for $\ell \in [m]$},\\
    \bx &\in X', &
\end{align*}
where the last constraint represents the list of linear inequalities defining the feasible region $X'$. We set $\delta = \eps/4L$, where $L$ is such that all the $\widehat{g}_\ell$ are $L$-Lipschitz-continuous over some sufficiently big compact set containing $X'$.

If the solver outputs that the system is not satisfiable, then, since $|\widehat{g}_\ell(\bx) - f_\ell(\bx)| \leq \eps/4$ by \eqref{eq:taylor-error}, it follows that there is no $\bx \in X'$ such that $f_\ell(\bx) \leq - \eps$ for all $\ell \in [m]$.

If the solver outputs that the system is satisfiable, then it also outputs a point $\bx \in \mathbb{R}^d$ such that there exists $\bx^* \in X'$ with $\|\bx-\bx^*\| \leq \delta$ and $\widehat{g}_\ell(\bx^*) + \eps/2 \le 0$ for all $\ell \in [m]$. By replacing $x$ by its projection\footnote{The projection can be computed in polynomial time, since $X'$ is given as a system of linear inequalities.} onto the convex set $X'$, we have that $x \in X'$ and $\|\bx-\bx^*\| \leq \delta$. As a result, it follows that
$$\widehat{g}_\ell(\bx) \leq \widehat{g}_\ell(\bx^*) + L \|\bx-\bx^*\| \leq -\eps/2 + L\delta \leq -\eps/2 + \eps/4 \leq -\eps/4.$$
Since $|\widehat{g}_\ell(\bx) - f_\ell(\bx)| \leq \eps/4$ by \eqref{eq:taylor-error}, we thus have $f_\ell(\bx) \leq 0$ for all $\ell \in [m]$.

After running the solver on each cell, we either have found a point $\bx \in X$ with $f_\ell(\bx) \leq 0$ for all $\ell \in [m]$, or, we can confidently output that there exists no $\bx \in X$ such that $f_\ell(\bx) \leq - \eps$ for all $\ell \in [m]$.

\paragraph{\bf Running Time.}
Since the number of axis-aligned hyperplanes along each axis is polynomial, and since the dimension $d$ is constant, the number of cells is polynomial. For each such cell we run the polynomial inequality solver, which runs in polynomial time, because the dimension $d$ is constant. Thus, the total running time of the algorithm is polynomial.
\end{proof}

\begin{proof}[Proof of \cref{clm:taylor-approx}]
Let us fix an $\ell \in [m]$ and focus on $f_\ell$; the same argument applies for every $\ell$. We suppress the notation $\ell$ to decrease clutter.

From \eqref{eq:taylor-general} and \eqref{eq:taylor-remainder}, we know that for any $\bx \in X'$ the error is bounded by
\begin{align*}
    \left| f(\bx) - g(\bx) \right| 
    &\le \left| \sum_{|\bs| = k } R_{\bs}(\bx) (\bx - \bhx)^{\bs} \right| 
    \le \sum_{|\bs| = k } | R_{\bs}(\bx) | |\bx - \bhx|^{\bs} \\
    &\le \sum_{|\bs| = k } \frac{1}{\bs !} \left( \max_{t \in [0,1]} M^{(|\bs|)}_f(\bhx + t (\bx - \bhx)) \right) |\bx - \bhx|^{\bs} \\
    &\le \max_{t \in [0,1]} M^{(k)}_f(\bhx + t (\bx - \bhx))  \sum_{|\bs| = k } \frac{1}{\bs !} |\bx - \bhx|^{\bs}.
\end{align*}
Let $\alpha = \max_{\by \in X' } M^{(k)}_f(\by)$. 
Since the family of functions is well-behaved (\cref{def:well-behaved}), we have that $M^{(k)}_f(\bx) \le C^k 2^{B} \frac{k!}{\beta_{min}^k}$ for all $\bx \in X$. So it must be that $\alpha \le C^k 2^{B} \frac{k!}{\beta_{min}^k}$. 
In the algorithm, $\beta$ is incremented by a factor of $2$, starting from $\beta_{min}$ and going to $\beta_{max}$.
Let $\beta'$ be the smallest value among the set of $\beta$-values used in the algorithm such that $\alpha > C^k 2^{B} \frac{k!}{\beta'^k}$. If there is no such value, then it must be that $\alpha \le C^k 2^{B} \frac{k!}{\beta_{max}^k}$, and we set $\beta' = \beta_{max}$.
Notice that by the definition of $\beta'$, we have 
\begin{equation}\label{eq:lm:error:1}
    \alpha = \max_{\by \in X'} M^{(k)}_f(\by) \le C^k 2^{B} \frac{k!}{(\beta'/2)^k}.
\end{equation}
In the case where $\alpha > C^k 2^{B} \frac{k!}{\beta_{max}^k}$, we have $\alpha > C^k 2^{B} \frac{k!}{\beta'^k}$. But this means that a point inside the cell $X'$ (namely the one where the value $\alpha$ is achieved) must be covered by a hypercuboid corresponding to $\beta'$. Since $X'$ is a cell, all of $X'$ must thus be covered by that hypercuboid, which has sides of length at most $\beta'$. Further, in the algorithm, this hypercuboid was cut into smaller pieces by splitting each side into $4Cd$ parts. Therefore, we must have 
\begin{equation}\label{eq:lm:error:2}
\max_{j \in [d], \bx \in X'} |x_j - \hx_j| \le \frac{\beta'}{4 C d} .
\end{equation}
In the case where $\alpha > C^k 2^{B} \frac{k!}{\beta_{max}^k}$, we have $\beta' = \beta_{max}$, and \eqref{eq:lm:error:2} again holds, since the whole domain $X$ fits within a box of side-length $\beta_{max}$, and we further subdivided that box into $4Cd$ parts in a uniform manner.

Furthermore, by the multinomial theorem
\begin{equation}\label{eq:lm:error:3}
    d^k = (1 + 1 + \ldots + 1)^k =  \sum_{|\bs| = k } \frac{k!}{\bs !} \Longleftrightarrow \sum_{|\bs| = k } \frac{1}{\bs !} = \frac{d^k}{k!}.
\end{equation}
Plugging \eqref{eq:lm:error:1}, \eqref{eq:lm:error:2}, and \eqref{eq:lm:error:3} into the error bound, we get
\begin{align*}
    \left| f(\bx) - g(\bx) \right| &\le \max_{t \in [0,1]} M^{(k)}_f(\bhx + t (\bx - \bhx))  \sum_{|\bs| = k } \frac{1}{\bs !} |\bx - \bhx|^{\bs} \\
    &\le \max_{ \by \in X'} M^{(k)}_f(\by)  \sum_{|\bs| = k } \frac{1}{\bs !} \left( \frac{\beta'}{4 C d} \right)^k \\
    &\le C^k 2^{B} \frac{k!}{(\beta'/2)^k} \frac{d^k}{k!} \left( \frac{\beta'}{4 C d} \right)^k = \frac{2^B}{2^k} \le \epsilon/8,
\end{align*}
where the last inequality holds because $k = \lceil B + \log(8/\epsilon) \rceil \ge B + \log(8/\epsilon)$.
\end{proof}

\subsection{Step 2: Application to Electrostatic Potentials}

In this section, we apply the tools introduced in step 1 to the setting of electrostatic potentials. For this, we need to show that the potential $f$ is well-behaved, as defined in the previous section.
However, notice that the function $f$ and its derivatives are unbounded near the charges; in particular
\[
    \left| \frac{\partial f}{\partial x_{\ell}}(\bx) \right| = \left| \sum_{i \in [n]} \frac{ -q_i (x_{\ell} - a_{i,\ell})}{ ( \sum_{j \in [d]} (x_j - a_{i,j})^2 )^{3/2}} \right| \to \infty \text{ as $\bx \to \ba_i$ for any $i \in [n]$}.
\]
This means that the function is not well-behaved in its entire domain. However, we observe that a small region near each charge, where the derivatives become unbounded, never has an equilibrium point.

\begin{lemma}
\label{lm:box-around-charges}
For every $i \in [n]$, there is no $\eps$-equilibrium point in the hypercube $ \bigtimes_{j \in [d]} [a_{i,j} - \rho, a_{i,j} + \rho] $ around the charge $q_i$ located at $\ba_i$, where $\rho = \frac{1}{d\sqrt{d}(4nq_{max} + \eps)}$.
\end{lemma}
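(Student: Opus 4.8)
The plan is to show that near the charge $\ba_i$, the force contributed by $q_i$ itself dominates the combined forces of all other charges, so the gradient cannot be small (let alone zero). Concretely, fix $i \in [n]$ and suppose $\bx$ lies in the hypercube $\bigtimes_{j \in [d]} [a_{i,j} - \rho, a_{i,j} + \rho]$ with $\bx \neq \ba_i$. Write $\nabla f(\bx) = \bm{g}_i(\bx) + \sum_{i' \neq i} \bm{g}_{i'}(\bx)$, where $\bm{g}_{i'}(\bx)$ denotes the contribution of charge $q_{i'}$ to the gradient, i.e. the vector with $\ell$-th component $-q_{i'}(x_\ell - a_{i',\ell}) / \|\bx - \ba_{i'}\|^3$. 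Note that $\|\bm{g}_{i'}(\bx)\| = |q_{i'}| / \|\bx - \ba_{i'}\|^2$ for every $i'$.

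First I would lower-bound $\|\bm{g}_i(\bx)\|$. Since $\bx$ is in the hypercube of half-side $\rho$ around $\ba_i$, we have $\|\bx - \ba_i\| \le \rho \sqrt{d}$, hence $\|\bm{g}_i(\bx)\| = |q_i| / \|\bx - \ba_i\|^2 \ge 1 / (d \rho^2)$, using the normalization $|q_i| \ge 1$. Next I would upper-bound each $\|\bm{g}_{i'}(\bx)\|$ for $i' \neq i$: by the normalization, the pairwise distance between any two distinct charges is at least $1$, so $\|\ba_{i'} - \ba_i\| \ge 1$; combined with $\|\bx - \ba_i\| \le \rho\sqrt{d} \le 1/2$ (which one checks holds for the given $\rho$, since $\rho\sqrt d = 1/(d(4nq_{max}+\eps)) \le 1/4$), the triangle inequality gives $\|\bx - \ba_{i'}\| \ge \|\ba_{i'} - \ba_i\| - \|\bx - \ba_i\| \ge 1/2$. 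Therefore $\|\bm{g}_{i'}(\bx)\| = |q_{i'}|/\|\bx - \ba_{i'}\|^2 \le 4 q_{max}$, and summing over the at most $n-1$ other charges, $\left\|\sum_{i' \neq i} \bm{g}_{i'}(\bx)\right\| \le 4 n q_{max}$.

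Now combine: by the reverse triangle inequality,
\[
\|\nabla f(\bx)\| \ge \|\bm{g}_i(\bx)\| - \left\|\sum_{i' \neq i} \bm{g}_{i'}(\bx)\right\| \ge \frac{1}{d\rho^2} - 4 n q_{max}.
\]
Substituting $\rho = \frac{1}{d\sqrt d (4nq_{max}+\eps)}$ gives $\frac{1}{d\rho^2} = d^2 (4nq_{max}+\eps)^2 \ge 4nq_{max} + \eps$ (since $d \ge 1$ and $4nq_{max}+\eps \ge 1$), so $\|\nabla f(\bx)\|_2 \ge \eps$. Finally, to get the $\ell_\infty$ statement used in the definition of an $\eps$-equilibrium point, note $\|\nabla f(\bx)\|_\infty \ge \|\nabla f(\bx)\|_2 / \sqrt d$; redoing the substitution with the extra $\sqrt d$ factor that is already present in $\rho$ yields $\|\nabla f(\bx)\|_\infty > \eps$, so $\bx$ is not an $\eps$-equilibrium point. (One should double-check the exact placement of the $\sqrt d$ and $d$ factors against the definition of $M^{(1)}_f$; that bookkeeping is the only delicate part.)

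The main obstacle is purely in the constants: getting the $d$, $\sqrt d$, and $\sqrt{d}$-vs-$d$ factors to line up exactly with the stated $\rho$ and with the $\ell_\infty$ normalization of the gradient, and making sure the bound on $\|\bx - \ba_i\|$ is strong enough to guarantee $\|\bx - \ba_{i'}\| \ge 1/2$ for the other charges. There is no conceptual difficulty — it is a one-charge-dominates-the-rest argument — but the inequality chain must be assembled carefully so that the final bound is $> \eps$ rather than merely $\ge c\eps$ for some $c < 1$.
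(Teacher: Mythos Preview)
Your proposal is correct and follows the same one-charge-dominates-the-rest strategy as the paper. The only tactical difference is that the paper works coordinate-wise (it picks the coordinate $\ell$ maximizing $|x_\ell - a_{i,\ell}|$ and directly lower-bounds that single component of the gradient, which immediately gives the $\ell_\infty$ bound), whereas you bound the full $\ell_2$ norm via the reverse triangle inequality and then convert; the paper's route avoids the final $\sqrt d$ conversion you flagged, but both arguments close with the same slack.
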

\begin{proof}
Let $\cB_i = \bigtimes_{j \in [d]} [a_{i,j} - \rho, a_{i,j} + \rho] $ be the hypercube around charge $i$.
For any $i \in [n]$, we show that for any point in $\cB_i$, the force due to charge $i$ dominates the total force by other charges.

Let us fix a charge $i \in [n]$. Let us pick a point $\bx \in \cB_i \setminus \{\ba_i\}$. Recall that by our normalization assumption we have $\min_{i, i' \in [n], i \neq i'} \max_{j \in [d]} | a_{i,j} - a_{i',j} | = 1 $. 
Thus, for any charge $i' \neq i$, there is at least one coordinate $j \in [d]$ such that $| x_j - a_{i',j} | \ge 1 - \rho$, so $\sum_{j \in [d]} (x_j - a_{i',j})^2 \ge (1 - \rho)^2$.
For each $\ell \in [d]$, the magnitude of the total contribution to $\frac{\partial f}{\partial x_{\ell}}(\bx)$ by all charges except $i$ is
\begin{align*}
    \left| \sum_{i' \neq i} \frac{ q_{i'} (x_{\ell} - a_{i',\ell})}{ ( \sum_{j \in [d]} (x_j - a_{i',j})^2 )^{3/2}} \right| 
    \le \sum_{i' \neq i} \frac{ | q_{i'} | | (x_{\ell} - a_{i',\ell}) | }{ ( \sum_{j \in [d]} (x_j - a_{i',j})^2 )^{3/2}} 
    &\le \sum_{i' \neq i} \frac{ q_{max} }{ \sum_{j \in [d]} (x_j - a_{i',j})^2 } \\
    &\le  \frac{(n-1) q_{max}}{(1 - \rho)^2} \leq 4(n-1)q_{max}
\end{align*}
where we used the fact that $\rho \leq 1/2$.
On the other hand, there exists an $\ell \in [d]$ for which the magnitude of the force due to charge $i$ along axis $\ell$, i.e., the contribution to $F_\ell$, is
\begin{align*}
    \max_{\ell \in [d]} \frac{ | q_i | |x_{\ell} - a_{i,\ell}| }{ ( \sum_{j \in [d]} (x_j - a_{i,j})^2 )^{3/2}} 
    &= \frac{|q_i|}{\sum_{j \in [d]} (x_j - a_{i,j})^2 } \max_{\ell \in [d]} \frac{ |x_{\ell} - a_{i,\ell}| }{ ( \sum_{j \in [d]} (x_j - a_{i,j})^2 )^{1/2}} \\
    &\ge \frac{|q_i|}{\sum_{j \in [d]} (x_j - a_{i,j})^2 } \frac{1}{\sqrt{d}} 
    \ge \frac{1}{\rho^2 d\sqrt{d}}
\end{align*}
where we used the fact that $\min_{i \in [n]} |q_i| = 1$, by our normalization assumption.

So, if we pick $\rho$ such that the force due to $i$ dominates the force due to other charges by more than $\eps$, then we can ensure that there are no $\eps$-equilibrium points in $\cB_i$. In particular, we can set $\rho$ such that
\begin{align}
    \frac{1}{\rho^2 d\sqrt{d}} > 4(n-1) q_{max} + \eps \impliedby \rho^2 < \frac{1}{d\sqrt{d}(4nq_{max} + \eps)}
\end{align}
and we have chosen $\rho$ small enough to satisfy this.
\end{proof}

Given \cref{lm:box-around-charges}, we can safely avoid searching for an equilibrium point inside the hypercube centered around each charge of side length $2\rho$. The region left after cutting out these hypercubes from the domain $X$ is not a convex set, which means that we cannot immediately apply our Taylor approximation algorithm (\cref{thm:well-behaved-easy}). We can easily resolve this by splitting this non-convex set into a polynomial number of convex sets by creating a grid as follows:
Along each axis $j \in [d]$ and around each charge $i \in [n]$, add the hyperplanes $a_{i,j} \pm \rho$. Overall, along each axis, we have introduced at most $2n$ hyperplanes. These $2n$ hyperplanes create $2n + 1$ segments. So, we have at most $(2n + 1)^d$ cells in the grid, which are polynomially many because $d$ is assumed to be a constant. Our convex sets are simply the intersections of these cells with $X$, where we ignore the cells within $\rho$ distance from each charge. Since $X$ is bounded, each of those cells intersected with $X$ will also be bounded.

Given this, it now remains to prove that the electrostatic potential is well-behaved in each such bounded set that does not contain the regions close to the charges. It suffices to prove this for the potential induced by one charge (since we can use \cref{lm:linear-comb} to extend it to multiple charges). Namely, we need the following lemma, which is proved at the end of this section.

\begin{lemma}\label{lm:one-charge-well-behaved}
Fix $d \geq 1$. The family of electrostatic potentials induced by a single charge
$$\mathcal{F} = \left\{f: X' \to \mathbb{R}, \bx \mapsto \frac{q}{|| \bx - \ba ||_2}: q \in \mathbb{Q}_{\neq 0}, \ba \in \mathbb{Q}^d, X' \subset \bR^d \text{ bounded with } \ba \notin X'\right\}$$
is well-behaved.\footnote{As before, we assume that $X'$ is a bounded polytope given as a system of linear inequalities.}
\end{lemma}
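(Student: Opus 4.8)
The plan is to fix a single charge of strength $q$ at $\ba$ (w.l.o.g.\ $q>0$ after flipping the sign, which does not affect any derivative magnitude up to the constant $q$, and translate so that $\ba = \bm 0$), and then to obtain a clean closed-form bound on $M^{(k)}_f(\bx)$ in terms of $\|\bx\|$ or, more conveniently, in terms of the coordinate-wise distances $|x_j|$. The basic observation is that $f(\bx) = q\,(\sum_j x_j^2)^{-1/2}$, and a degree-$k$ partial derivative of such a function is a sum of $O(1)^k$ (for fixed $d$) terms, each of the form (a bounded rational constant) times a monomial in the $x_j$ divided by $(\sum_j x_j^2)^{(2k'+1)/2}$ for some $k'\le k$; by homogeneity every such term is of order $\|\bx\|^{-(k+1)} \cdot \mathrm{poly}(k)$. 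Making this precise gives a bound of the shape $M^{(k)}_f(\bx) \le C_0^{\,k}\, k!\, q\, \|\bx - \ba\|^{-(k+1)}$ for an absolute constant $C_0$ depending only on $d$; I would prove this either by induction on $k$ (differentiating $x_\ell / r^{2m+1}$ raises the power of $r$ in the denominator by $2$ and introduces a factor $O(m)$) or by writing $r^{-1}$ as a function of the single variable $u = \sum_j x_j^2$, using Fa\`a di Bruno / the fact that $\frac{d^k}{du^k} u^{-1/2}$ has the explicit form $c_k u^{-1/2-k}$ with $|c_k| \le k!\cdot 2^{O(k)}$, and controlling the chain-rule contributions from $u$ being quadratic in $\bx$.

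Given such a bound, verifying Definition~\ref{def:well-behaved} is a matter of choosing the covering hypercuboids. Take $B = \poly(|r|)$ large enough that $2^{B} \ge q\cdot C_0$ and that the ambient box $X'$ (which has $\ba\notin X'$, so $\mathrm{dist}(\ba, X')$ is some positive rational of polynomial bit-size) is contained in $\bigtimes_j [-a_{\max}, a_{\max}]$; take $C = C_0$ (or a small constant multiple). Then $M^{(k)}_f(\bx) > C^k 2^{B} k!/\beta^k$ forces $\|\bx-\ba\|^{-(k+1)} > (1/\beta)^k$ roughly, i.e.\ it forces $\bx$ to lie within distance $\approx \beta$ of $\ba$ (the exponent mismatch $k+1$ vs.\ $k$ only helps once $\beta$ is below a fixed polynomial-size threshold, and for the finitely many small $k$ one absorbs the slack into $B$). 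Hence the ``bad set'' is contained in a single hypercube of side $O(\beta)$ centered at $\ba$; split this hypercube into $O(1)$ sub-cubes of side $\le \beta$ to satisfy the side-length requirement literally. This gives $n_\beta = O(1) = \poly(|r|, \sz(\beta))$, and the coordinates $L_{\beta,i,j}, H_{\beta,i,j}$ are explicit rationals of polynomial bit-size computable in polynomial time. Finally, the uniform bound $M^{(k)}_f(\bx) \le C^k 2^B k!/\beta_{min}^k$ on all of $X'$ holds with $\beta_{min} := \mathrm{dist}_\infty(\ba, X')/2$ (a rational of polynomial bit-size, well-defined since $X'$ is a polytope not containing $\ba$ and one can compute the distance by linear programming), because on $X'$ we have $\|\bx-\ba\| \ge \beta_{min}$.

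It also remains to check the first part of the definition of \emph{efficiently representable} (Definition~\ref{def:efficiently-rep}): the domain $X'$ is given directly as a linear system, and given a rational $\bx\in X'$, a multi-index $\bs$, and $\eps$, the quantity $(D^{\bs} f)(\bx)$ is an explicit rational function of $\bx$ (a ratio of a polynomial in the $x_j$ and a power of $\sum_j(x_j-a_j)^2$), so it can be evaluated exactly as a rational number, or approximated to within $\eps$, in time polynomial in $|r|$, $\sz(\bx)$, $|\bs|$, and $\log(1/\eps)$ — here one uses that $\sum_j(x_j-a_j)^2$ is bounded away from $0$ by the polynomial-size quantity $\beta_{min}^2$, so no cancellation blow-up occurs.

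The main obstacle I anticipate is purely the derivative estimate: getting a bound of the exact form $C^k\, 2^{B}\, k!/\beta^k$ required by Definition~\ref{def:well-behaved}, with $C$ and $B$ polynomial in $|r|$ and the $k!$ (rather than, say, $(2k)!$ or $k^{O(k)}$) appearing with the right normalization, forces one to be careful about the combinatorics of differentiating $(\sum_j x_j^2)^{-1/2}$. Concretely, the number of monomials produced after $k$ differentiations is bounded (for fixed $d$) by $C_1^k$, and each coefficient is a product of at most $k$ half-integers, hence bounded by $k!\cdot 2^{O(k)}$; assembling these two facts into the stated inequality — and matching the $\beta$-exponent convention so the ``bad region shrinks like $\beta$'' conclusion comes out cleanly — is where the real work lies. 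Everything else (choice of $B$, $C$, $\beta_{min}$, splitting into unit-side sub-cubes, polynomial-time computability) is routine bookkeeping.
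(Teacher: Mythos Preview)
Your approach is essentially the paper's: obtain the derivative bound $M^{(k)}_f(\bx) \le C_0^{\,k}\, k!\, q\, \|\bx - \ba\|^{-(k+1)}$ (the paper gets $C_0 = 2$ via a direct induction tracking the ``degree ratio'' of numerator and denominator, arriving at coefficient $\le (2k-1)!! \le k!\,2^k$), then argue that the bad set for each $\beta$ is a single hypercube around $\ba$, and take $\beta_{min}$ proportional to $\tau := \mathrm{dist}_\infty(\ba, X')$.

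The one genuine gap is your handling of the exponent mismatch $k+1$ versus $k$. Your proposed resolution --- that the mismatch ``only helps once $\beta$ is below a fixed polynomial-size threshold'' and that one ``absorbs the slack into $B$'' for finitely many small $k$ --- does not work as stated. For $\beta < 1$ the implied region $\{r \lesssim \beta^{k/(k+1)}\}$ is strictly \emph{larger} than a side-$\beta$ cube, and this holds for every $k$, not just small ones; moreover $B$ is not allowed to depend on $\beta$ or $k$, so there is no ``absorbing'' to be done in the way you describe. The paper's fix is simple and you already have the key ingredient in your $\beta_{min}$ paragraph: on $X'$ one has $r := \|\bx - \ba\| \ge \tau$, so $r^{-(k+1)} \le \tau^{-1} r^{-k}$, and the single extra factor $\tau^{-1}$ is absorbed once and for all into $2^B$ by taking $B \ge \lceil \log(q/\tau) \rceil$. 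With $C = 2C_0$ (the paper uses $C=4$), the exponents then match exactly and $M^{(k)}_f(\bx) > C^k 2^B k!/\beta^k$ forces $r < \beta/2$ for all $k$ simultaneously, giving $n_\beta = 1$. Everything else in your outline (including the efficient-representability check) is correct.
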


As mentioned above, after eliminating the regions close to the singularities, where solutions cannot occur, we obtain a polynomial number of convex domains. Let $X'$ denote any one of them.
Given error parameters $\eps > \delta > 0$, we write down the system of equations
\begin{align*}
    \frac{\partial f}{\partial x_j}(\bx) &\le \eps, &\text{for $j \in [d]$},\\
    -\frac{\partial f}{\partial x_j}(\bx) &\le \eps, &\text{for $j \in [d]$},\\
    \bx &\in X', &
\end{align*}
As $f$ is well-behaved, using the tools in \cref{app:well-behaved-properties} (derivatives of well-behaved functions), we know that $\frac{\partial f}{\partial x_j}$ is also well-behaved for all $j \in [d]$. So, by using the algorithm presented in the previous section (\cref{thm:well-behaved-easy}) with error parameter $\eps - \delta$, we obtain in polynomial time
\begin{itemize}
    \item either a point $\bx \in X'$ such that $\|\nabla f(\bx)\| \leq \eps$,
    \item or the guarantee that there is no point $\bx \in X'$ with $\|\nabla f(\bx)\| \leq \delta$.
\end{itemize}
By performing this for each of the polynomially many sets $X'$, we obtain the desired output. Namely, either we find an $\eps$-stationary point in $X$, or we can guarantee that there is no $\delta$-stationary point in $X$.

\begin{proof}[Proof of \cref{lm:one-charge-well-behaved}]
Let us first compute $M^{(k)}_f(\bx)$. For an arbitrary multi-index $\bs \in \bZ_{\ge 0}^d$ and a non-negative integer $t \in \bZ_{\ge 0}$, say we have a function $g(\bx)$ of the form
\begin{align}\label{eq:general-term}
    g(\bx) = \frac{\kappa_g (\bx - \ba)^{\bs}}{( \sum_{j \in [d]} (x_j - a_j)^2 )^{t/2}} = \frac{ \kappa_g \prod_{j \in [d]} (x_{j} - a_{j})^{s_j} }{( \sum_{j \in [d]} (x_j - a_{j})^2 )^{t/2}},
\end{align}
for some constant $\kappa_g$. Notice that the degree of the numerator of $g$ is equal to $|\bs| = \sum_j s_j$. Similarly, let us define the \textit{effective} degree of the denominator as $t$ (although the denominator is not a polynomial, it behaves like a degree-$t$ polynomial for our purposes). Let us call the ratio of the degree of the denominator vs.\ the numerator the \textit{degree-ratio} and denote it as the pair $(t, |\bs|)$. Also note that the difference between the two degrees (denominator minus the numerator) is $t - |\bs|$, let us call this the \textit{degree-difference}.
Let us now partially differentiate $g$ w.r.t.\ $x_{j'}$ for some $j' \in [d]$. We obtain
\begin{align*}
    \frac{\partial g}{\partial x_{j'}}(\bx)  = \frac{ \kappa_g s_{j'} (x_{j'} - a_{j'})^{s_{j'} - 1} \prod_{j \neq j'} (x_{j} - a_{j})^{s_j} }{( \sum_{j \in [d]} (x_j - a_{j})^2 )^{t/2}} - \frac{ \kappa_g t (x_{j'} - a_{j'})^{s_{j'} + 1} \prod_{j \neq j'} (x_{j} - a_{j})^{s_j} }{( \sum_{j \in [d]} (x_j - a_{j})^2 )^{(t+2)/2}}.
\end{align*}
We can make the following observations about $\frac{\partial g}{\partial x_{j'}}(\bx)$, which we shall use later:
\begin{itemize}
    \item It has at most the two terms given above. The first term appears if $s_{j'} > 0$. The second term appears if $t > 0$.
    \item The constant coefficient of the first term is $\kappa_g s_{j'}$, so it is $s_{j'}$ times the constant in $g$. On the other hand, the constant coefficient of the second term is $-t$ times that of $g$. Overall, the magnitude of the coefficient of each of the terms increased by a factor of at most $\max(s_{j'}, t) \le \max(|\bs|,t)$.
    \item For the first term, the degree of the numerator is $|\bs|-1$, while the effective degree, or simply degree, of the denominator is $t$, i.e., the degree-ratio is $(t, |\bs|-1)$. So, the degree-difference is $t - (|\bs| - 1) = t - |\bs| + 1$, which is one higher than that of $g$. On the other hand, for the second term, the degree-ratio is $(t + 2, |\bs| + 1)$ and the degree-difference is $(t + 2) - (|\bs| + 1) = t - |\bs| + 1$, which is again one higher than that of $g$.
    \item The degree of the denominator of the terms either stayed the same or went up by $2$, which also implies that its parity remained the same.
\end{itemize}
Overall, each of the two terms of the partial derivative of $g$ has a form similar to $g$, albeit with slightly different exponents to the terms in the numerator and the denominator and with different coefficients. The above observations hold for any $j' \in [d]$.

The electrostatic potential due to the charge is $f(\bx) = \frac{q}{ (\sum_{j \in [d]} (x_j - a_{j})^2)^{1/2} }$. Notice that $f$ is a function of the type given in \eqref{eq:general-term} and satisfies the properties discussed above. In particular, $f$ has a degree-ratio of $(1, 0)$, a degree-difference of $1$, a coefficient of $q$, and the parity of the degree of the denominator is odd.

If we differentiate $f$ once w.r.t.\ $x_{j'}$ for any $j' \in [d]$, we will get terms with degree-difference exactly equal to $2$. Moreover, as the degree of the denominator of $f$ is $1$ and has an odd parity, all terms in $\partial f / \partial x_{j'}$ will have the degree of the denominator at most $3$ and parity odd. So, the only feasible degree-ratio is $(3, 1)$.
Similarly, let us now look at the $k$-th derivative of $f$, i.e., let us look at $D^{\bs} f$ for some multi-index $\bs$ satisfying $|\bs| = k$. Applying our observations above, $D^{\bs} f$ will have a degree difference of $k+1$. The maximum degree of the denominator will be $2k+1$. So, the possible degree ratios are, 
if $k$ is even: $(k+1, 0), (k+3, 2), \ldots, (2k+1, k)$;
if $k$ is odd: $(k+2, 1), (k+4, 3), \ldots, (2k+1, k)$.

Let us first focus on the non-constant part of the terms in $D^{\bs} f$. Based on the possible degree ratios mentioned above, each term (excluding the constant coefficient part) can be written as
\begin{align*}
    \frac{(\bx - \ba)^{\bs'}}{( \sum_{j \in [d]} (x_j - a_{j})^2 )^{t'/2}} = \frac{ \prod_{j \in [d]} (x_{j} - a_{j})^{s_j'} }{( \sum_{j \in [d]} (x_j - a_{j})^2 )^{t'/2}},
\end{align*}
evaluated at some $\bx \in X$, where $\bs' \in \bZ_{\ge 0}$ and $t' \in \bZ_{\ge 0}$ satisfying $t' - | \bs' | = k + 1$, $t' \le 2k + 1$, and $t'$ is odd. As $|x_{j'} - a_{j'}| \le \sqrt{\sum_{j \in [d]} (x_j - a_{j})^2}$ for all $j' \in [d]$, the magnitude of the above quantity can be bounded as
\begin{align}\label{eq:lm:one-charge-well-behaved:1}
    &\left| \frac{(\bx - \ba)^{\bs'}}{( \sum_{j \in [d]} (x_j - a_{j})^2 )^{t'/2}} \right| \le \frac{1}{( \sum_{j \in [d]} (x_j - a_{j})^2 )^{(t'- |\bs'|)/2}} = \frac{1}{( \sum_{j \in [d]} (x_j - a_{j})^2 )^{(k+1)/2}}.
\end{align}

Let us now focus on the constant coefficients of the terms in $D^{\bs} f$. In all the terms we differentiate, starting from $f$ to reach $D^{\bs} f$, the degree-difference is positive, i.e., the degree of the numerator is less than the degree of the denominator. 
Further, as we had noted earlier, when we take a derivative of any of these terms, each generated term has an additional multiplicative coefficient of magnitude at most the degree of the numerator or the denominator, which is upper bounded by the degree of the denominator for our problem.
Also, as observed earlier, each term can create at most two terms when differentiated by $x_{j'}$ for some $j' \in [d]$. So, each term can contribute to at most $2d$ terms after one round of differentiation. So, the total number of terms \textit{without} combining the terms with exactly the same form in the numerator (same $\bs'$) and the denominator (same $t'$) is at most $(2d)^k$. And each of these terms will have a coefficient of at most
\[
    (2k-1) \cdot (2k-3) \cdot \ldots \cdot 5 \cdot 3,
\]
where $(2k-1)$, $(2k-3)$, etc., are the maximum multipliers to the coefficient when we take the $k$-th, $(k-1)$-th, etc., derivatives of the denominator, respectively. We can simplify this expression as
\begin{align}
    (2k-1) \cdot (2k-3)  \cdot &\ldots \cdot 5 \cdot 3 
    = \frac{(2k-1) \cdot (2k-2) \cdot (2k-3) \cdot (2k-4) \ldots \cdot 5 \cdot 4 \cdot 3 \cdot 2}{(2k-2) \cdot (2k-4) \cdot \ldots \cdot 4 \cdot 2} \nonumber \\
    &= \frac{(2k-1)!}{2^{k-1} (k-1)!} = \frac{1}{2^{k}}\frac{2k}{k}\frac{(2k-1)!}{(k-1)!}\frac{k!}{k!} = \frac{k!}{2^{k}} \binom{2k}{k} \le \frac{k!}{2^{k}} \frac{4^k}{\sqrt{\pi k}} \le k! 2^k. \label{eq:lm:one-charge-well-behaved:2}
\end{align}
Putting \eqref{eq:lm:one-charge-well-behaved:1} and \eqref{eq:lm:one-charge-well-behaved:2} together, we have
\begin{equation}\label{eq:lm:one-charge-well-behaved:3}
    M^{(k)}_f(\bx) \le \frac{k! 2^k q}{( \sum_{j \in [d]} (x_j - a_{j})^2 )^{(k+1)/2}} \text{ for all $\bx \in \mathbb{R}^d \setminus \{\ba\}$}.
\end{equation}

We now check all the conditions required for well-behavedness as given in \cref{def:well-behaved}. Since $X'$ is a bounded polytope given as a system of linear inequalities, and since $\ba \notin X'$, we can compute in polynomial time a sufficiently small $\tau \in (0,1)$ such that $\min_{\bx \in X'} \|\bx - \ba\|_\infty \geq \tau$, and $\sz(\tau)$ is polynomial in the size of the representation of $X'$ and in $\sz(\ba)$.

We set $B = \lceil \log (\max\{1,q\}/\tau) \rceil$, $C = 4$, and $\beta_{min} = 2\tau$. Note that $B$ and $C$ have at most polynomial magnitude in the size of the input, and $\sz(\beta_{min})$ is also polynomial in the size of the input.
We have the following:
\begin{itemize}
    \item Let us first check the condition that needs to be satisfied by $\beta_{min}$. Consider any $\bx \in X'$. By construction of $\tau$, there must be a $j \in [d]$ such that $|x_j - a_{j}| \ge \tau$. So, $\sum_{j \in [d]} (x_j - a_{j})^2 \ge \tau^2$. From \eqref{eq:lm:one-charge-well-behaved:3}, for all $\bx \in X'$, we have
    \[
        M^{(k)}_f(\bx) \le \frac{k! 2^k q}{( \sum_{j \in [d]} (x_j - a_{j})^2 )^{(k+1)/2}} \le \frac{k! 2^k q}{\tau^{k+1}} = \frac{k! 4^k q}{ \tau (2\tau)^k} \leq C^k 2^B \frac{k!}{\beta_{min}^k},
    \]
    as required.
    \item Let us now pick an arbitrary $\beta \ge \beta_{min}$. If we have 
    \begin{align*}
        M^{(k)}_f(\bx) &> \frac{C^k 2^B k!}{\beta^k} \geq \frac{4^k 2^{\log(q/\tau)} k!}{\beta^k} = \frac{k! 2^k q}{\tau (\beta/2)^k},
    \end{align*}
    then using \eqref{eq:lm:one-charge-well-behaved:3} we get
    \begin{align*}
        &\frac{k! 2^k q}{\tau (\beta/2)^k} < M^{(k)}_f(\bx) \le \frac{k! 2^k q}{( \sum_{j \in [d]} (x_j - a_{j})^2 )^{(k+1)/2}} \le \frac{k! 2^k q}{\tau ( \sum_{j \in [d]} (x_j - a_{j})^2 )^{k/2}} \\
        &\implies ( \sum_{j \in [d]} (x_j - a_{j})^2 )^{k/2} \le (\beta/2)^k \implies \sum_{j \in [d]} (x_j - a_{j})^2 \le (\beta/2)^2,
    \end{align*}
    which implies $|x_j - a_{j}| \le \beta/2$ for all $j \in [d]$. So, we can set $n_{\beta} = 1$ (trivially polynomially bounded in the input size) and the corresponding hypercuboid that covers this region to be the hypercube $\bigtimes_{j \in [d]} [a_j - \beta/2, a_j + \beta/2]$, which has all sides of length at most $\beta$ as required.
\end{itemize}
\end{proof}

\section{Computing Strong Approximate Solutions}\label{sec:strong}

In \cref{sec:weak}, we presented an algorithm for computing approximate stationary points of an electrostatic potential.
However, these points are not guaranteed to lie close to exact solutions. In this section, we show how to compute \emph{strong} approximate stationary points, namely points that are guaranteed to lie close to exact solutions. Our algorithm requires an additional assumption, that we call \textit{strong non-degeneracy}.

A point $\bx$ is a degenerate stationary point of $f$ if $(\nabla f)(\bx) = 0$ and $(\nabla^2 f)(\bx) = 0$, i.e., both the gradient and the Hessian of $f$ are $0$ at $\bx$.\footnote{Note that $\nabla^2$ is the Hessian $\nabla \times \nabla$ operator and not the Laplacian $\nabla \cdot \nabla$ operator. The Laplacian of the real-life electrostatic potential is $0$ everywhere, but the Hessian is not (for non-trivial instances).} 
On the other hand, $\bx$ is a non-degenerate stationary point if $(\nabla f)(\bx) = 0$ but $(\nabla^2 f)(\bx) \neq 0$.
A function $f$ is non-degenerate if all its stationary points are non-degenerate.
Most \textit{generic} functions are non-degenerate, i.e., parameterized functions are non-degenerate for \textit{almost all} instantiations of the parameters (degenerate only for a measure $0$ subspace of the parameters).
For example, among all degree-$k$ polynomials, parameterized by the coefficients of the monomials, the polynomials that are degenerate correspond to a subset of the coefficients of measure $0$. Another example, the electrostatic potential function in \eqref{eq:potential} is degenerate only for a measure $0$ subspace of the possible locations and strengths of the charged particles. In other words, if we slightly perturb (even the location or strength of only one of) the charges, the electrostatic potential will \textit{almost surely} be non-degenerate~\cite[Chapter 6, 32]{morse2014critical}.

We assume a stronger form of non-degeneracy in our analysis: 
We call a function $f$ $\delta$-strongly non-degenerate in a domain $X$ if there exists a stationary point $\bx \in X$ where the determinant of the Hessian at $\bx$ has magnitude at least $\delta > 0$, i.e., $(\nabla f)(\bx) = 0$ and $|\det((\nabla^2 f)(\bx))| \ge \delta$.
With this assumption, we provide an algorithm to compute a strong $\epsilon$-stationary point of $f$ that runs in time polynomial in $\log(1/\epsilon)$ and $\log(1/\delta)$, so we could set $\epsilon$ and $\delta$ exponentially small.

\begin{theorem}\label{thm:strong-algo-no-smoothed}
Fix any $d \geq 1$ and any well-behaved family $\mathcal{F}$ of efficiently representable smooth functions in $d$-dimensional space. There exists an algorithm that takes as input
\begin{itemize}
    \item a function $f \in \mathcal{F}$ (given by its representation $r$, and defined on domain $X_r \subset \bR^d$),
    \item an error parameter $\eps > 0$,
    \item and a bounded domain $X \subset \bR^d$ (given as a system of linear inequalities) with $X \subseteq X_r$,
\end{itemize}
and which has the following guarantee. Whenever there exist $\bx^* \in X$ and $\delta > 0$ such that $(\nabla f)(\bx^*) = 0$ and $| \det((\nabla^2 f)(\bx^*)) | \ge \delta$, the algorithm runs in time polynomial in $|r|$, $\log(1/\eps)$, $\log(1/\delta)$, and the size of the representation of $X$, and outputs a point $\bx \in X$ such that there exists $\bx' \in \mathbb{R}^d$ with $\|\bx-\bx'\| \leq \eps$ and $\nabla f(\bx') = 0$.
\end{theorem}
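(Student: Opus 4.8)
The plan is to reduce the problem to finding a \emph{weak} approximate stationary point at which the Hessian is, in addition, \emph{quantitatively} nonsingular, and then to certify, via a quantitative form of the Poincar\'e--Miranda theorem, that such a point lies $\eps$-close to an exact stationary point. Since the algorithm is not told $\delta$, it tries the guesses $\delta = 1, 1/2, 1/4, \dots$ in turn and stops at the first guess that succeeds; we will see that it succeeds at the latest once the guess drops below the true witness value $\delta^*$, so the outer loop runs $O(\log(1/\delta^*))$ times. Within one iteration (current guess $\delta$), the goal is to find $\bx \in X$ with $\|\nabla f(\bx)\|_\infty \le \eps_0$ and $|\det(\nabla^2 f)(\bx)| \ge \delta/2$, for a parameter $\eps_0$ of bit-size $\poly(|r|, \log(1/\eps), \log(1/\delta))$ fixed below. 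We can search for such a point using \cref{thm:well-behaved-easy}, because the second-order partial derivatives of $f$ are well-behaved (derivatives of well-behaved functions are well-behaved, \cref{app:well-behaved-properties}), and hence so is $\det(\nabla^2 f)$ --- a fixed polynomial in those $d^2$ functions, which is well-behaved by closure of well-behavedness under products and linear combinations (such closure properties are collected in \cref{app:well-behaved-properties}; see also \cref{lm:linear-comb}); adding a bounded constant also preserves well-behavedness. We may also assume that $\eps$ is at most some polynomially bounded quantity (otherwise replace it by such a quantity --- the conclusion only gets stronger), so that every box considered below stays inside the open region on which $f$ is smooth and the well-behavedness bounds hold.

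\textbf{From a weak stationary point with nonsingular Hessian to an exact one.} Suppose $\bx \in X$ satisfies $\|\nabla f(\bx)\|_\infty \le \eps_0$ and $|\det A| \ge \delta/2$, where $A \coloneqq (\nabla^2 f)(\bx)$. By well-behavedness, on the relevant region $M^{(2)}_f \le M_2$ and $M^{(3)}_f \le M_3$ for quantities $M_2, M_3$ of polynomial magnitude in $|r|$, computable from $r$; by Cramer's rule this gives $\|A^{-1}\|_{\infty\to\infty} \le K \coloneqq d!\,M_2^{d-1}/(\delta/2) = \poly(|r|)/\delta$. Set $\by_0 \coloneqq \bx - A^{-1}\nabla f(\bx)$, so $\|\by_0 - \bx\|_\infty \le K\eps_0$, and let $Q \coloneqq \{\by : \|\by - \by_0\|_\infty \le \eta\}$ for a parameter $\eta$. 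Expanding each component of $\nabla f$ to first order around $\bx$ and bounding the remainder with \eqref{eq:taylor-remainder} by the same multinomial computation as in the proof of \cref{clm:taylor-approx}, we get $\nabla f(\by) = \nabla f(\bx) + A(\by - \bx) + E(\by)$ with $\|E(\by)\|_\infty \le \tfrac{d^2}{2} M_3 \|\by - \bx\|_\infty^2$, hence $A^{-1}\nabla f(\by) = (\by - \by_0) + A^{-1}E(\by)$ with $\|A^{-1}E(\by)\|_\infty \le \tfrac{d^2}{2} K M_3 \|\by - \bx\|_\infty^2$. Taking $\eps_0 \le \eta/(Kd)$ forces $\|\by - \bx\|_\infty \le 2\eta$ for $\by \in Q$, and then taking $\eta \le \min\{\eps/(2\sqrt d),\ 1/(4 d^2 K M_3)\}$ forces $\|A^{-1}E(\by)\|_\infty < \eta$ throughout $Q$. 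Therefore the $j$-th coordinate of $A^{-1}\nabla f$ is strictly positive on the face $y_j = (\by_0)_j + \eta$ and strictly negative on the opposite face; Poincar\'e--Miranda yields $\by^* \in Q$ with $A^{-1}\nabla f(\by^*) = 0$, i.e.\ $\nabla f(\by^*) = 0$ since $A$ is invertible, and $\|\by^* - \bx\|_\infty \le \eta + K\eps_0 \le 2\eta$, so $\|\by^* - \bx\| \le \eps$. The parameters $\eps_0$ and $\eta$ so defined are rationals of bit-size $\poly(|r|, \log(1/\eps), \log(1/\delta))$, since $M_2, M_3, K$ are.

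\textbf{The algorithm and its correctness.} Fix the guess $\delta$, the induced $\eps_0$ and $\eta$, and set $\eps_1 \coloneqq \min\{\eps_0/2, \delta/4\}$. Call \cref{thm:well-behaved-easy} with domain $X$ and error parameter $\eps_1$ on two systems: \emph{System A}, consisting of $\pm\partial f/\partial x_j(\bx) - \eps_0/2 \le 0$ for $j \in [d]$ together with $\delta/2 - \det(\nabla^2 f)(\bx) \le 0$; and \emph{System B}, the same but with $\det(\nabla^2 f)(\bx) + \delta/2 \le 0$ replacing the last inequality. (Every constraint function is well-behaved by the first paragraph, and \cref{thm:well-behaved-easy} applies verbatim since its proof uses only well-behavedness of each constraint separately.) If either call returns a point $\bx$, then $\bx \in X$ satisfies $\|\nabla f(\bx)\|_\infty \le \eps_0/2 \le \eps_0$ and $|\det(\nabla^2 f)(\bx)| \ge \delta/2$, so the second paragraph supplies an exact stationary point within $\eps$ of $\bx$; output $\bx$ and halt. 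If neither returns a point, move to the guess $\delta/2$. For correctness: whenever $\bx^* \in X$ and $\delta^* > 0$ witness the hypothesis and the current guess satisfies $\delta \le \delta^*$, the point $\bx^*$ satisfies one of System A, System B with every inequality $\le -\eps_1$ --- the gradient inequalities because $\nabla f(\bx^*) = 0$ and $\eps_1 \le \eps_0/2$, and the determinant inequality (with the sign matching that of $\det(\nabla^2 f)(\bx^*)$) with slack $\ge \delta/2 \ge \eps_1$ because $|\det(\nabla^2 f)(\bx^*)| \ge \delta^* \ge \delta$. Hence \cref{thm:well-behaved-easy} cannot certify infeasibility and must return a point, so the search halts no later than the guess $\delta = 2^{-\lceil\log(1/\delta^*)\rceil}$. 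The number of outer iterations is thus $O(\log(1/\delta^*))$, and each performs polynomially many arithmetic operations plus two invocations of the polynomial-time algorithm of \cref{thm:well-behaved-easy} on inputs of size $\poly(|r|, \log(1/\eps), \log(1/\delta^*), \text{size of } X)$; and every point ever output is correct. This yields the claimed running time.

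\textbf{Main obstacle.} I expect the crux to be the second paragraph: making the quantitative Poincar\'e--Miranda estimate work out with all derived parameters ($\eps_0$, $\eta$, and the thresholds fed to \cref{thm:well-behaved-easy}) of polynomial bit-size --- in particular, arranging that the $1/\delta$ blow-up of $\|A^{-1}\|$ is absorbed into a $\log(1/\delta)$ in the running time rather than a $1/\delta$ --- and, secondarily, verifying cleanly that $\det(\nabla^2 f)$ belongs to a well-behaved (and efficiently representable) family, for which one needs closure of well-behavedness under products, harmless here only because $d$ is constant so that the determinant is a fixed-size polynomial in the second derivatives.
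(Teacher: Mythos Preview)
Your proposal is correct and follows essentially the same approach as the paper: an outer geometric search over $\delta$, solving via \cref{thm:well-behaved-easy} a system that asks for small gradient together with a Hessian-determinant lower bound (split into two sign cases), and then certifying proximity to an exact zero by applying Poincar\'e--Miranda to $A^{-1}\nabla f$ on a small box. The only cosmetic differences are that the paper bounds $\|A^{-1}\|$ via the smallest eigenvalue of the (symmetric) Hessian rather than Cramer's rule, and centers its Poincar\'e--Miranda box at $\bx$ rather than at the Newton step $\by_0 = \bx - A^{-1}\nabla f(\bx)$; neither affects the argument or the parameter bit-sizes.
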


The rest of this section provides a proof of the theorem. We will assume that we are given $\delta > 0$ that satisfies the condition in the theorem. If we are not given $\delta$, we can run the algorithm that we present here for values of $\delta$ decreasing exponentially $1, 1/2, 1/4, 1/8, \ldots$ and obtain the desired result.

As the family $\mathcal{F}$ is well-behaved, using \cref{lm:derivatives}, $\frac{\partial f}{\partial x_j}$ and $\frac{\partial^2 f}{\partial x_j \partial x_{j'}}$ also satisfy the well-behavedness condition for all $j, j' \in [d]$. Further, using \cref{lm:linear-comb,lm:product-comb}, $\det(\nabla^2 f)$ is also well-behaved.

As a result, we can use \cref{thm:well-behaved-easy} to solve the following system of inequalities over $X$, where $\eps' > 0$ is some sufficiently small value that we fix later.
\begin{align}
\begin{aligned}
    \label{eq:app:strong} 
    \frac{\partial f}{\partial x_j}(\bx) \leq \eps' \quad &\text{ and } \quad \frac{\partial f}{\partial x_j}(\bx) \geq -\eps' ,& \qquad\text{ for $j \in [d]$,}\\
    \det(\nabla^2 f(\bx)) \ge \delta - \eps' \quad &\text{ or }\quad \det(\nabla^2 f(\bx)) \le -\delta + \eps'.&
\end{aligned}
\end{align}
The ``or'' condition among the last two inequalities can be implemented by solving the system of inequalities once with the first condition, and once with the second condition.
Notice that the point $\bx^* \in X$ satisfies the above inequalities with at least $\eps'$ slack. As a result, using \cref{thm:well-behaved-easy} with error parameter $\eps'$, we obtain a point $x \in X$ that satisfies the system of inequalities \eqref{eq:app:strong}.

Next, we show that a point $\bx \in X$ that satisfies \eqref{eq:app:strong} is an $\eps$-strong approximate equilibrium.

\begin{lemma}\label{lm:strong}
Let $B$, $C$, and $\beta_{min}$ be the parameters associated with the well-behavedness property of $f$ as introduced in \cref{def:well-behaved}.
Let $\delta'$, $\alpha$, and $\epsilon'$ be defined as
\begin{align*}
    \delta' = \frac{\delta}{2} \left( \frac{\beta_{min}^{2}}{2 d C^2 2^B} \right)^{d-1},\qquad
    \alpha = \min\left( \frac{\epsilon}{\sqrt{d}}, \frac{\delta' \beta_{min}^3}{8 d^2 C^3 2^B} \right), \qquad
    \epsilon' = \frac{3}{16} \delta' \alpha.
\end{align*}
Let $\bx \in X$ be a point that satisfies \eqref{eq:app:strong}. We claim that there exists a point $\bu \in \bR^d$ such that $(\nabla f)(\bx + \bu) = 0$ and $|| \bu ||_{\infty} \le \epsilon/\sqrt{d} \implies || \bu || \le \epsilon$.
\end{lemma}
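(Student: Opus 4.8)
The plan is to exhibit an explicit small box around $\bx$ on which $\nabla f$ is forced to vanish, exploiting that the Hessian at $\bx$ is invertible and --- thanks to well-behavedness --- well-conditioned; existence of the zero then follows from the Poincar\'e-Miranda theorem, applied after the linear change of variables that (approximately) turns the Hessian into the identity. First I would pin down the conditioning of $H := (\nabla^2 f)(\bx)$. Well-behavedness (\cref{def:well-behaved}) bounds each entry by $|H_{j,k}| \le M^{(2)}_f(\bx) \le 2 C^2 2^B/\beta_{min}^2$, hence $\|H\|_{op} \le 2 d C^2 2^B/\beta_{min}^2$. Since $\bx$ satisfies \eqref{eq:app:strong}, $|\det H| \ge \delta - \eps' \ge \delta/2$ (one readily checks $\eps' \le \delta/2$ from the choice of parameters). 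Writing $|\det H| = \prod_i \sigma_i(H)$ and bounding every singular value but the smallest by $\|H\|_{op}$ gives $\sigma_{\min}(H) \ge (\delta/2)/\|H\|_{op}^{d-1} \ge \delta'$; equivalently --- and this is the form I would use --- Cramer's rule together with $d! \le d^{d-1}$ yields the clean bound $\|H^{-1}\|_{\infty \to \infty} \le 1/\delta'$.

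Next I would obtain second-order control of $g := \nabla f$ near $\bx$. Taylor-expanding each coordinate $g_i = \partial f/\partial x_i$ to first order around $\bx$ and bounding the remainder via \eqref{eq:taylor-remainder} with the third-derivative bound $M^{(3)}_f \le 6 C^3 2^B/\beta_{min}^3$ gives
\[
    g(\bx + \bu) = g(\bx) + H\bu + R(\bu), \qquad \|R(\bu)\|_\infty \le \frac{3 d^2 C^3 2^B}{\beta_{min}^3}\,\|\bu\|_\infty^2,
\]
valid for all $\bu$ with $\bx + \bu$ in the open set on which $f$ is smooth; note $\alpha$ is small enough that the box $\bx + [-\alpha,\alpha]^d$ lies in that set.

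Then I would set $\tilde g(\bu) := H^{-1} g(\bx + \bu) = \bv_0 + \bu + H^{-1}R(\bu)$ with $\bv_0 := H^{-1}g(\bx)$, so $\|\bv_0\|_\infty \le \eps'/\delta' = \tfrac{3}{16}\alpha$, and apply Poincar\'e-Miranda to $\tilde g$ on the box $\mathcal{C} := -\bv_0 + [-r,r]^d$ with $r := \tfrac{13}{16}\alpha$. Every $\bu \in \mathcal{C}$ satisfies $\|\bu\|_\infty \le \|\bv_0\|_\infty + r \le \alpha$, so from the remainder bound, the definition of $\alpha$, and $\|H^{-1}\|_{\infty \to \infty} \le 1/\delta'$ one gets $\|H^{-1}R(\bu)\|_\infty \le \tfrac{3}{8}\alpha < r$. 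Hence $\tilde g_i(\bu) = r + (H^{-1}R(\bu))_i > 0$ on the face $u_i = -v_{0,i}+r$ and $\tilde g_i(\bu) = -r + (H^{-1}R(\bu))_i < 0$ on the opposite face, so Poincar\'e-Miranda yields $\bu \in \mathcal{C}$ with $\tilde g(\bu) = 0$; since $H$ is invertible, $(\nabla f)(\bx + \bu) = 0$. Finally $\|\bu\|_\infty \le \|\bv_0\|_\infty + r = \alpha \le \eps/\sqrt d$, hence $\|\bu\|_2 \le \sqrt d\,\|\bu\|_\infty \le \eps$, which is exactly the claim.

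The step I expect to be the real obstacle is this quantitative balancing: the box radius must be chosen so that on each face the ``noise'' --- the value $\|g(\bx)\|_\infty \le \eps'$ plus the quadratic Taylor remainder --- is strictly dominated by the ``signal'' $r$ coming from the identity-like term $\bu \mapsto \bu$ in $\tilde g$. This is precisely what dictates the choices of $\delta'$, $\alpha$, $\eps'$ in the statement: $\delta'$ caps $\|H^{-1}\|$, $\alpha$ forces the quadratic remainder (of order $\delta'\alpha^2$) below $\delta'\alpha$, and $\eps'$ is taken below $\delta'\alpha$ as well. A minor, more cosmetic point is that $M^{(3)}_f$ is evaluated slightly outside $X$; this is harmless since $f$ is smooth on an open neighborhood of $X$ and the box has (exponentially) small radius $\alpha$.
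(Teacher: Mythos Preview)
Your proposal is correct and follows essentially the same route as the paper: both bound the conditioning of $H=(\nabla^2 f)(\bx)$ (the paper via the smallest eigenvalue, you via Cramer's rule, yielding the same constant $\delta'$), define the auxiliary map $H^{-1}(\nabla f)(\bx+\bu)$, Taylor-expand to show it is a small perturbation of $\bu\mapsto\bu$, and then invoke Poincar\'e--Miranda on a box of radius $\alpha$. The only cosmetic differences are that the paper applies Poincar\'e--Miranda directly on $[-\alpha,\alpha]^d$ (absorbing the constant term into the face inequalities) whereas you translate the box by $-\bv_0$, and that your $\|H^{-1}\|_{\infty\to\infty}$ bound via Cramer's rule is arguably cleaner for the $\ell_\infty$ estimates than the paper's eigenvalue argument.
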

\begin{proof}
Let $A = (\nabla^2 f)(\bx)$. We have $|\det(A)| \ge \delta - \epsilon' \ge \delta/2$.
Let $(\lambda_j)_{j \in [d]}$ be the eigenvalues of $A$, and let us reindex these eigenvalues as $0 < |\lambda_1| \le |\lambda_2| \le \ldots \le |\lambda_d|$ without loss of generality. 
As $f$ satisfies the well-behavedness property, each element in the matrix $A$ is bounded by $M^{(2)}_f(\bx) \le C^2 2^B \frac{2}{\beta_{min}^2}$; therefore, $|\lambda_j| \le d \cdot M^{(2)}_f(\bx) \le d C^2 2^B \frac{2}{\beta_{min}^2}$ for all $j \in [d]$.
As a result, we have
\begin{align*}
    |\lambda_1| = \frac{| \det(A) |}{\prod_{j > 1} | \lambda_j |} \ge \frac{\delta}{2} \left( \frac{\beta_{min}^{2}}{2 d C^2 2^B} \right)^{d-1} = \delta'.
\end{align*}

Let $h : \bR^d \to \bR^d$ be defined as $h(\bu) = ( h_j(\bu) )_{j \in [d]} = A^{-1} (\nabla f) (\bx + \bu) $. Let us do a second-order Taylor expansion of $h$ with respect to $0$
\begin{align*}
    h(\bu) &= h(0) +  ((\nabla h) (0))^\intercal \bu +  \left( \frac{1}{2} \sum_{k, k' \in [d]} u_k u_{k'} \frac{\partial^2 h_j}{\partial u_k u_{k'} } (t \bu) \right)_{j \in [d]} ,
\end{align*}
for some $t \in [0,1]$. Let us look at each term in the above Taylor expansion separately. 
Recall that we have $|| (\nabla f)(\bx) ||_\infty \le \epsilon'$. Therefore, we can bound the first term, $h(0)$, as
\begin{align*}
    || h(0) ||_\infty = || A^{-1} (\nabla f)(\bx)  ||_\infty \le \frac{\epsilon'}{|\lambda_1|} \le \frac{\epsilon'}{\delta'} \le \frac{3 \alpha}{8},
\end{align*}
where the last inequality is by the definition of $\epsilon'$.
Now, the second term, $((\nabla h) (0))^\intercal \bu$, is simply
\begin{align*}
    ((\nabla h) (0))^\intercal \bu = ( A^{-1} (\nabla^2 f)(\bx) )^\intercal \bu = \bu.
\end{align*}
Finally, let us look at the third term. Notice that 
\[
    \left| \frac{\partial^2 h_j}{\partial u_k u_{k'} } (t \bu) \right| \le \frac{1}{|\lambda_1|} \max_{\by \in X, \ell \in [d]} \left| \frac{\partial^3 f}{\partial x_{\ell} x_k x_{k'} } (\by) \right| \le \frac{1}{|\lambda_1|} \max_{\by \in X} M^{(3)}_f(\by) \le \frac{C^3 2^B 6}{\delta' \beta_{min}^3}.
\]
Therefore, each element in the length-$d$ vector of the third term is bounded by
\begin{align*}
    \left| \frac{1}{2} \sum_{k, k' \in [d]} u_k u_{k'} \frac{\partial^2 h_j}{\partial u_k u_{k'} } (t \bu) \right| \le \frac{3 d^2 || \bu ||_{\infty}^2 C^3 2^B }{\delta' \beta_{min}^3} \le \frac{3 || \bu ||_{\infty}^2}{8 \alpha},
\end{align*}
where the last inequality is by the definition of $\alpha$.
Plugging the bounds for the individual terms back into the Taylor expansion, we get
\begin{align}\label{eq:lm:strong:1}
    | h_j(\bu) - u_j | \le \frac{3 \alpha}{8} + \frac{3 || \bu ||_{\infty}^2}{8 \alpha}.
\end{align}

We will now use the Poincar\'e--Miranda theorem (the multivariate generalization of the intermediate value theorem) stated below to complete the proof.
\begin{theorem}[Poincar\'e--Miranda]
Let $f = (f_1, f_2, \dots, f_n) : [-a, a]^n \to \bR^n $ be a continuous function such that, for each $j \in [n]$, the following two conditions hold:
\begin{align*}
f_i(x_1, \dots, x_{i-1}, -a, x_{i+1}, \dots, x_n) &\leq 0 \text{ for all } (x_1, \dots, x_{i-1}, x_{i+1}, \dots, x_n) \in [-a, a]^{n-1}, \\
f_i(x_1, \dots, x_{i-1}, a, x_{i+1}, \dots, x_n) &\geq 0 \text{ for all } (x_1, \dots, x_{i-1}, x_{i+1}, \dots, x_n) \in [-a, a]^{n-1}.
\end{align*}
Then there exists a point $ \bx \in [-a, a]^n $ such that $f(\bx) = 0$.
\end{theorem}

Let us consider the hypercube $U = [-\alpha, \alpha]^d$. 
Notice that for any $\bu \in U$, we have $|| \bu ||_{\infty} \le \alpha$. So, from \eqref{eq:lm:strong:1}, we have
\begin{align*}
    | h_j(\bu) - u_j | \le \frac{3 \alpha}{8} + \frac{3 || \bu ||_{\infty}^2}{8 \alpha} \le \frac{3 \alpha}{4} \implies \begin{cases}
        h_j(\bu) \ge u_j - \frac{3 \alpha}{4} = \frac{\alpha}{4} \ge 0, &\text{ if $u_j = \alpha$,}\\
        h_j(\bu) \le u_j + \frac{3 \alpha}{4} = -\frac{\alpha}{4} \le 0, &\text{ if $u_j = -\alpha$.}
    \end{cases}
\end{align*}
As $\alpha \le \epsilon/\sqrt{d}$, applying the Poincar\'e--Miranda theorem completes the proof.
\end{proof}

\section{Computational Complexity of Generalizations}\label{sec:hardness}

In \cref{sec:weak}, we gave an efficient algorithm for computing stationary points of well-behaved functions in a constant dimensional space. There are two natural generalizations: First, we can consider well-behaved functions in a high-dimensional space. Second, we can restrict to constant dimensions but consider functions that are not well-behaved. 
We study these two generalizations in this section.

\subsection{Well-Behaved Functions in a High-Dimensional Space}

In this section we show that our approach cannot be generalized to settings where the dimension of the space is not fixed, but can grow polynomially. Namely, we prove that computing an approximate stationary point of a well-behaved function is \np/-hard in higher dimension.

The class of functions that we consider are simply the polynomials of degree 3 over $d$ variables, for all $d \geq 1$. This class of functions is well-behaved, as shown in \cref{app:well-behaved-properties}.
Ahmadi and Zhang~\cite{AhmadiZ22-unconstrained} have shown that it is \np/-hard to decide whether such a polynomial admits a stationary point over $\mathbb{R}^d$. We use their reduction to show a stronger result, namely that a so-called gap version of the problem remains hard, which implies that our result for fixed dimension cannot be generalized to higher dimension.

\begin{theorem}\label{thm:well-behaved-high-dim}
Given a polynomial $p: \mathbb{R}^d \to \mathbb{R}$ of degree $3$, it is \np/-hard to distinguish between the following two cases:
\begin{itemize}
    \item $p$ has a stationary point lying in $[-1,1]^d$,
    \item $p$ has no $1/d^2$-stationary point in $\mathbb{R}^d$.
\end{itemize}
\end{theorem}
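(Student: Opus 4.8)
The plan is to adapt the known \np/-hardness reduction of Ahmadi and Zhang~\cite{AhmadiZ22-unconstrained} for deciding whether a degree-$3$ polynomial has a stationary point, and to verify that it already yields the claimed gap. Their construction starts from an \np/-hard problem (e.g.\ deciding whether a given system of polynomial equations has a common real root, or a suitable combinatorial problem encoded as such) and builds a polynomial $p$ whose stationary points correspond exactly to solutions of the hard instance. Concretely, one typically takes a sum-of-squares-type objective $p(\bx) = \sum_k c_k(\bx)^2 + (\text{linear or cubic correction terms})$ designed so that: (i) if the hard instance is a YES instance, then there is an exact stationary point, which by rescaling the instance can be placed inside $[-1,1]^d$; and (ii) if it is a NO instance, then $\nabla p$ is bounded away from $0$ everywhere. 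The first bullet of the theorem is then immediate from the YES case of their reduction (possibly after an affine change of variables to fit the solution into the unit cube). The content is in the NO case: we must show not merely that $\nabla p \neq 0$ everywhere, but that $\|\nabla p(\bx)\|_\infty > 1/d^2$ for all $\bx \in \mathbb{R}^d$.

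The first step I would carry out is to recall/reconstruct the precise polynomial that the reduction produces, track the bit-sizes of its coefficients, and in particular identify the ``certificate'' structure that forces a nonzero gradient in the NO case. The key quantitative ingredient is a lower bound on $\|\nabla p(\bx)\|$ that does not degrade as $\|\bx\| \to \infty$. For this I would split the argument into two regimes. \textbf{Far regime:} for $\bx$ with large norm, the highest-degree (cubic) part of $p$ dominates, and one shows directly that its gradient (which is quadratic) cannot be small unless $\bx$ is bounded; here one must make sure the construction's leading form is ``coercive'' enough, e.g.\ its gradient has no nontrivial real zero, so that $\|\nabla p(\bx)\| \geq c\|\bx\|^2$ outside a ball of radius $\rho = \poly(d)$. \textbf{Near regime:} for $\bx$ in the ball of radius $\rho$, one uses that $\nabla p$ is a system of polynomials with integer (or polynomially-bounded rational) coefficients and constantly-many monomials per coordinate, so by a standard root-separation / Łojasiewicz-type bound, if it has no common zero in the ball then its minimum norm there is at least $2^{-\poly(\text{size})}$. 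This last bound is of the wrong form — it is exponentially small, not $1/d^2$ — so the crucial move is to \emph{rescale}: replace $p(\bx)$ by $\lambda \cdot p(\mu \bx)$ for appropriately chosen $\lambda, \mu$ (polynomial or mildly exponential in the input size) so that the exact stationary point, if it exists, still lands in $[-1,1]^d$, while the gradient gap in the NO case gets amplified above $1/d^2$. Since the transformation $p \mapsto \lambda p(\mu\cdot)$ keeps $p$ a degree-$3$ polynomial and is computable in polynomial time, and it preserves the YES/NO distinction, this rescaled polynomial is the output of our reduction.

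The main obstacle I anticipate is the quantitative control in the NO case: ensuring simultaneously that (a) the gradient lower bound holds uniformly over all of $\mathbb{R}^d$ (handling the unbounded far regime cleanly, which requires the leading cubic form to be chosen so its gradient vanishes only at the origin), and (b) the rescaling that boosts the near-regime gap to exceed $1/d^2$ does not blow the exact solution out of $[-1,1]^d$ in the YES case, nor blow up the coefficient sizes beyond polynomial. Getting these two constraints to be compatible — gap amplification up, solution box fixed, bit-size polynomial — is the delicate balancing act, and it is likely why the theorem statement uses the specific threshold $1/d^2$ (a convenient target that the rescaling can always reach) rather than a constant. A secondary technical point is confirming that the Ahmadi–Zhang construction can be taken with the solution already inside (or affinely movable into) a bounded box; if their reduction does not give this for free, an extra gadget or a preliminary a-priori bound on solutions of the underlying hard system (again via root bounds) will be needed. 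Everything else — that degree-$3$ polynomials form a well-behaved family (from \cref{app:well-behaved-properties}), and that a polynomial-time gap reduction suffices for \np/-hardness — is routine given the results already established.
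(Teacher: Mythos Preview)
Your proposal has a genuine gap: the rescaling step cannot be made to work. After invoking a \L{}ojasiewicz/root-separation bound in the near regime you get a gradient lower bound of order $2^{-\poly(\text{size})}$, and to amplify this to $1/d^{2}$ via $p\mapsto \lambda\, p(\mu\,\cdot)$ you would need $\lambda\mu$ of order $2^{\poly(\text{size})}$. But then the coefficients of the rescaled degree-$3$ polynomial have exponential bit-length, so the reduction is no longer polynomial time. You flag exactly this tension (``gap amplification up, solution box fixed, bit-size polynomial'') but offer no mechanism to resolve it, and along the lines you sketch there is none. Separately, the far-regime argument you propose also fails for the actual Ahmadi--Zhang polynomial: it is not a sum-of-squares objective but the Lagrangian $p(x,y)=\sum_{i} y_i q_i(x)$, whose cubic leading form has gradient vanishing on the entire subspace $y=0$, so no coercivity of the top form is available.

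The paper bypasses both issues by exploiting structure you did not use. Starting from \textsc{Max-Cut}, one writes quadratic constraints $q_0(x)=\tfrac14\sum E_{ij}(1-x_ix_j)-k$ and $q_i(x)=x_i^2-1$, and takes $p(x,y)=\sum_i y_i q_i(x)$. Then $\partial p/\partial y_i = q_i(x)$, so any $1/d^{2}$-stationary point $(x,y)$ satisfies $|q_i(x)|\le 1/d^{2}\le 1/n^{2}$ for all $i$. The key step is a \emph{rounding} argument: because cut sizes are integers, this approximate feasibility with slack $1/n^{2}$ already forces exact feasibility of the $q_i$. Thus in the NO case there is no $1/d^{2}$-stationary point anywhere in $\bR^{d}$, with no near/far split and no rescaling; and in the YES case $(x,0)\in[-1,1]^{d}$ is an exact stationary point. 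The threshold $1/d^{2}$ is not the residue of an amplification trick---it is exactly the integrality gap of the underlying combinatorial problem.
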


\begin{proof}
We closely follow the reduction used by Ahmadi and Zhang~\cite[Theorem~2.1]{AhmadiZ22-unconstrained}. They reduce from the problem of deciding whether an unweighted and undirected graph on $n$ vertices admits a cut of size $k$. This is equivalent to deciding whether the following system of quadratic equations is feasible.
\begin{equation}\label{eq:max-cut-exact}
\begin{split}
q_0(x) &= \frac{1}{4} \sum_{i=1}^n \sum_{j=1}^n E_{ij} (1 - x_i x_j) - k = 0,\\
q_i(x) &= x_i^2 - 1 = 0, \qquad i=1, \dots, n,
\end{split}
\end{equation}
where $E_{ij} \in \{0,1\}$ denotes whether or not the edge $\{i,j\}$ is present in the graph. Indeed, the constraints $q_i$ force the variables $x_i$ to be $-1$ or $+1$, and this can then be interpreted as assigning each corresponding vertex to one or the other side of the cut. It is then easy to check that $q_0$ is equal to zero if and only if the cut has size exactly $k$.\footnote{Note that every edge appears twice in the sum.}

Now, we observe that \eqref{eq:max-cut-exact} is feasible if and only if the following relaxed version of the system is feasible, where we set $\delta = 1/n^2$.
\begin{equation}\label{eq:max-cut-approx}
\begin{split}
q_0(x) &= \frac{1}{4} \sum_{i=1}^n \sum_{j=1}^n E_{ij} (1 - x_i x_j) - k \in [-\delta, \delta],\\
q_i(x) &= x_i^2 - 1 \in [-\delta, \delta], \qquad i=1, \dots, n,
\end{split}
\end{equation}
Clearly, if \eqref{eq:max-cut-exact} is feasible, then so is \eqref{eq:max-cut-approx}. Now, assume that \eqref{eq:max-cut-approx} is feasible, and let $x \in \mathbb{R}^n$ denote a feasible point. Then, $x_i^2-1 \in [-\delta, \delta]$, implies that $x_i \in [\sqrt{1-\delta}, \sqrt{1+\delta}]$ or $x_i \in [-\sqrt{1+\delta}, -\sqrt{1-\delta}]$. We define the rounded version of $x$ to be the vector $\hat{x} \in \mathbb{R}^n$ given by $\hat{x}_i = 1$ if $x_i \in [\sqrt{1-\delta}, \sqrt{1+\delta}]$, and $\hat{x}_i = -1$ if $x_i \in [-\sqrt{1+\delta}, -\sqrt{1-\delta}]$. We now show that $\hat{x}$ is feasible for \eqref{eq:max-cut-exact}. Indeed,
\begin{equation*}
|q_0(\hat{x}) - q_0(x)| \leq \frac{1}{4} \sum_{i=1}^n \sum_{j=1}^n E_{ij} |x_i x_j - \hat{x}_i \hat{x}_j| \leq \frac{1}{4} \sum_{i=1}^n \sum_{j=1}^n E_{ij} \cdot \delta \leq 1/4
\end{equation*}
because $\delta = 1/n^2$. Now, since $|q_0(x)| \leq \delta < 3/4$ and $q_0(\hat{x})$ is an integer, it follows that $q_0(\hat{x}) = 0$. Thus, \eqref{eq:max-cut-exact} is also feasible.

Finally, just as in the reduction of Ahmadi and Zhang~\cite[Theorem~2.1]{AhmadiZ22-unconstrained}, we construct the polynomial
$$p(x,y) = \sum_{i=0}^n y_i q_i(x)$$
over the $d=2n+1$ variables $(x_1, \dots, x_n, y_0, y_1, \dots, y_n)$. We thus have $\partial p/\partial x_j(x,y) = \sum_{i=0}^n y_i \partial q_i / \partial x_j(x)$ and $\partial p/\partial y_j (x,y) = q_j(x)$.

Now, if \eqref{eq:max-cut-exact} is feasible, letting $x \in [-1,1]^n$ denote a feasible point for it, then the point $(x,0, \dots, 0) \in [-1,1]^{2n+1}$ is stationary for $p$. It remains to show that if \eqref{eq:max-cut-exact} is not feasible, then $p$ has no $1/d^2$-stationary point in $\mathbb{R}^d$. We prove the contrapositive. Let $(x,y)$ be a $1/d^2$-stationary point of $p$. Then, since $1/d^2 \leq 1/n^2 = \delta$, $x$ is feasible for \eqref{eq:max-cut-approx}. But, as proved above, this implies that \eqref{eq:max-cut-exact} is also feasible. This completes the proof.
\end{proof}

\subsection{Non-Well-Behaved Functions in a Constant-Dimensional Space}

In this section we prove \cls/-hardness of computing equilibrium points of a generalized potential function in a two-dimensional space with only two charges. Each charge creates a potential that is broadly similar to the electrostatic potential. In particular, the potential function due to a given charge is continuously differentiable, has a singularity at the location of the charge, and is monotonically decreasing as we move away from the charge. The overall potential due to the two charges is the sum of the potentials due to each charge, as with the electrostatic potential.

\begin{theorem}\label{thm:cls-hard}
Given two locations $a_1,a_2 \in \mathbb{R}^2$ and two potentials\footnote{The functions and their derivatives are given by arithmetic circuits, see~\cite{FGHS22}.} $f_1,f_2: \mathbb{R}^2 \to \mathbb{R}$ that satisfy for $i \in \{1,2\}$
\begin{itemize}
    \item $f_i$ has a singularity at location $a_i$, but is continuously differentiable on $\mathbb{R}^2\setminus\{a_i\}$,
    \item $f_i$ is monotonically decreasing as we move away from the location $a_i$,
\end{itemize}
it is \cls/-hard to compute a weak $\varepsilon$-approximate equilibrium point of $f_1+f_2$.
\end{theorem}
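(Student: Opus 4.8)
The plan is to reduce from computing an $\eps$-approximate KKT point of a two-dimensional min-max problem $\min_{x\in[0,1]}\max_{y\in[0,1]} g(x,y)$, which is \cls/-hard~\cite{KalavasisPSZ25-2D-minmax} (with $g$ and its derivatives given by arithmetic circuits as in~\cite{FGHS22}; we may assume $g$ is continuously differentiable, after smoothing if necessary). What ties the two problems together is that, in the interior of the box, a first-order KKT point of this min-max problem is exactly a point with $\nabla g = 0$, i.e.\ a critical point of $g$ viewed as a scalar function --- and such a point is ``a minimum in one coordinate and a maximum in the other'', which is precisely the shape of an equilibrium of a two-charge potential: since $f_1+f_2\to+\infty$ at $a_1,a_2$ and stays bounded away from the charges, the potential has no local minimum, so every equilibrium is a saddle (a ``mountain pass'' between the two charges). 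This is also why two charges are genuinely needed: a single monotone potential with one singularity has a force that never vanishes on a bounded region --- so, by itself, no equilibrium at all --- and it is only by letting the two ``baseline'' forces cancel that one can plant an arbitrary landscape in between. For the same reason the dimension $2$ is minimal: in one dimension both the min-max KKT problem and the two-charge equilibrium problem are solvable by binary search.

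For the construction, fix a large parameter $D$ of singly-exponential magnitude, put the charges at $a_1=(-D,0)$ and $a_2=(D,0)$, and let $R$ be a fixed-size box around the origin holding a copy of the min-max objective $g$; we may assume (after a standard padding of the hard instance, preserving \cls/-hardness) that $g$ is constant near the frame of its box and that all of its $\eps$-KKT points lie well inside. Set
\[
 f_i(\bx) \;=\; \frac{c}{\|\bx-a_i\|^2} \;+\; \psi_i(\bx), \qquad i\in\{1,2\}.
\]
Each \emph{baseline} $c/\|\bx-a_i\|^2$ is singular at $a_i$, smooth on $\bR^2\setminus\{a_i\}$, monotonically decreasing away from $a_i$, and computable together with its derivatives by a small arithmetic circuit using only $+,-,\times,\div$; using the \emph{same} constant $c$ in both makes the baseline forces cancel exactly at the origin and, up to an explicit small saddle-shaped residual $\xi$, throughout $R$. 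The \emph{payload} $\psi_i$ is a $C^1$, circuit-computable function supported in a bounded neighborhood of $R$, built from a radial $C^1$ cutoff $\chi$ centered at the origin together with the circuit for $g$ and the baselines, and chosen so that $\psi_1+\psi_2$ equals $g - \big(c/\|\bx-a_1\|^2 + c/\|\bx-a_2\|^2\big) + c_0$ on the region where $\chi\equiv 1$, for a matched constant $c_0 \approx 2c/D^2$. Then $f_1+f_2 \equiv g + c_0$ on that region, so $\nabla(f_1+f_2) = \nabla g$ there and the (approximate) equilibria of $f_1+f_2$ located there are exactly the (approximate) critical points of $g$. Because the radial force of a baseline at distance $\approx D$ is of order $c/D^3$, which we make dominate the Lipschitz constant of $g$ by choosing $c$ of order $(\operatorname{Lip} g)\,D^3$, a short computation shows each $f_i$ stays monotonically decreasing along every ray from $a_i$ despite the payload (the cross-term from the other baseline only helps). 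All of $f_1,f_2$ and their partial derivatives are then given by polynomial-size arithmetic circuits.

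Given the construction, any weak $\eps$-approximate equilibrium $\bx$ of $f_1+f_2$ lies in the region where $\chi\equiv 1$ and avoids $a_1,a_2$ (the force is large near the charges, exactly as in \cref{lm:box-around-charges}); there $\nabla g(\bx)=\nabla(f_1+f_2)(\bx)$ has norm at most $\eps$, so $\bx$ is an $O(\eps)$-approximate KKT point of the min-max instance, which gives the claimed \cls/-hardness. The step I expect to be the main obstacle is making this planting \emph{clean}: guaranteeing that $f_1+f_2$ has no \emph{spurious} equilibrium in the transition region where $\chi$ is switched on. There $\nabla(f_1+f_2)$ acquires a $\nabla\chi$-term from the product rule, and one must arrange the relative sizes of $D$, of $R$, and of the transition width --- together with the matched constant $c_0$ and the fact that $g$ is constant there --- so that in the transition region $f_1+f_2$ collapses to $c_0 + (1-\chi)\,\xi$ with $\xi$ the explicit saddle-shaped residual of the two baselines; one then checks that $\nabla\big((1-\chi)\xi\big)$ cannot vanish there once $\chi$ is a strictly decreasing radial cutoff, using that $\xi$ itself has its only critical point at the origin. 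There is also a topological reason behind padding the hard instance to have interior-only $\eps$-KKT points: the baselines' unique saddle carries a nonzero index, and if the planted landscape $g$ did not already carry it in the interior, the index would be forced into the transition region, creating an equilibrium not corresponding to any min-max solution; the same padding absorbs the asymmetry between the ``min'' and ``max'' coordinates of the KKT conditions.

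The remaining work is mechanical: the constant-chasing that simultaneously enforces the monotonicity inequalities while keeping every quantity singly-exponential; the verification that near-charge and far-field forces exceed $\eps$ in norm (as in \cref{lm:box-around-charges}); and the routine check that the radial cutoff and the baselines yield $C^1$ functions with polynomial-size circuits for themselves and their derivatives. (Totality --- the fact that an equilibrium always exists --- is inherited from the \ppad/ membership established earlier for this class of generalized potentials.)
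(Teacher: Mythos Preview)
Your high-level strategy---reduce from \textsc{2D-min-max}, let two symmetric base potentials cancel in a central region, and plant the objective $g$ there---matches the paper's. The gap is in the boundary treatment. You assert that by ``standard padding'' one may take $g$ constant near the frame of its box \emph{and} have all $\eps$-KKT points lie well inside. These are incompatible: if $g$ is constant on an open set, then $\nabla g=0$ there and every such point is already an exact KKT point, so either the padding destroys hardness or it puts KKT points at the frame. Your transition-region claim that $f_1+f_2$ collapses to $c_0+(1-\chi)\xi$ uses precisely this constancy. With it, the region $\{\chi\equiv 1\}$ also contains points where $f_1+f_2=g+c_0$ has zero gradient that do not correspond to solutions of the original instance; without it, the cutoff introduces a $g\,\nabla\chi$ term you have not controlled, and the argument that $\nabla\big((1-\chi)\xi\big)\neq 0$ no longer suffices. (Separately, the smooth baselines $c/\|\bx-a_i\|^2$ cancel only at a single point, so there is always a nonzero residual $\xi$ competing with $\nabla g$ on the whole planting region; this forces the constant-chasing you allude to, whereas the paper avoids it entirely.)

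The paper's construction sidesteps all of this by using no cutoff. Each charge's base potential is an explicit \emph{piecewise-linear} function $h$ (not an inverse power) engineered so that $h(2+x_1,x_2)+h(2-x_1,x_2)$ is \emph{identically zero} on $[-1,1]^2$ and has gradient of norm $\ge 3$ outside; the min-max objective is extended to all of $\bR^2$ by coordinate-wise clipping (so outside the box it only \emph{loses} gradient components, never gains them) and then added, in full and globally, as $\hf/2$ to each $f_i$. Monotonicity of each $f_i$ then follows from the uniform bound $\|\nabla\hf\|\le 1$ against the $\ge 1/\sqrt 2$ radial slope of $h$, with no large parameter $D$. Continuous differentiability is recovered by a box filter rather than a smooth cutoff, and the resulting narrow boundary band is handled by a direct case analysis that uses exactly the promise (already in \cite{KalavasisPSZ25-2D-minmax}) that the hard instances have no $\eps$-KKT point on the boundary---this is what should replace your ``padding''. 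Your topological remark that the baselines' saddle index must be absorbed is correct in spirit; in the paper it is absorbed by the explicit slope pattern of $h$ outside the box (Regions 4--10 in the proof), not by modifying $g$.
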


We prove our hardness result by a reduction from the \textsc{$2$D-min-max} problem defined below, which is known to be \cls/-hard~\cite{KalavasisPSZ25-2D-minmax}.

\begin{definition}\label{def:2D-min-max}
In the \textsc{$2$D-min-max} problem we are given a continuously differentiable function\footnote{The function and its derivative are again given by arithmetic circuits.} $f: [-1,1]^2 \to \mathbb{R}$, an approximation factor $\varepsilon > 0$, and a value $L > 0$, such that $\nabla f$ is $L$-Lipschitz-continuous over $[-1,1]^2$.
The goal is to find an $\varepsilon$-KKT point of the optimization problem
$$\min_{x_1 \in [-1,1]} \max_{x_2 \in [-1,1]} f(x_1,x_2)$$
i.e., a point $(x_1,x_2) \in [-1,1]^2$ that satisfies
\begin{itemize}
    \item if $x_1 \neq -1$, then $\frac{\partial f}{\partial x_1} \leq \varepsilon$,
    \item if $x_1 \neq 1$, then $\frac{\partial f}{\partial x_1} \geq - \varepsilon$,
    \item if $x_2 \neq -1$, then $\frac{\partial f}{\partial x_2} \geq - \varepsilon$,
    \item if $x_2 \neq 1$, then $\frac{\partial f}{\partial x_2} \leq \varepsilon$.
\end{itemize}
\end{definition}

\begin{theorem}[\cite{KalavasisPSZ25-2D-minmax}]\label{thm:minmax}
The \textsc{$2$D-min-max} problem is \cls/-hard. Further, this hardness result holds even if we assume that the function does not have an $\varepsilon$-KKT point at the boundary, and additionally $|f| \le 1$ and $||\nabla f||_2 \le 1$ over $[-1,1]^2$.
\end{theorem}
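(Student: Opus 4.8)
This is the main technical theorem of \cite{KalavasisPSZ25-2D-minmax}; we only describe the route we would take. The plan is a reduction to \textsc{$2$D-min-max} from a \cls/-hard problem with an inherently two-dimensional combinatorial structure. A convenient choice is an ``end-of-potential-line'' problem on an exponentially large two-dimensional grid $\{0,\dots,2^m-1\}^2$: given (by circuits) successor/predecessor pointers of in- and out-degree at most one, a potential that strictly increases along every edge, and a trivial source vertex, find a sink, a non-trivial source, or a local potential violation. The directed-path part is what makes it \ppad/-hard and the monotone potential is what makes it \pls/-hard; such a grid problem is \cls/-hard, being the planar analogue of the potential-line characterisations used in \cite{FGHS22,DaskalakisP2011-CLS}.

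The core of the reduction will be to turn this combinatorial instance into a scalar potential $f$ on $[-1,1]^2$ so that $\varepsilon$-KKT points of $\min_{x_1}\max_{x_2} f$ decode to solutions. First I would scale the grid so that cell $(a,b)$ occupies a square of side $\approx 2^{-m}$ inside $[-1,1]^2$; since the instance is already two-dimensional, no ``space-filling'' trick is needed and all Lipschitz constants introduced are $\poly(2^m)$, i.e.\ of polynomial bit-size. On this grid one builds a continuous ``flow field'' that, inside each cell, points along its unique outgoing edge and is smoothly interpolated across cell boundaries, exactly as in \ppad/-to-Brouwer reductions. The twist is that this field must be realised as the min--max gradient $\big(-\tfrac{\partial f}{\partial x_1},\,\tfrac{\partial f}{\partial x_2}\big)$ of a single $C^1$ function. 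The idea would be to let $x_1$ track progress along the directed path --- engineering $\tfrac{\partial f}{\partial x_1}$ so that it can change sign only at a path endpoint --- while $x_2$ carries an adversarial ``consistency check'': the inner maximisation over $x_2$ should have a unique, easily-characterised optimum that certifies the local pointer/potential data used at the current point, and the strict monotonicity of the underlying potential is exactly what makes this certification globally single-valued, so that no spurious stationary point is created away from genuine endpoints. Smoothing with bump functions at scale $2^{-m}$ then yields $f\in C^1$ with an $L$-Lipschitz gradient, $L=\poly(2^m)$.

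To obtain the strengthened statement I would next add a small smooth ``inward-bias'' term supported near $\partial[-1,1]^2$, large enough there to dominate the bounded flow field (so that no $\varepsilon$-KKT point can sit on the boundary) but negligible in the interior region carrying the instance; and finally rescale $f\mapsto cf+c'$ with $c=1/\poly(2^m)$ to enforce $|f|\le 1$ and $\|\nabla f\|_2\le 1$ on $[-1,1]^2$, which only rescales $\varepsilon$ and $L$ by the same polynomial factor. It then remains to verify three things, routine in spirit: soundness (any $\varepsilon$-KKT point lies in a cell whose outgoing edge is ``cancelled'', and rounding its coordinates to the grid yields a sink / extra source / potential violation in polynomial time); totality (the given grid source forces at least one $\varepsilon$-KKT point, matching totality of \textsc{$2$D-min-max}); and that $f$, $\nabla f$, $\varepsilon$ and $L$ are computed by polynomial-size circuits from the source circuits.

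The main obstacle will be the middle step: realising a path-following vector field as the min--max gradient of a single potential in only two dimensions. A general displacement field is not of gradient type, and the min--max structure imposes a genuine differential constraint (the $x_1$- and $x_2$-components are coupled through $\partial^2 f/\partial x_1\partial x_2$), so one cannot choose the field freely as in the classical \ppad/ reductions. Designing the $x_2$-optimisation to be simultaneously concave enough to have a unique maximiser, informative enough that this maximiser pins down the honest local data, and cheap enough that it does not disturb the sign pattern of $\partial f/\partial x_1$ away from endpoints --- all while keeping every Lipschitz constant polynomial --- is where the real work lies, and the monotone potential of the underlying \cls/-hard problem is the tool that makes the ``informative enough'' requirement globally consistent.
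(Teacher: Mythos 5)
First, note that the paper itself does not prove \cref{thm:minmax}: it is imported wholesale from \cite{KalavasisPSZ25-2D-minmax}, so your sketch has to stand entirely on its own, and as it stands it has a genuine gap. The step you yourself flag as ``where the real work lies''---realising a path-following displacement field as the min--max gradient of a single $C^1$ function so that every $\eps$-KKT point of $\min_{x_1}\max_{x_2} f$ decodes to a combinatorial solution, with no spurious stationary points away from path endpoints---is not an implementation detail; it is essentially the entire technical content of the cited result. A plan that defers exactly this construction (how the inner maximisation ``certifies'' local pointer data, why concavity in $x_2$ can coexist with the sign pattern needed in $x_1$, how the coupling through $\partial^2 f/\partial x_1 \partial x_2$ is controlled) is an outline of the difficulty, not a proof of the theorem.

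There are also problems upstream and downstream of that step. Upstream, your starting problem---end-of-potential-line on an exponentially large two-dimensional grid---is not known to be \cls/-hard on the basis you cite: \cite{FGHS22,DaskalakisP2011-CLS} give $\cls/=\ppadpls/$ and continuous complete problems (e.g.\ two-dimensional KKT/gradient descent), and identifying \cls/ with a potential-line--style problem is a separate, nontrivial collapse; moreover the restriction to a planar grid forces you to route exponentially long paths without crossings, which in two dimensions requires explicit crossing gadgets just as in the classical 2D-Sperner reductions, and it is not obvious the potential can be kept monotone through such gadgets. (Given that 2D-KKT is already \cls/-complete, the natural route is to reduce from it directly rather than rebuild the combinatorial machinery.) Downstream, the boundary strengthening is not a simple ``inward bias'': for the min--max KKT conditions of \cref{def:2D-min-max} the bias must push the $x_1$-gradient and the $x_2$-gradient in opposite senses on the corresponding boundary pieces (a saddle-type bias), and one must check it does not create new interior $\eps$-KKT points where it is tapered off---another verification your sketch omits.
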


\begin{figure}[t]
    \begin{subfigure}{.5\textwidth}
        \centering
        \includegraphics[width=\linewidth]{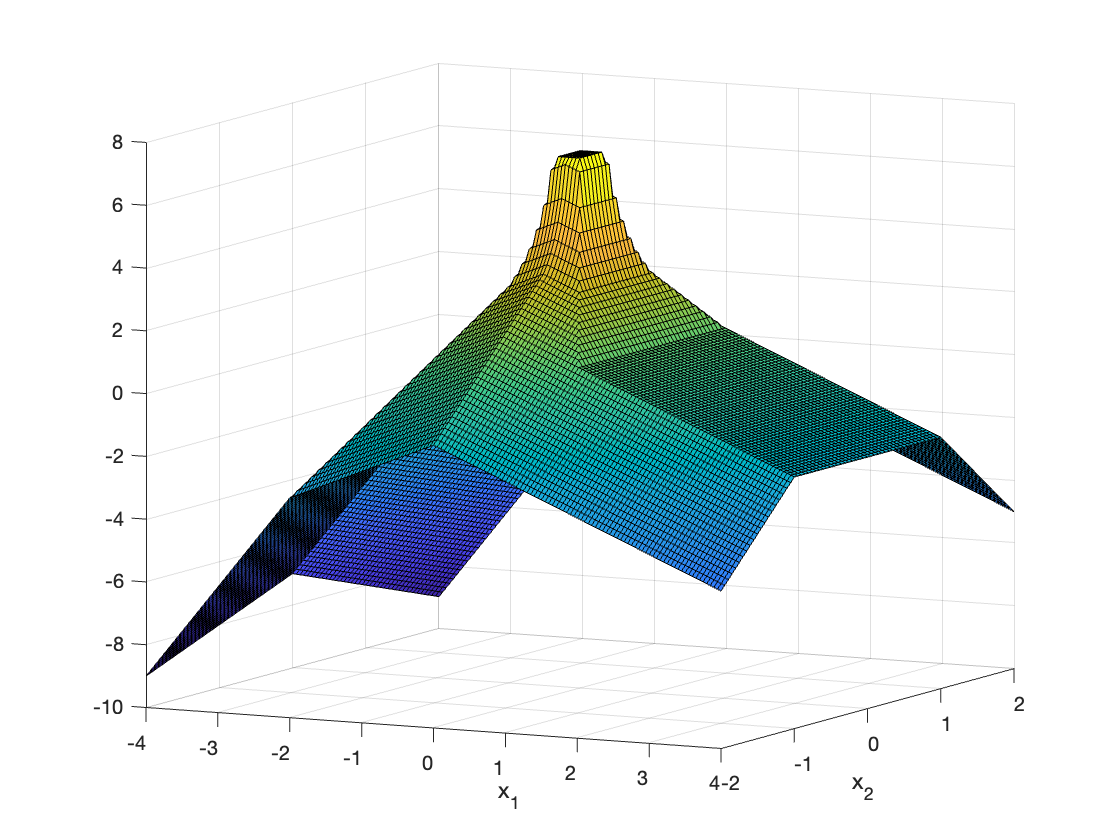}
        \caption{Potential induced by a single charge.\\ This corresponds to function $h$ in the proof.\\}
        \label{fig:pointy-one}
    \end{subfigure}%
    \begin{subfigure}{.5\textwidth}
        \centering
        \includegraphics[width=\linewidth]{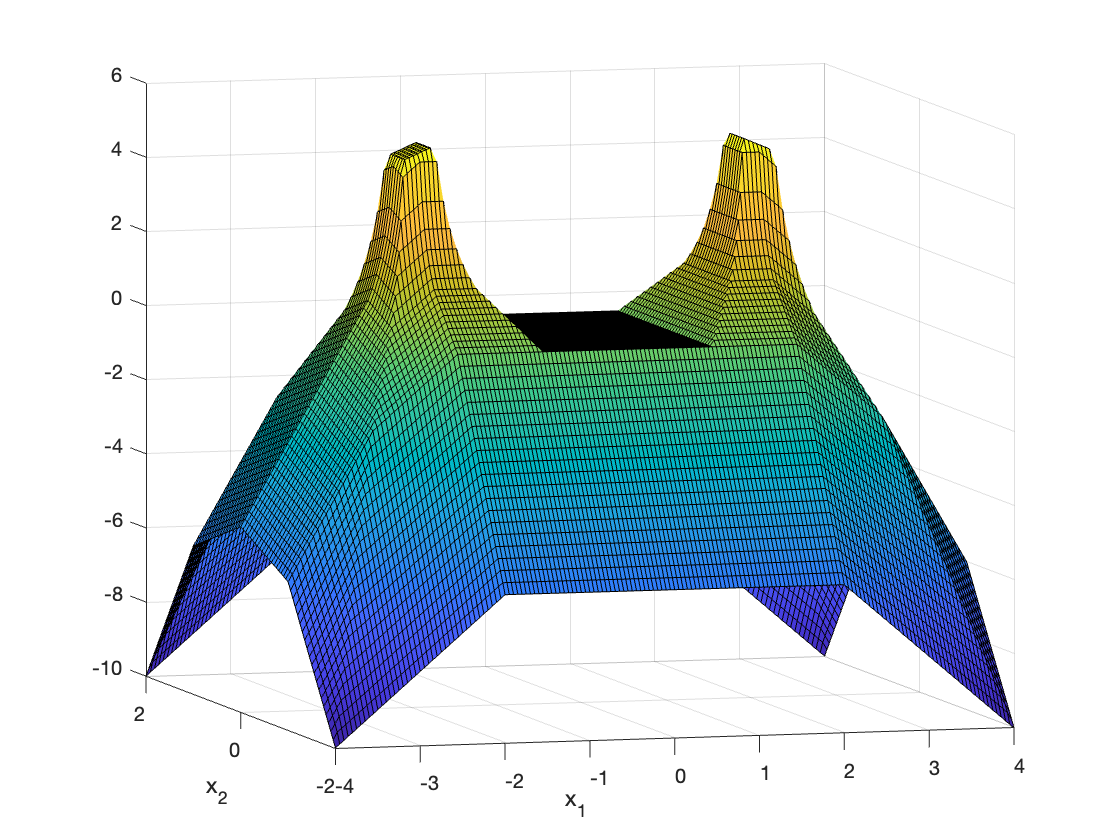}
        \caption{Potential induced by two charges.\\ This corresponds to the net potential due to the two charges, assuming $\hf = 0$ in the proof.}
        \label{fig:pointy-two}
    \end{subfigure}
    \caption{Generalized potential function due to two charges (non-smoothed).}
    \label{fig:pointy}
\end{figure}

Before proceeding with the proof of \cref{thm:cls-hard}, we give a brief overview of our approach. We first construct two piecewise linear potentials that are decreasing around their respective charge locations. These are carefully constructed such that, when they are added together, a flat region appears between the two singularities. See \cref{fig:pointy-one} for the potential associated to a single charge, and \cref{fig:pointy-two} for the sum of the two potentials and the flat region.

Next, we modify one of the potentials very carefully, such as to embed an instance of the \textsc{$2$D-min-max} problem in that flat region. As a result, we argue that any stationary point must correspond to a KKT point of the \textsc{$2$D-min-max} instance. Importantly, we make sure that our two potentials remains monotonically decreasing, despite the modifications. Finally, we apply a box filter to make our two potentials continuously differentiable.

\begin{proof}[Proof of \cref{thm:cls-hard}]
We reduce from the \textsc{$2$D-min-max} problem (without any boundary solution, as stated in \cref{thm:minmax}) to computing an equilibrium point of the generalized electrostatic potential. 
In particular, this implies that if we can compute a weak $\varepsilon$-approximate stationary point of the electrostatic potential in $\bR^2$ (excluding the locations of the charges) in $\poly(\log(1/\varepsilon))$ time, then we can compute an $\varepsilon$-KKT point of the \textsc{$2$D-min-max} problem in $\poly(\log(1/\varepsilon))$ time.

We are given $\eps > 0$ and a function $f: [-1,1]^2 \to \bR$ as defined in \cref{def:2D-min-max}. $\nabla f$ is $L$-Lipschitz-continuous over $[-1,1]^2$. $f$ and its derivative are bounded, i.e., $|f| \le 1$ and $||\nabla f||_2 \le 1$. We also assume that there is no $\varepsilon$-KKT point of $f$ on the boundary of $[-1, 1]^2$.

We construct the generalized electrostatic potential using the following function $h : \bR^2 \to \bR$ as a building block. The function $h$ is defined as
\begin{align*}
    h(x_1, x_2) = \begin{cases}
        2 + 1/|x_1| - |x_1| - |x_2|,& \text{if $x_1 \le |x_2|$ and $|x_2| \le |x_1| \le 1/2$},\\
        2 + 1/|x_2| - |x_1| - |x_2|,& \text{if $x_1 \le |x_2|$ and $|x_1| \le |x_2| \le 1/2$},\\
        5 - 3|x_1| - |x_2|,& \text{if $x_1 \le |x_2|$, $\max(|x_1|,|x_2|) \ge 1/2$, and $|x_2| \le |x_1|$},\\
        5 - |x_1| - 3|x_2|,& \text{if $x_1 \le |x_2|$, $\max(|x_1|,|x_2|) \ge 1/2$, and $|x_1| \le |x_2|$},\\
        2 + 1/|x_1| - 2|x_1|,& \text{if $x_1 \ge |x_2|$ and $x_1 \le 1/2$},\\
        5 - 4|x_1|,& \text{if $x_1 \ge |x_2|$ and $1/2 \le x_1 \le 1$},\\
        2 - |x_1|,& \text{if $x_1 \ge |x_2|$, $x_1 \ge 1$, and $|x_2| \le 1$},\\
        5 - |x_1| - 3|x_2|,& \text{if $x_1 \ge |x_2|$, $x_1 \ge 1$, and $|x_2| \ge 1$}.
    \end{cases}
\end{align*}
The function $h$ is plotted in \cref{fig:pointy-one} (we cut off $h$ at $h(x_1, x_2) \le 5$ in the plot). The last four cases above cover the right side $x_1 \ge |x_2|$ in the plot, while the first four cases cover the other three sides, which are similar. We can easily observe that $h$ decreases as we move away from $(0,0)$. We can also check (or observe from the plot) that $h$ is continuous.

Let us extend the function $f$ given in the min-max problem, which is only defined over the domain $[-1, 1]^2$, to all of $\bR^2$.
Let $\psi : \bR \to \bR$ be the function that clips a number between $[-1, 1]$, i.e., $\psi(z) = \max(-1, \min(1, z))$.
The extension of $f$ to $\bR^2$, denoted by $\hf$, is given as
\[
    \hf(\bx) = \hf(x_1, x_2) = f(\psi(x_1), \psi(x_2)).
\]
Note that $\hf$ is continuous on all of $\bR^2$. However, $\hf$ could be non-differentiable at points $\bx$ where $|x_1| = 1$ or $|x_2| = 1$, but it is continuously differentiable everywhere else.

We now use the functions $h$ and $\hf$ to define the potential functions of the two charges.
We place the two charges at $\ba_1 = (-2, 0)$ and $\ba_2 = (2, 0)$, respectively.
As an intermediate step towards defining the potential due to the charges, let us first define the following two functions
\begin{align*}
    f_1(\bx) = f_1(x_1, x_2) &= h(2 + x_1, x_2) + \hf(\bx)/2 \\
    f_2(\bx) = f_2(x_1, x_2) &= h(2 - x_1, x_2) + \hf(\bx)/2.
\end{align*}
For each $i \in [2]$, notice that $f_i$ is continuous in all of $\bR^2$ (except at the point of singularity $\ba_i$) because $h$ is continuous in all of $\bR$ except $(0,0)$ and $\hf$ is continuous everywhere. However, $f_i$ is not differentiable everywhere, because $h$ is not differentiable at the boundaries of the pieces defining it, and $\hf$ is not differentiable at points $\bx$ where $|x_1| = 1$ or $|x_2| = 1$. Despite this non-differentiability, we can check that $f_i$ is absolutely continuous. In particular, we can check from its definition that, fixing $x_1$, the function $f_i(x_1, \cdot) : \bR \to \bR$ is continuous everywhere (except at the singularity point $\ba_i$) and differentiable everywhere except at finitely many points. The same holds for $f_i(\cdot, x_2)$. So, we can apply the fundamental theorem of calculus to $f_i(x_1, \cdot)$ and $f_i(\cdot, x_2)$,\footnote{\url{https://en.wikipedia.org/wiki/Fundamental_theorem_of_calculus}} which we do later in the analysis.

We define the potential due to charge $i$, denoted by $g_i$, by applying a continuous \textit{box filter} (or box blur, or continuous moving average, or convolution with an indicator box kernel) on the function $f_i$.\footnote{\url{https://en.wikipedia.org/wiki/Convolution}, \url{https://en.wikipedia.org/wiki/Box_blur}.} 
Let $r = \frac{\varepsilon}{4L} > 0$. We apply the box filter on $f_i$ with a box of the form $[-r, r]^2$ on all\footnote{Actually, we need to do something a bit different when we are close to the singularity $\ba_i$. One way to handle this is to first replace $f_i$ by $\min\{f_i,M\}$, where $M>0$ is some sufficiently large value. Then, we can apply the box filter, since the singularity is no longer there. Finally, we reintroduce the singularity by adding a continuously differentiable function $s$ to $f_i$ such that $s$ has a singularity at $\ba_i$ and $s(\bx) = 0$ when $\|\bx - \ba_i\|_2 \geq t$ for some appropriate $t > 0$. Such a function $s$ can for example be obtained by letting $s(\bx) = 1/\|\bx-\ba_i\|_2 + \|\bx-\ba_i\|_2/t^2 + -2/t$ when $\|\bx - \ba_i\|_2 \leq t$, and $s(\bx) = 0$ otherwise. This ensures that $s$ and thus $s+f_i$ is continuously differentiable. It is not hard to show that we have only modified the function in a region where there is no solution. We omit the details.} of $\bR^2$. We obtain $g_i$ as given below 
\[
    g_i(\bx) = \frac{1}{4 r^2} \int_{u_1 = x_1 - r}^{x_1 + r} \int_{u_2 = x_2 - r}^{x_2 + r} f_i(u_1, u_2) du_2 du_1.
\]
The net potential due to the two charges is $g = g_1 + g_2$. \cref{fig:pointy-two} plots $f_1 + f_2$, which is similar to $g$, except that we have not added $\hf$ (which adds small changes to the shape) and applied the box filter (which smoothens out the corners).

First, notice that $g$ is continuously differentiable. 
This property follows directly from using the box filter---if a function is continuous, then applying the box filter operation makes it continuously differentiable. In particular, applying the Leibniz integral rule, we get
\begin{align}\label{eq:thm:cls-hard:1}
    \frac{\partial g_i}{\partial x_1}(x_1, x_2) = \frac{1}{4r^2} \int_{u_2 = x_2 - r}^{x_2 + r} ( f_i(x_1 + r, u_2) - f_i(x_1 - r, u_2) ) du_2.
\end{align}
As $f_i$ is continuous, $\frac{\partial g_i}{\partial x_1}$ is also continuous using the expression above. Similarly, we can check that $\frac{\partial g_i}{\partial x_2}$ is also continuous. So, $g_1$, $g_2$, and $g = g_1 + g_2$ are continuously differentiable.

We now claim that $\frac{\partial g_i}{\partial x_j}(x_1, x_2)$ is an average of $\frac{\partial f_i}{\partial x_j}(u, v)$ for $(u, v) \in [x_1 - r, x_1 + r] \times [x_2 - r, x_2 + r]$. 
As discussed earlier, we know that $f_i$ is absolutely continuous.
In particular, fixing $u_2$, the function $f_i(\cdot, u_2) : \bR \to \bR$ is continuous everywhere and differentiable everywhere except finitely many points. Using the fundamental theorem of calculus, we have
\[
    f_i(x_1 + r, u_2) - f_i(x_1 - r, u_2) = \int_{u_1 = x_1 - r}^{x_1 + r} \frac{\partial f_i}{\partial x_1}(u_1, u_2) du_1.
\]
Plugging this in \eqref{eq:thm:cls-hard:1}, we get
\begin{align}\label{eq:thm:cls-hard:2}
    \frac{\partial g_i}{\partial x_1}(x_1, x_2) = \frac{1}{4r^2} \int_{u_2 = x_2 - r}^{x_2 + r} \int_{u_1 = x_1 - r}^{x_1 + r} \frac{\partial f_i}{\partial x_1}(u_1, u_2) du_1 du_2.
\end{align}
Analogously, we can show that 
\begin{align}\label{eq:thm:cls-hard:3}
    \frac{\partial g_i}{\partial x_2}(x_1, x_2) = \frac{1}{4r^2} \int_{u_2 = x_2 - r}^{x_2 + r} \int_{u_1 = x_1 - r}^{x_1 + r} \frac{\partial f_i}{\partial x_2}(u_1, u_2) du_1 du_2.
\end{align}

It is easy to check that $g_i$ is monotonically decreasing as we move away from charge $i$ located at $\ba_i$. 
Formally, we want to show that for any $\bx \in \bR^2 \setminus \{ \ba_i \}$, we have $(\bx - \ba_i)^\intercal ((\nabla g_i) (\bx)) < 0$.
From our construction, we know that $h$ decreases as we move away from $(0,0)$. Further, we can easily check---by considering the cases that define $h$---that $\nabla h$ has a directional component of least $1/\sqrt{2}$ pointing radially outward from $(0,0)$ almost everywhere. 
On the other hand, $\hf$ is continuously differentiable and has a gradient bounded by $||\nabla \hf||_2 \le ||\nabla f||_2 \le 1$ almost everywhere.
Therefore, $\nabla f_i$ has a directional component of least $1/\sqrt{2} - ||\nabla f||_2/2 \ge (\sqrt{2} - 1)/2$ pointing radially away from $\ba_i$ almost everywhere. As $\nabla g_i (\bx)$ is the average of $\nabla f_i$ in the $2r$-width box around $\bx$, so $\nabla g_i (\bx)$ is monotonically decreasing as we move away from $\ba_i$.

Next, we prove that there are no weak $(\varepsilon/2)$-approximate equilibrium points outside the box $[-1+r, 1-r]^2$. 
First, note that $g$ has a very high gradient very close to the singularities. 
In particular, by construction of $g$, in the $\bigtimes_{j \in [2]} [a_{i,j} - 1/2, a_{i,j} + 1/2]$ box around $\ba_i$, the potential $g_i$ due to charge $i$ has a gradient of magnitude at least $5 - || \nabla \hf ||_2/2$. 
On the other hand, the potential due to charge $(3-i)$ has a gradient of magnitude at most $1 + || \nabla \hf ||_2/2$ in this region. As $|| \nabla \hf ||_2/2 \le 1$, in this $\bigtimes_{j \in [2]} [a_{i,j} - 1/2, a_{i,j} + 1/2]$ box, the gradient of $g$ is at least $5 - || \nabla \hf ||_2/2 - (1 + || \nabla \hf ||_2/2) \ge 3$, and there can be no equilibrium point.

Let us now focus on points $\bx = (x_1, x_2)$ outside the $\bigtimes_{j \in [2]} [a_{i,j} - 1/2, a_{i,j} + 1/2]$ box around the singularity points $\ba_i$ for $i \in [2]$.
Note that $g = g_1 + g_2$ is the box filtered version of $f_1 + f_2$. 
Also, as argued earlier, $\nabla g (\bx)$ is the average of $\nabla (f_1 + f_2)$ in the $2r$-width box around $\bx$.
Further, $(f_1 + f_2)(x_1, x_2) = h(2 + x_1, x_2) + h(2 - x_1, x_2) + \hf(x_1, x_2)$. Let us consider the following two cases:
\begin{itemize}
    \item $\bx \notin [-1-r, 1+r]^2$. Using the definition of $h$, we explicitly write down $h(2 + x_1, x_2) + h(2 - x_1, x_2)$ below. We skip points in the $1$-width box around the singularities, which we have already considered earlier. We also write down the values for only the top-right quadrant, i.e., $x_1 \ge 0$ and $x_2 \ge 0$, as the function has reflection symmetry along both $x_1$-axis and $x_2$-axis. Let $\hh(x_1, x_2) = h(2 + x_1, x_2) + h(2 - x_1, x_2)$ for conciseness.
    \begin{align}
        &\hh(x_1, x_2) = h(2 + x_1, x_2) + h(2 - x_1, x_2) \nonumber \\
        &= \begin{cases}
            (skipped), &\text{if $x_1 \le 0$ or $x_2 \le 0$ [REGION-1]},\\
            (skipped), &\text{if $3/2 \le x_1 \le 5/2$ and $0 \le x_2 \le 1/2$ [REGION-2]},\\
            0, &\text{if $0 \le x_1 \le 1$ and $0 \le x_2 \le 1$ [REGION-3]},\\
            -3 + 3x_1, &\text{if $1 \le x_1 \le 3/2$, $0 \le x_2 \le 1$, and $x_1 + x_2 \le 2$ [REGION-4]},\\
            3 - 3x_2, &\text{if $1 \le x_1 \le 2$, $1/2 \le x_2 \le 1$, and $x_1 + x_2 \ge 2$ [REGION-5]},\\
            6 - 6x_2, &\text{if $0 \le x_1 \le 2$ and $x_2 \ge 1$ [REGION-6]},\\
            7 - 2x_1 - 3x_2, &\text{if $2 \le x_1 \le 3$, $1/2 \le x_2 \le 1$, and $x_1 \le x_2 + 2$ [REGION-7]},\\
            11 - 4x_1 - x_2, &\text{if $x_1 \ge 5/2$, $0 \le x_2 \le 1$, and $x_1 \ge x_2 + 2$ [REGION-8]},\\
            14 - 4x_1 - 4x_2, &\text{if $x_1 \ge 3$, $x_2 \ge 1$, and $x_1 \ge x_2 + 2$ [REGION-9]}\\
            10 - 2x_1 - 6x_2, &\text{if $x_1 \ge 2$, $x_2 \ge 1$, and $x_1 \le x_2 + 2$ [REGION-10]}.
        \end{cases} \label{eq:thm:cls-hard:4}
    \end{align}
    From the above definition, it is easy to check that the magnitude of the gradient of $h(2 + x_1, x_2) + h(2 - x_1, x_2)$ is at least $3$ almost everywhere outside the $[-1, 1]^2$ box.
    Further, $|| \nabla \hf ||_2 \le 1$ almost everywhere. So, the gradient of $f_1 + f_2$ has magnitude at least $2$, $|| \nabla (f_1 + f_2) ||_2 \ge 2$, almost everywhere outside the $[-1, 1]^2$ box.
    Therefore, $g$ will have a gradient at least $2$ outside the $[-1-r, 1+r]^2$ box.

    \item $\bx \in [-1-r, 1+r]^2 \setminus [-1+r, 1-r]^2$. That is, $\bx$ is in the small band of width $2r$ at the boundary of $[-1, 1]^2$. Let us focus on the top-right quadrant; a symmetric argument will apply to the other three quadrants as well. We consider the following three cases:

    CASE-1: $0 \le x_1 \le 1-r$ and $1-r \le x_2 \le 1+r$. We know that the min-max problem does not have an $\varepsilon$-KKT point at the boundary. So, when $0 \le x_1 < 1$ and $x_2 = 1$, either $|\frac{\partial f}{\partial x_1}| > \varepsilon$ or $\frac{\partial f}{\partial x_2} < -\varepsilon$. By $L$-Lipschitz-continuity of $\nabla f$, the definition of $\hf$, and as $r = \frac{\varepsilon}{4L}$, we have $|\frac{\partial \hf}{\partial x_1}| > \varepsilon - rL \ge \frac{\varepsilon}{2}$ or $\frac{\partial \hf}{\partial x_2} < -\varepsilon + rL \le -\frac{\varepsilon}{2}$ for $0 \le x_1 \le 1-r$ and $1-r \le x_2 \le 1+r$. Notice that when we apply the box filter to get $g$, we average $\hf$ with REGION-6 of $\hh$. In REGION-6, $\nabla \hh = (0, -6)$. Therefore, either $|\frac{\partial g}{\partial x_1}| = |\frac{\partial \hf}{\partial x_1}| > \frac{\varepsilon}{2}$ or $\frac{\partial g}{\partial x_2} \le \frac{\partial \hf}{\partial x_2} < -\frac{\varepsilon}{2}$ for $0 \le x_1 \le 1-r$ and $1-r \le x_2 \le 1+r$, as required.

    CASE-2: $1-r \le x_1 \le 1+r$ and $0 \le x_2 \le 1-r$. The argument is symmetric to the previous case.
    As the min-max problem does not have an $\varepsilon$-KKT point at the boundary, so either $\frac{\partial f}{\partial x_1} > \varepsilon$ or $|\frac{\partial f}{\partial x_2}| > \varepsilon$ when $x_1 = 1$ and $0 \le x_2 < 1$. By $L$-Lipschitz-continuity of $\nabla f$, the definition of $\hf$, and as $r = \frac{\varepsilon}{4L}$, we have $\frac{\partial \hf}{\partial x_1} > \varepsilon - rL \ge \frac{\varepsilon}{2}$ or $|\frac{\partial \hf}{\partial x_2}| > \varepsilon - rL \ge \frac{\varepsilon}{2}$ for $1-r \le x_1 \le 1+r$ and $0 \le x_2 \le 1-r$. Notice that when we apply the box filter to get $g$, we average $\hf$ with REGION-4 of $\hh$. In REGION-4, $\nabla \hh = (3, 0)$. Therefore, either $\frac{\partial g}{\partial x_1} \ge \frac{\partial \hf}{\partial x_1} > \frac{\varepsilon}{2}$ or $|\frac{\partial g}{\partial x_2}| = |\frac{\partial \hf}{\partial x_2}| > \frac{\varepsilon}{2}$ for $1-r \le x_1 \le 1+r$ and $0 \le x_2 \le 1-r$, as required.

    CASE-3: $1-r \le x_1 \le 1+r$ and $1-r \le x_2 \le 1+r$.
    Again, as the min-max problem does not have an $\varepsilon$-KKT point at the boundary, so either $\frac{\partial f}{\partial x_1} > \varepsilon$ or $\frac{\partial f}{\partial x_2} < -\varepsilon$ when $x_1 = 1$ and $x_2 = 1$. By $L$-Lipschitz-continuity of $\nabla f$, the definition of $\hf$, and as $r = \frac{\varepsilon}{4L}$, we have $\frac{\partial \hf}{\partial x_1} > \varepsilon - rL \ge \frac{\varepsilon}{2}$ or $\frac{\partial \hf}{\partial x_2} < -\varepsilon + rL < -\frac{\varepsilon}{2}$ for $1-r \le x_1 \le 1+r$ and $1-r \le x_2 \le 1+r$. Notice that when we apply the box filter to get $g$, we average $\hf$ with REGION-4, REGION-5, or REGION-6 of $\hh$. In these regions, $\nabla \hh$ is equal to $(3, 0)$, $(0, -2)$, or $(0 -6)$. Therefore, either $\frac{\partial g}{\partial x_1} \ge \frac{\partial \hf}{\partial x_1} > \frac{\varepsilon}{2}$ or $\frac{\partial g}{\partial x_2} \le \frac{\partial \hf}{\partial x_2} < -\frac{\varepsilon}{2}$ for $1-r \le x_1 \le 1+r$ and $1-r \le x_2 \le 1+r$, as required.
\end{itemize}

The above case analysis shows that there are no weak $(\varepsilon/2)$-approximate equilibrium points of $g$ outside the box $[-1+r, 1-r]^2$.

Finally, we claim that if a point $\bx \in [-1+r, 1-r]^2$ is an $(\varepsilon/2)$-approximate equilibrium point of $g$, then it must be an $\varepsilon$-KKT point of $f$. Notice that inside the $[-1, 1]^2$ box, $f_1 + f_2 = f$. Therefore, for any point $\bx \in [-1+r, 1-r]^2$, $(\nabla g) (\bx)$ is the average of $\nabla f$ in the $[-r, r]^2$ box around $\bx$. If $\bx$ is not an $\varepsilon$-KKT point of $f$, then either $|\frac{\partial f}{\partial x_1}(\bx)| > \varepsilon$ or $|\frac{\partial f}{\partial x_2}(\bx)| > \varepsilon$. Say $\frac{\partial f}{\partial x_1}(\bx) > \varepsilon$; the other cases are similar. By $L$-Lipschitz-continuity of $\nabla f$, for every point $\by$ in the $[-r,r]^2$ box around $\bx$, $\frac{\partial f}{\partial x_1}(\by) > \varepsilon - 2rL = \frac{\varepsilon}{2}$ as $r = \frac{\varepsilon}{4L}$. As $(\nabla g)(\bx)$ is the average of $(\nabla f)(\by)$ for $(\by - \bx) \in [-r, r]^2$, and as $\frac{\partial f}{\partial x_1}(\by) > \frac{\varepsilon}{2}$ for all such $\by$, we get $\frac{\partial g}{\partial x_1} (\bx) > \frac{\varepsilon}{2}$. So, $\bx$ cannot be a weak $(\varepsilon/2)$-approximate equilibrium point of $g$.
\end{proof}

\section*{Acknowledgments}
The authors wish to thank the reviewers for suggestions that helped improve the presentation of the paper. A.G.\ and P.W.G.\ were supported by EPSRC Grant EP/X040461/1 ``Optimisation for Game Theory and Machine Learning''.

\section*{Appendix}

\appendix

\section{Properties of Well-Behaved Functions}\label{app:well-behaved-properties}

In this section, we show that the well-behavedness property extends to derivatives, linear combinations, and products of well-behaved functions.

\begin{lemma}\label{lm:derivatives}
Let $\kappa \in \mathbb{N}$ be a constant. For any well-behaved family $\mathcal{F} = (f_r)_r$, the family $\mathcal{F}' = (D^{\bs} f_r)_{r,|\bs|=\kappa}$ is also well-behaved.
\end{lemma}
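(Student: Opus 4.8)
The plan is to derive the well-behavedness of each $D^{\bs} f_r$ (with $|\bs| = \kappa$, for the fixed constant $\kappa$) from that of $f_r$ via the elementary pointwise bound
\[
    M^{(k)}_{D^{\bs} f_r}(\bx) \le M^{(k+\kappa)}_{f_r}(\bx) \qquad \text{for all } \bx \text{ and all } k \ge 0 ,
\]
which holds because every order-$k$ partial derivative of $D^{\bs} f_r$ is one of the order-$(k+\kappa)$ partial derivatives of $f_r$. First I would check that $\mathcal{F}' = (D^{\bs} f_r)_{r, |\bs| = \kappa}$ is efficiently representable in the sense of \cref{def:efficiently-rep}: a member of $\mathcal{F}'$ is encoded by the pair $(r, \bs)$, the domain is still $X_r$, and since differentiating $D^{\bs} f_r$ by a further multi-index $\bm{t}$ yields $D^{\bm{t}+\bs} f_r$, the approximation oracle for $f_r$ directly gives one for $D^{\bs} f_r$ with only a constant additive shift $\kappa$ in the order argument. (On ill-formed bitstrings we fall back to a default well-behaved function, e.g.\ the zero function.)

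The core of the argument is a suitable choice of the new parameters together with a single ``master inequality.'' Given the parameters $B$, $C$, $\beta_{min}$ that $f_r$'s well-behavedness provides, I would set $\beta'_{min} := \beta_{min}$, $C' := 2C$, and take $B'$ to be the least integer with $2^{B'} \ge C^{\kappa} 2^{B} c_{\kappa} / \beta_{min}^{\kappa}$, where $c_{\kappa}$ is a constant depending only on $\kappa$ such that $(k+\kappa)^{\kappa} \le c_{\kappa} 2^{k}$ for all $k \ge 0$ (such a $c_{\kappa}$ exists, since $(k+\kappa)^{\kappa}/2^{k} \to 0$). Since $B, C = \poly(|r|)$ and $\log(1/\beta_{min}) \le \sz(\beta_{min}) = \poly(|r|)$, the value $B'$ is $\poly(|r|)$. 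Using $(C')^{k} = 2^{k} C^{k}$ and $(k+\kappa)!/k! \le (k+\kappa)^{\kappa} \le c_{\kappa} 2^{k}$, I would verify that for every $\gamma \ge \beta_{min}$ and every $k \ge 0$,
\[
    C^{k+\kappa} 2^{B} \frac{(k+\kappa)!}{\gamma^{k+\kappa}} \;\le\; (C')^{k} 2^{B'} \frac{k!}{\gamma^{k}} .
\]

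With this inequality in hand, both conditions of \cref{def:well-behaved} for $D^{\bs} f_r$ follow quickly. For the uniform bound, apply the master inequality with $\gamma = \beta_{min}$ and combine it with the pointwise bound and the uniform bound of $f_r$ to get $M^{(k)}_{D^{\bs} f_r}(\bx) \le (C')^{k} 2^{B'} k! / (\beta'_{min})^{k}$. For the covering condition at a scale $\beta$: if $\beta < \beta'_{min}$, the uniform bound just obtained forces the set $\{\bx : M^{(k)}_{D^{\bs} f_r}(\bx) > (C')^{k} 2^{B'} k! / \beta^{k} \text{ for some } k\}$ to be empty, so $n_\beta = 0$; and if $\beta \ge \beta_{min}$, the master inequality with $\gamma = \beta$ plus the pointwise bound show that $M^{(k)}_{D^{\bs} f_r}(\bx) > (C')^{k} 2^{B'} k! / \beta^{k}$ implies $M^{(k+\kappa)}_{f_r}(\bx) > C^{k+\kappa} 2^{B} (k+\kappa)! / \beta^{k+\kappa}$, i.e.\ $\bx$ lies in the ``bad set'' of $f_r$ at scale $\beta$. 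Hence the $\poly(|r|, \sz(\beta))$ hypercuboids of side at most $\beta$ that $f_r$'s well-behavedness supplies at scale $\beta$ already cover the bad set of $D^{\bs} f_r$, and we reuse them verbatim (and output nothing when $\beta < \beta'_{min}$).

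I expect the main obstacle to be exactly this shift of the order by $\kappa$. Naively, the bad set of a derivative at scale $\beta$ need not shrink at the rate that of $f_r$ does — a first derivative of a single-charge potential, for instance, can exceed a fixed threshold on a region whose width is independent of $\beta$ — and tiling such a region with cells of side $\beta$ would require $\sim 1/\beta$ of them, i.e.\ exponentially many in $\sz(\beta)$. What saves the argument is (i) the lower bound $\beta \ge \beta_{min}$, which makes the blow-up $1/\beta_{min}^{\kappa}$ only $2^{\poly(|r|)}$ and hence absorbable into the additive constant $B'$, and (ii) the combinatorial fact $(k+\kappa)!/k! \le c_{\kappa} 2^{k}$, which lets the single extra factor from replacing $C$ by $2C$ dominate the factorial overhead uniformly in $k$. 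Everything else — that $C'$, $B'$, $\beta'_{min}$, and $n_\beta$ stay polynomial and can be computed efficiently (directly from $f_r$'s own parameters and hypercuboids) — is routine.
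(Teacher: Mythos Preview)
Your proposal is correct and takes essentially the same approach as the paper: both use the pointwise bound $M^{(k)}_{D^{\bs} f_r}(\bx) \le M^{(k+\kappa)}_{f_r}(\bx)$, absorb the factorial overhead $(k+\kappa)!/k!$ by enlarging $C$ multiplicatively (you use $C' = 2C$, the paper uses $\widehat{C} = 2^{\kappa} C$), absorb the remaining $C^{\kappa}/\beta_{min}^{\kappa}$ factor into $B$, and then reuse $f_r$'s hypercuboids and $\beta_{min}$ unchanged. Your write-up is slightly more explicit about efficient representability and the $\beta < \beta_{min}$ case, and your constant $C'$ is a bit sharper, but the argument is the same.
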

\begin{proof}
Let $f = f_r$ for some $r$, and consider $g = D^{\bs} f$, where $|\bs| = \kappa \ge 1$.

Let $B$, $C$, $(n_{\beta})_{\beta}$, and $\beta_{min}$ be the parameters associated with the well-behavedness property of the function $f$ as given in \cref{def:well-behaved}. Assume, without loss of generality that $\beta_{min} \leq 1$; if $\beta_{min} > 1$, we can just set $\beta_{min} = 1$ and the function remains well-behaved.

At a point $\bx$, if $M^{(\ell)}_{f}(\bx) \le C^\ell 2^{B} \frac{\ell!}{\beta^\ell}$ for some $\beta \ge \beta_{min}$ and all $\ell \in \bZ_{\ge 0}$, then for any $k \in \bZ_{\ge 0}$
\begin{align*}
    M^{(k)}_g(\bx) \le M^{(k + \kappa)}_f(\bx) \le C^{k + \kappa} 2^{B} \frac{(k + \kappa)!}{\beta^{k + \kappa}} \le C^{k} 2^{B + \kappa \lg(C) + \kappa \lg(1/\beta_{min})} \frac{(k + \kappa)!}{\beta^{k}}.
\end{align*}
Now, let us focus on $(k + \kappa)!$
\begin{align*}
    (k + \kappa)! &= k! (k + 1) (k + 2) \ldots (k + \kappa) \le k! (k + \kappa)^\kappa \le k! (k+1)^{\kappa}(\kappa+1)^{\kappa}\\
    &\le k! 2^{\kappa \lg(k+1)} 2^{\kappa \lg(\kappa + 1)}
    \le k! 2^{\kappa k} 2^{\kappa \lg(\kappa + 1)} = k! (2^{\kappa})^k 2^{\kappa \lg(\kappa + 1)}.
\end{align*}
Plugging this back into our bound on $M^{(k)}_g(\bx)$
\begin{align*}
    M^{(k)}_g(\bx) &\le C^{k} 2^{B + \kappa \lg(C) + \kappa \lg(1/\beta_{min})} \frac{k!}{\beta^{k}} (2^{\kappa})^k 2^{\kappa \lg(\kappa + 1)} \\
    &= (2^{\kappa} C)^{k} 2^{B + \kappa \lg(C) + \kappa \lg(1/\beta_{min}) + \kappa \lg(\kappa + 1)} \frac{k!}{\beta^{k}}.
\end{align*}
So, $g$ satisfies the well-behavedness condition with parameters $\widehat{B} = B + \kappa \lg(C) + \kappa \lg(1/\beta_{min}) + \kappa \lg(\kappa + 1)$ and $\widehat{C} = 2^\kappa C$. Since $B=\poly(|r|)$, $C = \poly(|r|)$, $\sz(\beta_{min}) = \poly(|r|)$, and $\kappa$ is constant, we have that $\widehat{B} = \poly(|r|)$ and $\widehat{C} = \poly(|r|)$, as desired. All other parameters ($(n_\beta)_\beta$, $\beta_{min}$) remain the same as for $f$.
\end{proof}

\begin{lemma}\label{lm:linear-comb}
For any well-behaved family $\mathcal{F} = (f_r)_r$, the family of all linear combinations
$$\mathcal{F}' = \left\{\sum_{i \in [n]} \alpha_i f_{r_i}: n \in \mathbb{N}, r_1, \dots, r_n \in \{0,1\}^n, \alpha_1, \dots, \alpha_n \in \mathbb{Q}\right\}$$
is also well-behaved. Here the domain of the function $\sum_{i \in [n]} \alpha_i f_{r_i}$ is taken to be $\bigcap_{i \in [n]} X_{r_i}$.
\end{lemma}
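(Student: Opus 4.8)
The plan is to check, in order, that $\mathcal{F}'$ is a family of efficiently representable smooth functions (\cref{def:efficiently-rep}) and then that it satisfies the well-behavedness conditions of \cref{def:well-behaved}. Fix the representation convention that an element $g = \sum_{i \in [n]} \alpha_i f_{r_i}$ of $\mathcal{F}'$ is encoded by a bitstring $r$ describing the tuple $(n, (r_i)_{i \in [n]}, (\alpha_i)_{i \in [n]})$; then $n \le |r|$, each $|r_i| \le |r|$, and $\sum_{i \in [n]} |\alpha_i| \le 2^{|r|}$, so any bound that is polynomial in $|r_i|$ for a single $i$, or in $n$, is polynomial in $|r|$, and $\log_2(\sum_i |\alpha_i|) \le |r|$. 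Smoothness is immediate: a finite linear combination of functions that are smooth on open sets $U_i \supseteq X_{r_i}$ is smooth on the open set $\bigcap_i U_i \supseteq \bigcap_i X_{r_i}$.

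For efficient representability, the domain $\bigcap_i X_{r_i}$ is obtained by concatenating the linear systems output by the domain algorithms of the $f_{r_i}$, which is polynomial time. For a derivative query $(\bx, \bs, \varepsilon)$, since $D^{\bs} g = \sum_i \alpha_i D^{\bs} f_{r_i}$, I call the derivative oracle of $\mathcal{F}$ for each $f_{r_i}$ to obtain $v_i$ with $|(D^{\bs} f_{r_i})(\bx) - v_i| \le \varepsilon / \sum_{i'} |\alpha_{i'}|$ (the case $\sum_i |\alpha_i| = 0$ being trivial since then $g \equiv 0$), and return $\sum_i \alpha_i v_i$; the total error is at most $\varepsilon$, and the running time is polynomial because $\log(1/(\varepsilon/\sum_i |\alpha_i|)) = \log(1/\varepsilon) + \log(\sum_i |\alpha_i|) \le \log(1/\varepsilon) + |r|$.

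For well-behavedness, let $B_i, C_i, \beta_{i,\min}$ and the scale-$\beta$ hypercuboid families be the data guaranteed by \cref{def:well-behaved} for each $f_{r_i}$. I set $C = \max_i C_i$, $B = \max_i B_i + \lceil \log_2(1 + \sum_i |\alpha_i|) \rceil$, and $\beta_{\min} = \min_i \beta_{i,\min}$, each of which has size polynomial in $|r|$ by the conventions above. The crux is a pointwise estimate: since every $k$-th order partial derivative of $g$ is the $\alpha_i$-weighted sum of those of the $f_{r_i}$, we have $M^{(k)}_g(\bx) \le \sum_i |\alpha_i|\, M^{(k)}_{f_{r_i}}(\bx)$ for all $k$; hence if $M^{(k)}_{f_{r_i}}(\bx) \le C_i^k 2^{B_i} k!/\beta^k$ holds for every $i$ and every $k$, then $M^{(k)}_g(\bx) \le \big(\sum_i |\alpha_i|\big)\, 2^{\max_i B_i} C^k k!/\beta^k \le C^k 2^B k!/\beta^k$ for every $k$. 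Taking the contrapositive, the bad set of $g$ at scale $\beta$ is contained in the union over $i$ of the bad sets of the $f_{r_i}$ at the same scale $\beta$, which by well-behavedness of $\mathcal{F}$ is covered by $\sum_i n_{i,\beta} \le n \cdot \max_i n_{i,\beta}$ hypercuboids of side at most $\beta$ --- a polynomial number, all computable in polynomial time. Running the same estimate with $\beta = \beta_{\min}$, together with $\beta_{\min} \le \beta_{i,\min}$, yields the required global bound $M^{(k)}_g(\bx) \le C^k 2^B k!/\beta_{\min}^k$ on $X \subseteq \bigcap_i X_{r_i}$.

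I do not expect a real obstacle: morally this is just ``take the union of the existing grids and track how the constants grow,'' in the same spirit as the proof of \cref{lm:derivatives}. The only point requiring genuine care is the polynomial-size bookkeeping --- in particular, that $\sum_i |\alpha_i|$ enters $B$ only additively (through its logarithm), so that $B = \poly(|r|)$ rather than exponential, and that replacing the constituent parameters by their maxima/minima over the $n$ summands keeps everything polynomial; this is precisely what the representation convention fixed at the outset delivers. As in \cref{lm:derivatives}, one may also assume $\beta_{i,\min} \le 1$ for each $i$ without loss of generality, although this is not needed for the linear-combination argument itself.
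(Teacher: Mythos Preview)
Your proposal is correct and follows essentially the same approach as the paper: bound $M^{(k)}_g$ by the weighted sum of the $M^{(k)}_{f_{r_i}}$, take the contrapositive to cover the ``bad set'' by the union of the constituents' hypercuboids, and track that all parameters stay polynomial in $|r|$. The only cosmetic differences are that the paper first treats the scaling $f \mapsto \alpha f$ and the sum separately (absorbing $|\alpha|$ into $B$, then setting $C=\sum_i C_i$ and $B=\lg n+\max_i B_i$ with a separate $k=0$ case), whereas you handle both at once with $C=\max_i C_i$ and the $\sum_i|\alpha_i|$ term folded into $B$; and you spell out the \cref{def:efficiently-rep} checks (domain intersection, derivative oracle), which the paper leaves implicit.
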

\begin{proof}
If a function $g = f_r$ satisfies the well-behavedness condition in $X$ then $\alpha g$, where $\alpha \in \mathbb{Q}$, also satisfies the condition in $X$ because: $M^{(k)}_{\alpha g} = |\alpha| M^{(k)}_{g}$, and we can incorporate the extra $|\alpha|$ factor into the ``$B$'' in \cref{def:well-behaved} by setting $B' = B + \max(0,\lg(|\alpha|))$. Note that since $B = \poly(|r|)$, we have $B' = \poly(|r|, \sz(\alpha))$, which is polynomial in the description of the function (namely, $|r|$ and $\sz(\alpha)$). As a result, the family of all possible scaled versions of functions in $\mathcal{F}$ is also a well-behaved family. It remains to show that taking sums of functions also retains the well-behavedness property.

Now, let $f_i = f_{r_i}$ be well-behaved in $X_i$ and let $f = \sum_i f_i$. Let $X = \bigcap_i X_{r_i}$; as $X_{r_i} \supseteq X$, $f_i$ is well-behaved in $X$. Fix a $\beta > 0$. Let $B_i$, $C_i$, $\beta_{i,min}$ and $n_{i,\beta}$ be the parameters associated with the well-behavedness property of the function $f_i$. 
We claim that $f$ satisfies the well-behavedness property with parameters $B = \lg(n) + \max_i B_i$, $C = \sum_i C_i$, $\beta_{min} = \min_i \beta_{i,min}$ and $n_{\beta} = \sum_i n_{i, \beta}$, where the $n_{\beta}$ hypercuboids associated with $f$ are simply the joint list of all the $n_{i,\beta}$ hypercuboids associated with each $f_i$ for all $i \in [n]$. Note that we have $B = \poly(|r_1|, \dots, |r_n|)$, $C = (|r_1|, \dots, |r_n|)$, $\sz(\beta_{min}) = \poly(|r_1|, \dots, |r_n|)$, and $n_\beta = \poly(|r_1|, \dots, |r_n|, \sz(\beta))$, as desired.

At a point $\bx$, if $M^{(k)}_{f_i}(\bx) \le C_i^k 2^{B_i} \frac{k!}{\beta^k}$ for all $i \in [n]$, then for $k \ge 1$ we have
\begin{align*}
    M^{(k)}_{f}(\bx) = M^{(k)}_{\sum_i f_i}(\bx) = \sum_i M^{(k)}_{f_i}(\bx) \le \sum_i C_i^k 2^{B_i} \frac{k!}{\beta^k} \le \sum_i C_i^k 2^{B} \frac{k!}{\beta^k} &\le (\sum_i C_i )^k 2^{B} \frac{k!}{\beta^k}\\
    &= C^k 2^{B} \frac{k!}{\beta^k}.
\end{align*}
On the other hand, for $k = 0$, we have $M^{(0)}_{f}(\bx) = \sum_i M^{(0)}_{f_i}(\bx) = \sum_i 2^{B_i} \le n 2^{\max_i B_i} = 2^B$.
So, if $ M^{(k)}_{f}(\bx) > C^k 2^{B} \frac{k!}{\beta^k}$ at a point $\bx$, then there must be an $i \in [n]$ such that $M^{(k)}_{f_i}(\bx) > C_i^k 2^{B_i} \frac{k!}{\beta^k}$, then $\bx$ is in one of the $n_{i,\beta}$ hypercuboids associated with $f_i$, and $f$ also has this hypercuboid.
\end{proof}

\begin{lemma}\label{lm:product-comb}
For any well-behaved family $\mathcal{F} = (f_r)_r$, the family of all products
$$\mathcal{F}' = \left\{\prod_{i \in [n]} f_{r_i}: n \in \mathbb{N}, r_1, \dots, r_n \in \{0,1\}^n\right\}$$
is also well-behaved. Here the domain of the function $\prod_{i \in [n]} f_{r_i}$ is taken to be $\bigcap_{i \in [n]} X_{r_i}$.
\end{lemma}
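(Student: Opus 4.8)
The plan is to bound the high-order partial derivatives of a product via the multi-index Leibniz rule, while ensuring that none of the well-behavedness parameters grows more than polynomially in the size of the description $(r_1,\dots,r_n)$ of the product. Note that the naive strategy of iterating the two-factor case $n-1$ times fails: the argument below, applied to a pair of functions, essentially replaces their parameters $C,C'$ by $2(C+C')$, so $n-1$ iterations multiply the parameter by $2^{\Theta(n)}$, which is not polynomial. I would therefore handle all $n$ factors at once.

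Fix $f_i=f_{r_i}$ with well-behavedness parameters $B_i,C_i,\beta_{i,min}$ and, for each $\beta>0$, the associated hypercuboid family of \cref{def:well-behaved}; recall that $C_i\ge 1$ is an integer. For a multi-index $\bs$ with $|\bs|=k$, the generalized Leibniz rule gives
\[
  D^{\bs}\Big(\prod_{i\in[n]} f_i\Big)=\sum_{\bs^{(1)}+\dots+\bs^{(n)}=\bs}\binom{\bs}{\bs^{(1)},\dots,\bs^{(n)}}\prod_{i\in[n]} D^{\bs^{(i)}} f_i ,
\]
so $\bigl|D^{\bs}(\prod_i f_i)(\bx)\bigr|\le\sum_{\bs^{(1)}+\dots+\bs^{(n)}=\bs}\binom{\bs}{\bs^{(1)},\dots,\bs^{(n)}}\prod_i M^{(|\bs^{(i)}|)}_{f_i}(\bx)$. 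Suppose now that at a point $\bx$ and for some $\beta>0$ we have $M^{(\ell)}_{f_i}(\bx)\le C_i^{\ell}2^{B_i}\ell!/\beta^{\ell}$ for all $i$ and all $\ell\ge 0$ (which holds in particular with $\beta$ equal to $\min_i\beta_{i,min}$). Plugging this in and using the elementary facts (i) $\prod_i|\bs^{(i)}|!\le k!$, since $k!/\prod_i|\bs^{(i)}|!$ is a multinomial coefficient and hence at least $1$; (ii) $\prod_i C_i^{|\bs^{(i)}|}\le(\max_j C_j)^{k}$, since $\sum_i|\bs^{(i)}|=k$ and $0\le C_i\le\max_j C_j$; and (iii) the coordinate-wise multinomial identity $\sum_{\bs^{(1)}+\dots+\bs^{(n)}=\bs}\binom{\bs}{\bs^{(1)},\dots,\bs^{(n)}}=n^{|\bs|}$, we obtain
\[
  M^{(k)}_{\prod_i f_i}(\bx)\le\bigl(n\max_j C_j\bigr)^{k}\,2^{\sum_i B_i}\,\frac{k!}{\beta^{k}},
\]
while the $k=0$ case is immediate from $\bigl|\prod_i f_i(\bx)\bigr|=\prod_i M^{(0)}_{f_i}(\bx)\le 2^{\sum_i B_i}$. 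Hence the product obeys the derivative bound with the integer parameters $C'=n\max_j C_j$ and $B'=\sum_i B_i$, both polynomial in the description size. Reading the displayed implication contrapositively, any $\bx$ at which $M^{(k)}_{\prod_i f_i}(\bx)>(C')^{k}2^{B'}k!/\beta^{k}$ for some $k$ must satisfy $M^{(\ell)}_{f_i}(\bx)>C_i^{\ell}2^{B_i}\ell!/\beta^{\ell}$ for some $i$ and $\ell$, so it lies in one of the $\beta$-hypercuboids of $f_i$; I therefore take the hypercuboid family of the product, for each $\beta$, to be the concatenation of those of the $f_i$, so that $n_\beta=\sum_i n_{i,\beta}=\poly$ and $|H-L|\le\beta$ still holds for every box. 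Taking $\beta_{min}:=\min_i\beta_{i,min}$ (whose size is polynomial) in the bound above handles the ``additionally'' clause of \cref{def:well-behaved}. All of these quantities are computable in polynomial time by invoking the corresponding routines for the $f_i$.

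It remains to check that the product family is efficiently representable in the sense of \cref{def:efficiently-rep}. The domain $\bigcap_i X_{r_i}$ is obtained by concatenating the linear systems defining the $X_{r_i}$. For the derivative oracle one must \emph{not} expand the Leibniz sum directly (it has too many terms when both $n$ and $|\bs|$ grow); instead, process the factors one at a time, maintaining for each partial product $P_m=f_1\cdots f_m$ the table of approximate values $D^{\bs'}P_m(\bx)$ over all sub-multi-indices $0\le\bs'\le\bs$, updated by $D^{\bs'}P_m(\bx)=\sum_{\bs''\le\bs'}\binom{\bs'}{\bs''}D^{\bs''}P_{m-1}(\bx)\,D^{\bs'-\bs''}f_m(\bx)$ together with the derivative oracle of $f_m$. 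Since $d$ is constant, the number of sub-multi-indices $\bs'\le\bs$ is at most $(|\bs|+1)^{d}=\poly$, so the table is polynomially large and the whole computation uses polynomially many arithmetic operations across the $n$ steps; propagating the target accuracy backward through this recursion, and using the a priori magnitude bound $2^{B'}(\max_j C_j)^{|\bs|}|\bs|!/(\min_i\beta_{i,min})^{|\bs|}$ on all intermediate quantities to control error amplification, yields a sufficiently precise estimate of $D^{\bs}(\prod_i f_i)(\bx)$ within the required running time.

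The main obstacle is the first point: one must avoid the $2^{\Theta(n)}$ blow-up in the ``$C$'' parameter that a naive binary iteration would produce, and the direct multinomial computation with facts (i)--(iii) is exactly what keeps every parameter polynomial. The remaining bookkeeping --- concatenating hypercuboid families and implementing the efficient derivative oracle --- is routine, although the polynomial size of the derivative table crucially relies on the standing assumption that the dimension $d$ is constant.
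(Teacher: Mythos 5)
Your proof is correct and follows the same route as the paper: you bound $M^{(k)}_{\prod_i f_i}$ via the Leibniz rule and arrive at exactly the same parameters $C' = n\max_j C_j$, $B' = \sum_i B_i$, $\beta_{\min} = \min_i\beta_{i,\min}$, and concatenated hypercuboid families. If anything your version is slightly more careful than the paper's (you retain the multi-index multinomial coefficients, which the paper's ``product rule'' display implicitly elides before overcompensating with the $n^k k!$ bound), and you additionally spell out the efficient-representability oracle and the caution against sequential binary iteration, both of which the paper leaves implicit.
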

\begin{proof}
Note that $X_{r_i} \supseteq X$, so $f_i = f_{r_i}$ is well-behaved in $X$.
Let $B_i$, $C_i$, $\beta_{i,min}$, and $(n_{i,\beta})_{\beta}$ be the parameters associated with the well-behavedness property of the function $f_i$. 
We claim that $f = \prod_{i \in [n]} f_i$ is satisfies the well-behavedness property with parameters $B = \sum_i B_i$, $C = n \max_i C_i$, $\beta_{min} = \min_i \beta_{i,min}$, and $n_{\beta} = \sum_i n_{i, \beta}$, where the $n_{\beta}$ hypercuboids associated with $f$ are the union of the $n_{i,\beta}$ hypercuboids associated with each $f_i$ for all $i \in [n]$.
Note that we have $B = \poly(|r_1|, \dots, |r_n|)$, $C = (|r_1|, \dots, |r_n|)$, $\sz(\beta_{min}) = \poly(|r_1|, \dots, |r_n|)$, and $n_\beta = \poly(|r_1|, \dots, |r_n|, \sz(\beta))$, as desired.

At a point $\bx$, if $M^{(\ell)}_{f_i}(\bx) \le C_i^\ell 2^{B_i} \frac{\ell!}{\beta^\ell}$ for all $i \in [n]$ and $\ell \in \bZ_{\ge 0}$, then using the product rule of differentiation
\begin{align*}
    M^{(k)}_{f}(\bx) 
    &= M^{(k)}_{\prod_i f_i}(\bx) 
    = \sum_{|\bs| = k} \prod_{i \in [n]} M^{(s_i)}_{f_i}(\bx), \qquad \text{ where $\bs \in \bZ_{\ge 0}^n$ is a multi-index,}\\
    &\le \sum_{|\bs| = k} \prod_{i \in [n]} C_i^{s_i} 2^{B_i} \frac{s_i!}{\beta^{s_i}}
    \le (\max_i C_i)^k 2^{\sum_i B_i} \frac{1}{\beta^k} \sum_{|\bs| = k} \bs! \\
    &\le (\max_i C_i)^k 2^B \frac{1}{\beta^k} n^k k! = C^k 2^B \frac{k!}{\beta^k},
\end{align*}
where the last inequality holds because $\sum_{|\bs| = k} \bs! \le \sum_{|\bs| = k} k! \le n^k k!$.
So, if $ M^{(k)}_{f}(\bx) > C^k 2^{B} \frac{k!}{\beta^k}$ at a point $\bx$, then there must be an $i \in [n]$ such that $M^{(k)}_{f_i}(\bx) > C_i^k 2^{B_i} \frac{k!}{\beta^k}$, then $\bx$ is in one of the $n_{i,\beta}$ hypercuboids associated with $f_i$, and $f$ also has this hypercuboid.
\end{proof}

\begin{lemma}\label{lm:polynomial-well-behaved}
The family of all polynomials
$$\mathcal{F} = \left\{p: [-1,1]^d \to \mathbb{R}, \bx \mapsto \sum_{|\bs| \leq n} \alpha_{\bs} \bx^{\bs} : d \in \mathbb{N}, n \in \mathbb{N}, \alpha_{\bs} \in \mathbb{Q}\right\}$$
is well-behaved.\footnote{Here we slightly extend the definition of well-behavedness to allow functions on spaces of variable dimension~$d$.}
\end{lemma}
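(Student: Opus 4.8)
The plan is to exploit the two features that make polynomials easy: a polynomial of degree $n$ has every partial derivative of order $>n$ identically zero, and all of its lower-order partial derivatives are globally bounded on the compact box $[-1,1]^d$. Consequently the ``exceptional region'' of \cref{def:well-behaved} can be taken to be \emph{empty} for every small $\beta$ and equal to the whole domain (a single cell) for every large $\beta$, so the subdivision is trivial and the well-behavedness conditions reduce to a short calculation.

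Concretely, write $p=\sum_{|\bs|\le n}\alpha_{\bs}\bx^{\bs}$ and put $S=\sum_{|\bs|\le n}|\alpha_{\bs}|$. First I would bound the derivatives. For a multi-index $\bt$ with $|\bt|=k$ we have $D^{\bt}(\bx^{\bs})=\tfrac{\bs!}{(\bs-\bt)!}\bx^{\bs-\bt}$ when $\bt\le\bs$ componentwise and $0$ otherwise, and $\tfrac{\bs!}{(\bs-\bt)!}=\prod_j s_j(s_j-1)\cdots(s_j-t_j+1)\le\prod_j n^{t_j}=n^{k}$, using $s_j\le|\bs|\le n$ whenever $\alpha_{\bs}\ne 0$. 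Since $|\bx^{\bs-\bt}|\le 1$ on $[-1,1]^d$, this yields $M^{(k)}_p(\bx)\le n^{k}S$ for all $k\le n$ and $M^{(k)}_p(\bx)=0$ for $k>n$, uniformly over $\bx\in[-1,1]^d$.

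Then I would set $C:=\max\{2n,1\}$, $B:=\lceil\log\max\{1,S\}\rceil$, and $\beta_{min}:=1$; these are polynomial in $|r|$ (here we use that the degree $n$ and the dimension $d$ are polynomially bounded by $|r|$, which is forced by $\mathcal{F}$ being efficiently representable and holds for the natural dense encoding of $p$). The only inequality to check is that, for every $\beta\le 2$ and every $k$,
\[
C^{k}2^{B}k!/\beta^{k}\ \ge\ (2n)^{k}2^{B}k!/2^{k}\ =\ n^{k}2^{B}k!\ \ge\ n^{k}S\ \ge\ M^{(k)}_p(\bx)\qquad\text{on }[-1,1]^d
\]
(the case $n=0$, i.e.\ $p$ constant, and the case $k=0$ both being immediate from $2^{B}\ge S\ge|p(\bx)|$). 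Hence the exceptional set is empty and $n_\beta=0$. For $\beta>2$ the whole box $[-1,1]^d$ is a single axis-aligned hypercuboid each of whose sides has length $2<\beta$, so $n_\beta=1$ with that hypercuboid covers the (possibly nonempty) exceptional set. The ``additionally'' clause of \cref{def:well-behaved} is just the case $\beta=\beta_{min}=1$ of the displayed inequality. Finally, given $r$ and $\beta$ one computes $n_\beta$ and the hypercuboid(s) by evaluating $S$ exactly (a sum of polynomially many rationals) and comparing $\beta$ with $2$, which is polynomial-time; efficient representability of $\mathcal{F}$ (domain $X_r=[-1,1]^d$ as linear inequalities, exact/approximate evaluation of $D^{\bs}p$) is immediate from the explicit derivative formula above.

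The computations are routine; the single point worth noticing is that $C$ must be allowed to grow with the degree $n$ rather than being an absolute constant, so that the factor $n^{k}$ in the derivative bound is absorbed into the $C^{k}$ of the well-behavedness threshold. Once this is arranged the ``grid'' degenerates completely, which is consistent with the family still being intractable in high dimension: in \cref{thm:well-behaved-high-dim} the hardness stems from the polynomial-system solver applied inside each cell, not from the number of cells.
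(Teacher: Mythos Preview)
Your proposal is correct and takes essentially the same approach as the paper: both observe that all partial derivatives of a polynomial are uniformly bounded on $[-1,1]^d$, so the exceptional set is either empty or trivially covered by the whole box. The paper's parameterization differs only cosmetically---it takes $C=2$, $\beta_{min}=2$, and absorbs the full derivative bound into $B=\lceil\log M\rceil$ where $M$ majorizes every $M_p^{(k)}$---which incidentally shows that your closing remark is not quite right: $C$ need not grow with the degree $n$, since the $n^{k}$ factor can equally well be pushed into $B$.
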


\begin{proof}
Given the list of all coefficients $\alpha_{\bs}$, $|\bs| \leq n$, of a polynomial $p$, we can in polynomial time construct a list of all the coefficients of all the derivatives $D^{\bs} p$ for all $\bs$. This is possible because the derivatives will be zero for any $|\bs| > n$. In particular, we can in polynomial time compute an upper bound $M$ on the value that any of these polynomials can take over $[-1,1]^d$. This means that $\sz(M)$ is polynomial in the size of the list of all coefficients of $p$.

Now, we claim that the family satisfies the well-behavedness property with $B = \lceil \log M \rceil$, $C = 2$, $\beta_{min} = 2$, and $n_\beta = 1$. Indeed, if $\beta \geq 2$, then the hypercuboid $[-1,1]^d$, which has width 2, covers the whole domain and there is nothing to show. Finally, for all $k \in \mathbb{Z}_{\geq 0}$ and all $\bx \in [-1,1]^d$ we have
$$M_p^{(k)}(\bx) \leq M \leq 2^B = C^k 2^B \frac{1}{2^k} \leq C^k 2^B \frac{k!}{\beta_{min}^k}.$$
Thus, $p$ satisfies the well-behavedness property.
\end{proof}

\section{Generalized Potentials: Existence and \tfnp/ Formulation}
\label{app:generalized-potentials}

For any $\delta > 0$ and $\bx \in \mathbb{R}^d$, let $B_\delta(\bx) = \{\by \in \mathbb{R}^d: \|\bx - \by\|_2 \leq \delta\}$. We let $\partial B_\delta(\bx) = \{\by \in \mathbb{R}^d: \|\bx - \by\|_2 = \delta\}$. Let $\langle \bx, \by \rangle$ denote the inner product.

\begin{definition}
A generalized potential in $d$-dimensional space is a continuously differentiable function $f: B_R(0) \setminus \cup_i B_r(\ba_i) \to \mathbb{R}$, where:
\begin{itemize}
    \item the points $\ba_1, \dots, \ba_n \in \mathbb{R}^d$ are distinct,
    \item $r > 0$ satisfies $B_r(\ba_i) \cap B_r(\ba_j) = \emptyset$ for any distinct $i$ and $j$,
    \item $R > 0$ satisfies $\cup_i B_{2r}(\ba_i) \subset B_R(0)$.
\end{itemize}
\end{definition}

\begin{definition}
Let $\delta > 0$. We say that a generalized potential $f: B_R(0) \setminus \cup_i B_r(\ba_i) \to \mathbb{R}$ satisfies the boundary conditions with signs $s_0, s_1, \dots, s_n \in \{-1,+1\}$ and with parameter $\delta$ if the following conditions hold:
\begin{itemize}
    \item For all $\bx \in \partial B_R(0)$, we have $s_0 \cdot \langle \nabla f (\bx), \bx \rangle \geq \delta$.
    \item For all $i \in [n]$ and all $\bx \in \partial B_r(\ba_i)$, we have $s_i \cdot \langle \nabla f (\bx), \bx - \ba_i \rangle \geq \delta$.
\end{itemize}
\end{definition}

\begin{theorem}\label{thm:existence}
Let $f$ be a generalized potential in $d$-dimensional space with $n \geq 2$ centers that satisfies the boundary conditions with signs $s_0, s_1, \dots, s_n$. Then $f$ admits an equilibrium point $\bx$, i.e., $\nabla f(\bx) = 0$, in the following cases:
\begin{itemize}
    \item The dimension $d$ is even.
    \item The dimension $d$ is odd and $s_0 - \sum_{i=1}^n s_i \neq 0$.
\end{itemize}
\end{theorem}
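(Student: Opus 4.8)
The plan is to read the equilibrium condition $\nabla f(\bx)=0$ as the existence of a zero of the continuous vector field $v=\nabla f$ on the compact manifold‑with‑boundary $M=\overline{B_R(0)}\setminus\bigcup_{i\in[n]}B_r(\ba_i)$, and to derive it from the Poincaré–Hopf theorem for manifolds with boundary. First I would record the topology of $M$: its boundary is a disjoint union of $n+1$ spheres $S^{d-1}$, namely the outer sphere $\partial B_R(0)$ together with the inner spheres $\partial B_r(\ba_i)$, and $M$ deformation retracts onto a wedge of $n$ copies of $S^{d-1}$, so that $\chi(M)=1+n\big((-1)^{d-1}\big)$ while $\chi(S^{d-1})=1+(-1)^{d-1}$ (which is $0$ for $d$ even and $2$ for $d$ odd).

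The second step is to read off from the boundary conditions along which boundary components $v$ points into $M$ and along which it points out; the strictness comes from the $\ge\delta>0$ in the hypothesis, which also rules out $v$ being tangent to any component. The one geometric point to get right is that on the outer sphere the outward normal of $M$ at $\bx$ is $\bx$, whereas on the inner sphere $\partial B_r(\ba_i)$ the vector $\bx-\ba_i$ is the \emph{inward} normal of $M$ (moving away from the excised center $\ba_i$ enters $M$). Writing $\partial_{\mathrm{in}}M$ for the union of boundary components along which $v$ points strictly into $M$, it follows that the outer sphere lies in $\partial_{\mathrm{in}}M$ iff $s_0=-1$, and the inner sphere $i$ lies in $\partial_{\mathrm{in}}M$ iff $s_i=+1$. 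The third, purely combinatorial, step is the Euler‑characteristic bookkeeping: since each boundary sphere contributes $\chi(S^{d-1})$ to $\chi(\partial_{\mathrm{in}}M)$ and $\chi(M,\partial_{\mathrm{in}}M)=\chi(M)-\chi(\partial_{\mathrm{in}}M)$, a short case split on the value of $s_0$, parameterized by $\#\{i:s_i=+1\}$, yields, for $d$ odd, $\chi(M,\partial_{\mathrm{in}}M)=s_0-\sum_{i=1}^n s_i$, and for $d$ even, $\chi(\partial_{\mathrm{in}}M)=0$ regardless of the signs, so $\chi(M,\partial_{\mathrm{in}}M)=\chi(M)=1-n$.

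Finally I would invoke the Poincaré–Hopf theorem for compact manifolds with boundary: a continuous vector field on $M$ with no zeros on $\partial M$ that points strictly outward on $\partial M\setminus\partial_{\mathrm{in}}M$ and strictly inward on $\partial_{\mathrm{in}}M$ has index sum equal to $\chi(M,\partial_{\mathrm{in}}M)$, so if this number is nonzero the field must vanish somewhere in $M$. (For continuity, one argues by contradiction: if $v=\nabla f$ were nowhere zero on the compact set $M$ it would be bounded away from $0$, so a smooth approximation with the same strict boundary behavior would also be nowhere zero, forcing $\chi(M,\partial_{\mathrm{in}}M)=0$.) In the even‑dimensional case $\chi(M,\partial_{\mathrm{in}}M)=1-n\neq0$ because $n\ge2$; in the odd‑dimensional case it equals $s_0-\sum_i s_i\neq0$ by hypothesis. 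Either way $v=\nabla f$ has a zero in $M$, i.e. an equilibrium point. This also explains the shape of the statement: as in the overview, $d$ even is morally the doubling/hairy‑ball argument, the double of $M$ being a closed manifold of Euler characteristic $2\chi(M)\neq0$, whereas for $d$ odd the double has Euler characteristic $0$, which is precisely why the extra sign condition is needed and why one must pass to the relative version.

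The main obstacle is concentrated in the second step together with the choice of the right ambient theorem: fixing the inward/outward normal conventions on the inner spheres without an off‑by‑a‑sign error (a flipped sign here produces a false claim), and invoking the correct relative form of Poincaré–Hopf for manifolds with boundary and mixed boundary behavior, applicable to a merely $C^0$ gradient field via approximation. Once the identity $\chi(M,\partial_{\mathrm{in}}M)=s_0-\sum_i s_i$ (odd $d$) and $\chi(M,\partial_{\mathrm{in}}M)=1-n$ (even $d$) is in hand, the rest is immediate.
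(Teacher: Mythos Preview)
Your proposal is correct and takes a genuinely different route from the paper. The paper gives only a proof sketch based on a Sperner-type argument: triangulate the domain, color vertices by the direction of $\nabla f$, observe that the boundary coloring pattern forces a fully-chromatic simplex, and pass to the limit. This is deliberately aligned with the paper's computational goals, since the Sperner/path-following structure feeds directly into the \ppad/ membership claim via End-of-Line with multiple sources and sinks.

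Your approach via the relative Poincar\'e--Hopf theorem is cleaner as a pure existence argument: the Euler-characteristic bookkeeping $\chi(M,\partial_{\mathrm{in}}M)=s_0-\sum_i s_i$ for odd $d$ and $\chi(M,\partial_{\mathrm{in}}M)=1-n$ for even $d$ drops out of a short computation, and the inward/outward normal conventions and the approximation step for a merely $C^0$ gradient are handled correctly. The trade-off is that you invoke heavier off-the-shelf topology (the mixed-boundary Poincar\'e--Hopf theorem), whereas the paper's Sperner route is more elementary and, crucially for its purposes, is what actually yields the \tfnp/ upper bound. As a stand-alone proof of \cref{thm:existence}, yours is arguably more complete than the paper's sketch, which leaves the higher-dimensional and odd-dimensional cases largely to the reader.
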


\begin{proof}[Proof sketch]
We use a Sperner-type argument.
Consider first the case where $d=2$. Pick some small $\eps > 0$ and an $\eps$-fine triangulation of the domain $B_R(0) \setminus \cup_i B_r(\ba_i)$. Color each vertex of the triangulation with one of three colors $1, 2,$ or $3$, depending on the direction in which $\nabla f(\bx)$ points (where the set of directions is divided equally and contiguously between the three colors in some arbitrary way). Next, examining the colors on the boundaries of the domain, we can make the following observations:
\begin{itemize}
    \item If we travel in a clock-wise manner on the boundary of the domain around a given center, then we will see the colors appear in a particular order,\footnote{To be more precise, when the color switches from, say, $1$ to $2$, it might momentarily switch back to color $1$, before ``stabilizing'' to color $2$. This is a technical detail that does not impact the general argument.} say color $1$, then $2$, then $3$.
    \item Importantly, the order in which the colors are seen when traveling clock-wise is the same for all centers, no matter their sign.
    \item Furthermore, if we travel clock-wise on the outer boundary of the domain, we again see the colors in that same order.
\end{itemize}
Now, using standard Sperner-type arguments (e.g., the proof of Sperner's lemma that uses a path-following argument), one can argue that the triangulation must contain a triangle with all three colors. Then, using standard continuity and limit arguments by taking $\eps \to 0$, one can show that an exact equilibrium point must exist.

The same argument can be generalized to also work for higher even dimensions.
When the dimension is odd, the boundary conditions are no longer identical around centers (namely, they depend on the sign), and an additional condition is required to ensure the existence of a trichromatic triangle. This condition boils down to the aforementioned condition on the signs.
\end{proof}

\begin{corollary}
An electrostatic potential in $d$-dimensional space with $n \geq 2$ charges with magnitudes $q_1, \dots, q_n \in \mathbb{R}_{\neq 0}$ admits an equilibrium point in the following cases:
\begin{itemize}
    \item The dimension $d$ is even and $\sum_i q_i \neq 0$.
    \item The dimension $d$ is odd and either
    \begin{enumerate}
        \item $\sum_i q_i > 0$ and $n_+ \neq n_- + 1$, or
        \item $\sum_i q_i < 0$ and $n_- \neq n_+ + 1$,
    \end{enumerate}
    where $n_+$ and $n_-$ denote the number of positive and negative charges, respectively.
\end{itemize}
\end{corollary}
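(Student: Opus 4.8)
The plan is to realize the electrostatic potential $f$ from \eqref{eq:potential} as a generalized potential that satisfies the boundary conditions of \cref{thm:existence} for a suitable choice of signs, and then to read off the stated cases directly from the sign/parity condition of that theorem. Recall the normalizations $\min_i|q_i| = 1$ and $\|\ba_i - \ba_{i'}\|_2 \ge 1$ for $i \ne i'$. First I would fix a small radius $r \in (0,1/2)$, so that the balls $B_r(\ba_i)$ are pairwise disjoint and $\ba_i \notin B_r(\ba_{i'})$ for $i \ne i'$, and then pick $R > 2r + \max_i \|\ba_i\|_2$, so that $\cup_i B_{2r}(\ba_i) \subset B_R(0)$. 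Since $f$ is smooth on $B_R(0) \setminus \cup_i B_r(\ba_i)$, it is a generalized potential with the $n \ge 2$ centers $\ba_1, \dots, \ba_n$.

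The next step is to verify the boundary conditions by a domination argument in the spirit of \cref{lm:box-around-charges}. On $\partial B_r(\ba_i)$ the $i$-th term of $\nabla f$ contributes $\langle -q_i(\bx - \ba_i)/\|\bx - \ba_i\|^3,\ \bx - \ba_i\rangle = -q_i/r$, of magnitude $|q_i|/r \ge 1/r$, while each other term contributes in absolute value at most $|q_{i'}|\,r/\|\bx - \ba_{i'}\|_2^2 \le 4 q_{max} r$ (using $\|\bx - \ba_{i'}\|_2 \ge 1 - r \ge 1/2$). Hence for $r$ small enough, $-\mathrm{sign}(q_i)\,\langle \nabla f(\bx), \bx - \ba_i\rangle \ge 1/r - 4 n q_{max} r \ge 1$ on $\partial B_r(\ba_i)$, so the inner condition holds with $s_i = -\mathrm{sign}(q_i)$. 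Similarly, for $\bx \in \partial B_R(0)$ one has $\langle \nabla f(\bx), \bx\rangle = \sum_i -q_i\,\langle \bx - \ba_i, \bx\rangle/\|\bx - \ba_i\|_2^3$, and since $\langle \bx - \ba_i, \bx\rangle/\|\bx - \ba_i\|_2^3 = \tfrac1R(1 + O(1/R))$ uniformly over $\partial B_R(0)$ (as the $\|\ba_i\|_2$ are fixed while $R$ grows), this equals $-\tfrac{\sum_i q_i}{R}(1 + O(1/R))$. Provided $\sum_i q_i \ne 0$, for $R$ large enough this has the sign of $-\mathrm{sign}(\sum_i q_i)$ and magnitude at least $|\sum_i q_i|/(2R)$, so the outer condition holds with $s_0 = -\mathrm{sign}(\sum_i q_i)$. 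Letting $\delta > 0$ be the minimum of the two positive lower bounds obtained, $f$ satisfies the boundary conditions of \cref{thm:existence} with signs $s_0, s_1, \dots, s_n$ as above.

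Now I would invoke \cref{thm:existence}. Any point $\bx$ it produces lies in $B_R(0) \setminus \cup_i B_r(\ba_i)$ with $\nabla f(\bx) = 0$, hence $\bx \ne \ba_i$ for all $i$, so it is a genuine equilibrium point. It remains to match the cases. For $d$ even, \cref{thm:existence} applies as soon as the boundary conditions hold, i.e.\ as soon as $\sum_i q_i \ne 0$ — exactly the even case of the corollary. For $d$ odd, \cref{thm:existence} additionally requires $s_0 - \sum_{i=1}^n s_i \ne 0$; here $\sum_i s_i = -\sum_i \mathrm{sign}(q_i) = -(n_+ - n_-)$, so $s_0 - \sum_i s_i = -\mathrm{sign}(\sum_i q_i) + (n_+ - n_-)$. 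When $\sum_i q_i > 0$ this is $(n_+ - n_-) - 1$, nonzero iff $n_+ \ne n_- + 1$; when $\sum_i q_i < 0$ it is $(n_+ - n_-) + 1$, nonzero iff $n_- \ne n_+ + 1$ — precisely the two subcases stated.

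I do not expect a serious obstacle: the two domination estimates are routine, and the near-singularity one is essentially \cref{lm:box-around-charges} already. The only place that demands care is the sign/parity bookkeeping in the odd-dimensional case, where one must confirm that the signs $s_0$ and $s_i$ extracted from the field feed into $s_0 - \sum_i s_i$ exactly as \cref{thm:existence} requires; the minor technical point to double-check is the uniformity of the expansion $\langle \bx - \ba_i, \bx\rangle/\|\bx - \ba_i\|_2^3 = \tfrac1R(1 + O(1/R))$ over $\partial B_R(0)$, which follows since each $\|\ba_i\|_2 \le R - 2r$ stays bounded as $R \to \infty$.
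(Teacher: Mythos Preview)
Your proposal is correct and follows essentially the same approach as the paper: verify that the electrostatic potential is a generalized potential satisfying the boundary conditions of \cref{thm:existence} with $s_0 = -\mathrm{sign}(\sum_i q_i)$ and $s_i = -\mathrm{sign}(q_i)$, then read off the cases from the sign/parity condition. You supply more quantitative detail in the domination estimates than the paper does (it simply asserts that sufficiently large $R$ and sufficiently small $r$ exist), but the structure and the sign bookkeeping are identical.
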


\begin{proof}
In order to apply \cref{thm:existence}, we have to show that the electrostatic potential satisfies the conditions of the theorem. First of all, if $\sum_i q_i > 0$, there exists a sufficiently large $R > 0$ such that the boundary condition is satisfied with $s_0 = -1$. Similarly, if $\sum_i q_i < 0$, the boundary condition is satisfied with $s_0 = +1$. Next, there exists a sufficiently small $r > 0$ such that the boundary conditions are satisfied around each singularity, where $s_i$ will depend on the sign of the singularity, namely $s_i = - \textup{sgn}(q_i)$. The corollary then follows.
\end{proof}

\paragraph{\bf Computational \tfnp/ Formulation.}
We can define a computational problem where we are given $R, r, \delta$, the points $\ba_1, \dots, \ba_n \in \mathbb{R}^d$ and the signs $s_0, s_1, \dots, s_n \in \{-1,+1\}$, as well as a circuit computing a continuously differentiable potential $f: B_R(0) \setminus \cup_i B_r(\ba_i) \to \mathbb{R}$ satisfying\footnote{One can either consider a promise problem, where there is a promise that these conditions are satisfied, or introduce so-called violation solutions, that can be returned when a violation is detected.} the boundary conditions with these signs and with parameter $\delta$. Then, given an $\eps > 0$, the goal is to output $\bx \in B_R(0) \setminus \cup_i B_r(\ba_i)$ with $\|\nabla f(\bx)\| \leq \eps$.

For this problem, when the conditions for existence of \cref{thm:existence} are satisfied, one can show membership in \ppad/. This follows from our proof sketch of \cref{thm:existence} which uses a Sperner-type argument together with the results about the End-of-Line problem with multiple sources and sinks in~\cite{GoldbergH21}.

\bibliographystyle{alpha}
\bibliography{ref}

@Book{Maxwell1873,
  author    = {Maxwell, James Clerk},
  publisher = {Macmillan and Co.},
  title     = {A Treatise on Electricity and Magnetism, Vol. I},
  year      = {1873}
}

@book{morse2014critical,
  title={Critical point theory in global analysis and differential topology: an introduction},
  author={Morse, Marston and Cairns, Stewart S},
  year={1969},
  publisher={Academic Press}
}

@book{ConnGT00,
  author       = {Andrew R. Conn and Nicholas I. M. Gould and Philippe L. Toint},
  title        = {Trust Region Methods},
  series       = {{MOS-SIAM} Series on Optimization},
  publisher    = {{SIAM}},
  year         = {2000},
  url          = {https://doi.org/10.1137/1.9780898719857},
  doi          = {10.1137/1.9780898719857},
  isbn         = {978-0-89871-460-9}
}

@book{frenkel2023understanding,
  title={Understanding molecular simulation: from algorithms to applications},
  author={Frenkel, Daan and Smit, Berend},
  year={2023},
  publisher={Elsevier}
}

@article{mehta2011finding,
  title={Finding all the stationary points of a potential-energy landscape via numerical polynomial-homotopy-continuation method},
  author={Mehta, Dhagash},
  journal={Physical Review E—Statistical, Nonlinear, and Soft Matter Physics},
  volume={84},
  number={2},
  pages={025702},
  year={2011},
  publisher={APS}
}

@Article{FGHS22,
  author  = {Fearnley, John and Goldberg, Paul and Hollender, Alexandros and Savani, Rahul},
  journal = {Journal of the ACM},
  title   = {The Complexity of Gradient Descent: {CLS} = {PPAD} $\cap$ {PLS}},
  year    = {2022},
  number  = {1},
  pages   = {7:1--7:74},
  volume  = {70},
  doi     = {10.1145/3568163},
}

@article{gabrielov2007mystery,
  title={Mystery of point charges},
  author={Gabrielov, Andrei and Novikov, Dmitry and Shapiro, Boris},
  journal={Proceedings of the London Mathematical Society},
  volume={95},
  number={2},
  pages={443--472},
  year={2007},
  publisher={Wiley Online Library}
}

@article{Killian2009,
    author = {Killian, Kenneth},
    title = {A remark on {M}axwell's conjecture for planar charges},
    journal = {Complex Variables and Elliptic Equations},
    year = {2009},
    month = {December},
    volume = {54},
    number = {12},
    pages = {1073--1078},
    publisher = {Taylor and Francis}
}

@article{zolotov2023upper,
  title={Upper bounds for the number of isolated critical points via the {Thom--Milnor} theorem},
  author={Zolotov, Vladimir},
  journal={Analysis and Mathematical Physics},
  volume={13},
  number={5},
  pages={81},
  year={2023},
  publisher={Springer}
}

@article{lee2022nine,
  title={Nine equilibrium points of four point charges on the plane},
  author={Lee, Tsung-Lin and Tsai, Ya-Lun},
  journal={Applied Mathematics Letters},
  volume={132},
  pages={108207},
  year={2022},
  publisher={Elsevier}
}

@article{giorgadze2021equilibrium,
  title={On equilibrium points of three point charges},
  author={Giorgadze, G and Khimshiashvili, G},
  journal={Bulletin Georgian National Acad. Sci},
  volume={15},
  number={3},
  pages={7--13},
  year={2021}
}

@inproceedings{uteshev2013stationary,
  title={Stationary points for the family of Fermat--Torricelli--Coulomb-like potential functions},
  author={Uteshev, Alexei Yu and Yashina, Marina V},
  booktitle={International Workshop on Computer Algebra in Scientific Computing},
  pages={412--426},
  year={2013},
  organization={Springer}
}

@incollection{uteshev2016maxwell,
  title={On Maxwell's Conjecture for Coulomb Potential Generated by Point Charges},
  author={Uteshev, Alexei Yu and Yashina, Marina V},
  booktitle={Transactions on Computational Science},
  pages={68--80},
  year={2016},
  publisher={Springer}
}

@inproceedings{DaskalakisP2011-CLS,
  title={Continuous local search},
  author={Daskalakis, Constantinos and Papadimitriou, Christos},
  booktitle={Proceedings of the 22nd ACM-SIAM Symposium on Discrete Algorithms (SODA)},
  pages={790--804},
  year={2011},
  doi={10.1137/1.9781611973082.62}
}

@article{LLXJT23,
  author       = {Ziming Liu and Di Luo and Yilun Xu and Tommi S. Jaakkola and Max Tegmark},
  title        = {GenPhys: From Physical Processes to Generative Models},
  journal      = {arXiv},
  volume       = {abs/2304.02637},
  year         = {2023},
  doi          = {10.48550/ARXIV.2304.02637},
}

@inproceedings{XuLTTTJ23,
  author       = {Yilun Xu and Ziming Liu and Yonglong Tian and Shangyuan Tong and Max Tegmark and Tommi S. Jaakkola},
  title        = {{PFGM++:} Unlocking the Potential of Physics-Inspired Generative Models},
  booktitle    = {International Conference on Machine Learning, {ICML}},
  series       = {Proceedings of Machine Learning Research},
  volume       = {202},
  pages        = {38566--38591},
  publisher    = {{PMLR}},
  year         = {2023},
  url          = {https://proceedings.mlr.press/v202/xu23m.html}
}

@inproceedings{XuLTJ22,
  author       = {Yilun Xu and Ziming Liu and Max Tegmark and Tommi S. Jaakkola},
  title        = {Poisson Flow Generative Models},
  booktitle    = {Advances in Neural Information Processing Systems 35: Annual Conference
                  on Neural Information Processing Systems 2022, NeurIPS 2022},
  year         = {2022},
  url          = {http://papers.nips.cc/paper\_files/paper/2022/hash/6ad68a54eaa8f9bf6ac698b02ec05048-Abstract-Conference.html}
}

@Article{AhmadiZ22-unconstrained,
  author  = {Ahmadi, Amir Ali and Zhang, Jeffrey},
  journal = {Advances in Mathematics},
  title   = {Complexity aspects of local minima and related notions},
  year    = {2022},
  pages   = {108119},
  volume  = {397},
  comment = {articleno = 108119},
  doi     = {10.1016/j.aim.2021.108119},
}

@unpublished{KalavasisPSZ25-2D-minmax,
  author = {Kalavasis, Alkis and Patsilinakos, Panagiotis and Skoulakis, Stratis and Zampetakis, Manolis},
  title  = {{The Complexity of 2D Min-max Optimization}},
  year   = {2025},
  note   = {Manuscript}
}

@article{GoldbergH21,
  author       = {Paul W. Goldberg and Alexandros Hollender},
  title        = {The Hairy Ball problem is PPAD-complete},
  journal      = {J. Comput. Syst. Sci.},
  volume       = {122},
  pages        = {34--62},
  year         = {2021},
  doi          = {10.1016/J.JCSS.2021.05.004}
}

@article{GrigorevV88-polynomial-ineq,
title = {Solving systems of polynomial inequalities in subexponential time},
journal = {Journal of Symbolic Computation},
volume = {5},
number = {1},
pages = {37--64},
year = {1988},
doi = {10.1016/S0747-7171(88)80005-1},
author = {Grigor'ev, D. Yu. and Vorobjov, N.N.},
}

@article{Renegar92,
  author       = {James Renegar},
  title        = {On the Computational Complexity and Geometry of the First-Order Theory of the Reals, Part {I:} Introduction. Preliminaries. The Geometry of Semi-Algebraic Sets. The Decision Problem for the Existential Theory of the Reals},
  journal      = {J. Symb. Comput.},
  volume       = {13},
  number       = {3},
  pages        = {255--300},
  year         = {1992},
  url          = {https://doi.org/10.1016/S0747-7171(10)80003-3},
  doi          = {10.1016/S0747-7171(10)80003-3}
}

@InProceedings{BitanskyPR15-Nash-crypto,
  author    = {Bitansky, Nir and Paneth, Omer and Rosen, Alon},
  booktitle = {Proceedings of the 56th Symposium on Foundations of Computer Science (FOCS)},
  title     = {On the Cryptographic Hardness of Finding a {Nash} Equilibrium},
  year      = {2015},
  pages     = {1480--1498},
  doi       = {10.1109/focs.2015.94},
}

@InProceedings{ChoudhuriHKPRR19-Fiat-Shamir,
  author    = {Choudhuri, Arka Rai and Hub{\'{a}}{\v{c}}ek, Pavel and Kamath, Chethan and Pietrzak, Krzysztof and Rosen, Alon and Rothblum, Guy N.},
  booktitle = {Proceedings of the 51st ACM Symposium on Theory of Computing (STOC)},
  title     = {Finding a {Nash} equilibrium is no easier than breaking {Fiat}-{Shamir}},
  year      = {2019},
  pages     = {1103--1114},
  doi       = {10.1145/3313276.3316400},
}

@inproceedings{JawaleKKZ21-PPAD-LWE,
author = {Jawale, Ruta and Kalai, Yael Tauman and Khurana, Dakshita and Zhang, Rachel},
title = {{SNARG}s for Bounded Depth Computations and {PPAD} Hardness from Sub-Exponential {LWE}},
year = {2021},
doi = {10.1145/3406325.3451055},
booktitle = {Proceedings of the 53rd ACM Symposium on Theory of Computing (STOC)},
pages = {708--721}
}

\end{document}